\newcommand{\emphbf}[1]{{\bfseries #1}}
\newtheorem{theorem}{Theorem}[section]
\newtheorem{lemma}[theorem]{Lemma}
\newtheorem{meta-theorem}[theorem]{Meta-Theorem}
\newtheorem{corollary}[theorem]{Corollary}
\newtheorem{definition}[theorem]{Definition}
\definecolor{darkgreen}{rgb}{0,0.5,0}
\newcommand{\eps}{\varepsilon}
\newcommand{\ceil}[1]{\lceil #1 \rceil}
\newcommand{\acomment}[1]{{\color{red}\ttfamily A: {#1}}}
\newcommand{\important}[1]{\textbf{#1}}
\newcommand{\exclude}[1]{}
\newcommand{\FullOrShort}{full}
  \newcommand{\fullOnly}[1]{#1}
	\newcommand{\tempfullOnly}[1]{#1}
  \newcommand{\shortOnly}[1]{}
    \newcommand{\fullOnly}[1]{}
		\newcommand{\tempfullOnly}[1]{}
		\newcommand{\shortOnly}[1]{#1}
    \newcommand{\IncludePictures}[1]{}
\begin{document}

\date{}

\title{Synchronization Strings: Explicit Constructions, Local Decoding, and Applications\footnote{Supported in part by the National Science Foundation through grants CCF-1527110 and CCF-1618280.\shortOnly{ An extended version of this paper can be found at~\cite{haeupler2017synchronization3:ARXIV}.}}.}

\author{Bernhard Haeupler\\Carnegie Mellon University\\ \texttt{haeupler@cs.cmu.edu} \and
Amirbehshad Shahrasbi\\Carnegie Mellon University\\ \texttt{shahrasbi@cs.cmu.edu}}

\maketitle
	
\begin{abstract}

This paper gives new results for \emphbf{synchronization strings}, a powerful combinatorial object that allows to efficiently deal with insertions and deletions in various communication problems:
\begin{itemize}
				\item We give a \emphbf{deterministic, linear time synchronization string construction}, improving over an $O(n^5)$ time randomized construction. Independently of this work, a deterministic $O(n \log^2 \log n)$ time construction was just put on arXiv by Cheng, Li, and Wu.
				\item We give a \emphbf{deterministic construction of an infinite synchronization string} which outputs the first $n$ symbols in $O(n)$ time. Previously it was not known whether such a string was computable.
				\item Both synchronization string constructions are \emphbf{highly explicit}, i.e., the $i^{th}$ symbol  can be deterministically computed in $O(\log i)$ time. 
				\item This paper also introduces a generalized notion we call \emphbf{long-distance synchronization strings}. Such strings \emphbf{allow for local and very fast decoding}. In particular only $O(\log^3 n)$ time and access to logarithmically many symbols is required to decode any index. 
		\end{itemize}

\medskip
\noindent The paper also provides several applications for these improved synchronization strings:
		\begin{itemize}
			\item For any $\delta < 1$ and $\eps > 0$ we provide an \emphbf{insdel error correcting block code} with rate $1 - \delta - \eps$ which can correct any $O(\delta)$ fraction of insertion and deletion errors in $O(n \log^3 n)$ time. This \emphbf{near linear computational efficiency} is surprising given that we do not even know how to compute the (edit) distance between the decoding input and output in sub-quadratic time.
			\item We show that local decodability implies that error correcting codes constructed with long-distance synchronization strings can not only efficiently recover from $\delta$ fraction of insdel errors but, similar to [Schulman, Zuckerman; TransInf'99], also from any $O(\delta / \log n)$ fraction of \emphbf{block transpositions and block replications}. These block corruptions allow arbitrarily long substrings to be swapped or replicated anywhere.
			\item We show that highly explicitness and local decoding allow for \emphbf{infinite channel simulations with   exponentially smaller memory and decoding time requirements}. These simulations can then be used to give the first \emphbf{near linear time interactive coding scheme for insdel errors}, similar to the result of [Brakerski, Naor; SODA'13] for Hamming errors. 
			\end{itemize}

\end{abstract}
	
\setcounter{page}{0}
\thispagestyle{empty}

\newpage
\section{Introduction}

This paper gives new results for \emphbf{$\eps$-synchronization strings}, a powerful combinatorial object that can be used to effectively deal with insertions and deletions in various communication problems.

\medskip

Synchronization strings are pseudo-random non-self-similar sequences of symbols over some finite alphabet that can be used to index a finite or infinte sequence of elements similar to the trivial indexing sequence $1,2,3,4,\ldots,n$. In particular, if one first indexes a sequence of $n$ elements with the trivial indexing sequence and then applies some $k$ insertions or deletions of indexed elements one can still easily recover the original sequence of elements up to $k$ half-errors, i.e., erasures or substitutions (where substitutions count twice). An $\eps$-synchronization strings allows essentially the same up to an arbitrarily small error of $\eps n$ half-errors but instead of having indexing symbols from a large alphabet of size $n$, which grows with the length of the sequence, a finite alphabet size of $\eps^{-O(1)}$ suffices for $\eps$-synchronization strings. Often this allows to efficiently transform insertion and deletion errors into ordinary Hamming errors which are much better understood and easier to handle. 

\medskip

One powerful application of synchronization strings is the design of efficient insdel error correcting codes (ECC), i.e., codes that can efficiently correct insertions and deletions. While codes for Hamming errors have been well understood making progress on insdel codes has been difficult~\cite{golomb1963synchronization,Levenshtein65,mercier2010survey,sloane2002single,GL-isit16,GW-random15}. Synchronization strings solve this problem by transforming any regular error correcting block code $C$ with a sufficiently large finite alphabet into an essentially equally efficient insdel code by simply indexing the symbols of $C$.
This leads to the first insdel codes that approach the Singleton bound, i.e., for any $\delta <1 $ and $\eps >0$ one can get an insdel code with rate $1 - \delta - \eps$ which, in quadratic time, recovers from any $\delta$ fraction of insertions or deletions. Further applications are given in \cite{haeupler2017synchronization2:ARXIV}. Most importantly, \cite{haeupler2017synchronization2:ARXIV} introduces the notion of a channel simulation which allows one to use any insertion deletion channel like a black-box regular symbol corruption channel with an slightly increased error rate. This can be used to give the first computationally efficient interactive coding schemes for insdel errors and the first interactive coding scheme for insdel errors whose communication rate goes to one as the amount of noise goes to zero. 

\medskip

This paper provides drastically improved constructions of finite and infinite synchronization strings and a stronger synchronization string property which allows for decoding algorithms that are local and significantly faster. We furthermore give several applications for these results, including near linear time insertion-deletion codes, a near linear time coding scheme for interactive communication over insertion-deletion channels, exponentially better channel simulations in terms of time and memory, infinite channel simulations, and codes that can correct block transposition and block replication corruptions.

\section{Our Results, Structure of this Paper, and Related Work}

Next we give an overview over the main results and the overall structure of this paper. We also put our result in relation to related prior works.

\subsection{Deterministic, Linear Time, Highly Explicit Construction of Infinte Synchronization Strings} 

In \cite{haeupler2017synchronization} the authors introduced synchronization strings and gave a $O(n^5)$ time randomized synchronization string construction. This construction could not be easily derandomized. In order to provide deterministic explicit constructions of insertion deletion block codes \cite{haeupler2017synchronization} introduced a strictly weaker notion called self-matching strings, showed that these strings could be used for code constructions as well, and gave a deterministic $n^{O(1)}$ time self-matching string construction. Obtaining a deterministic synchronization string construction of was left open. \cite{haeupler2017synchronization} also showed the existence of infinite synchronization strings. This existence proof however is highly non-constructive. In fact even the existence of a computable infinite synchronization string was left open; i.e., up to this paper there was no algorithm that would compute the $i^{th}$ symbol of some infinite synchronization string in finite time.
	
\medskip	
	
In this paper we give deterministic constructions of finite and infinite synchronization strings. Instead of going to a weaker notion, as done in \cite{haeupler2017synchronization}, Section~\ref{sec:LongDistSync} introduces a stronger notion called long-distance synchronization strings. Interestingly, while the existence of these generalized synchronization strings can be shown with a similar Lovasz local lemma based proof as for plain synchronization strings, this proof allows for an easier derandomization, which leads to a \emphbf{deterministic polynomial time construction of (long-distance) synchronization strings}. Beyond this derandomization the notion of long-distance synchronization strings turns out to be very useful and interesting in its own right, as will be shown later. 

\medskip	
	
Next, two different boosting procedures, which make synchronization string constructions faster and more explicit, are given. The first boosting procedure, given in Section~\ref{sec:BoostingStepI}, leads to a \emphbf{deterministic linear time synchronization string construction}. We remark that concurrently and independently Cheng, Li, and Wu obtained a deterministic $O(n \log^2 \log n)$ time synchronization string construction~\cite{cheng2017synchronization}. 

\medskip	

Our second boosting step, which is introduced in Section~\ref{sec:BoostingStepII}, makes our synchronization string construction \emphbf{highly-explicit}, i.e., allows to compute any position of an $n$ long synchronization string in time $O\left(\log n\right)$. This highly-explicitness is an property of crucial importance in most of our new applications. 

\medskip	

Lastly, in Section~\ref{sec:infiniteConstruction} we give a simple transformation which allows to use any construction for finite length synchronization strings and utilize it to give an construction of an \emphbf{infinite synchronization string}. This transformation preserves highly-explicitness. Infinite synchronization strings are important for applications in which one has no a priori bound on the running time of a system, such as, streaming codes, channel simulations, and some interactive coding schemes. Overall we get the following simple to state theorem:

\begin{theorem}
For any $\eps > 0$ there exists an infinite $\eps$-synchronization string $S$ over an alphabet of size $\eps^{-O(1)}$ and a deterministic algorithm which for any $i$ takes $O(\log i)$ time to compute $S[i,i+\log i]$, i.e., the $i^{th}$ symbol of $S$ (as well as the next $\log i$ symbols). 
\end{theorem}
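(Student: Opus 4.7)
The plan is to chain together the four construction ingredients described in the outline of the paper, each of which addresses one quantitative aspect of the target theorem. First, I would invoke the construction from Section~\ref{sec:LongDistSync}: the existence proof for $\eps$-synchronization strings based on the Lovász local lemma, when rephrased in terms of the stronger long-distance notion, admits a direct derandomization (via the standard Moser--Tardos style or conditional-expectation derandomization of LLL). This yields a deterministic polynomial time construction of $\eps$-synchronization strings of any finite length over an alphabet of size $\eps^{-O(1)}$, providing the seed object for everything that follows.

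Next I would apply the two boosting steps in sequence. Boosting Step~I (Section~\ref{sec:BoostingStepI}) should take this polynomial time seed and produce a deterministic \emph{linear time} construction: the natural approach is a two-level composition where one indexes length-$n$ blocks by a short outer synchronization string over a slightly larger alphabet, recursing so that each level multiplies the string length by a super-polynomial factor while adding only $O(n)$ new work; the long-distance property is exactly what one needs so that matchings that cross block boundaries cannot be too damaging to the composed string. Boosting Step~II (Section~\ref{sec:BoostingStepII}) should then convert linear time into \emph{highly explicit}: the same tree-of-compositions, when organized as a balanced tree of logarithmic depth, lets one compute any symbol $S[i]$ by walking a single root-to-leaf path, giving $O(\log n)$ time per symbol and, by exploiting locality within a leaf block, $O(\log n)$ time for the entire window $S[i, i+\log i]$.

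Finally, I would apply the finite-to-infinite transformation of Section~\ref{sec:infiniteConstruction} to the highly-explicit finite construction. The natural transformation is to choose a doubling schedule of finite $\eps$-synchronization strings of lengths $2, 4, 8, \dots$ (each over an alphabet of size $\eps^{-O(1)}$, possibly tagged by a short block identifier), and to concatenate them in a way that preserves the global $\eps$-synchronization property. Given an index $i$, the unique block containing $i$ is determined in $O(\log i)$ arithmetic via the doubling schedule, after which one queries the highly-explicit finite construction for that block at the appropriate offset, using $O(\log i)$ more time. Since the block is of length $\Omega(i)$, the same query also returns the next $\log i$ symbols within the same block (with an at most one-block crossover handled by an additional query), matching the required $O(\log i)$ bound for $S[i, i+\log i]$.

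The main obstacle in executing this plan is showing that the compositions in Boosting Steps~I and~II, and the concatenation in the infinite transformation, actually preserve the $\eps$-synchronization property globally. Local synchronization of each component is easy; what is subtle is ruling out long matchings that straddle block or level boundaries. This is precisely where the strengthened long-distance property from Section~\ref{sec:LongDistSync} is essential, since it gives control over edit-distance behavior of widely separated substrings, which is exactly the quantity governing such cross-block matchings. Everything else—alphabet size remaining $\eps^{-O(1)}$, the per-symbol time budget, and the ability to decode a logarithmic window in logarithmic total time—follows from carefully bookkeeping the parameters at each stage.
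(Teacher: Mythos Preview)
Your high-level chaining of four stages matches the paper's architecture, but the technical content you sketch for the boosting steps and for the infinite transformation is quite different from what the paper actually does, and in the last step there is a real gap.

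For the boosting, the paper does not use a ``tree of compositions'' at either stage. The linear-time step (Lemma~\ref{lem:simplepolyboosting}) is a one-line squaring trick: from an $\eps$-synchronization string $S$ of length $n$ one builds $S'[i] = \bigl(S[i \bmod n],\, S[(i+n/2)\bmod n],\, S[\lceil i/(\gamma n)\rceil]\bigr)$ of length $\gamma n^2$, and iterates $\log T$ times to beat an $n^T$ seed down to linear time. No long-distance property is used here. The highly-explicit step (Lemma~\ref{thm:CodeBlock} and Theorem~\ref{thm:FiniteLongDistConstructionNEW}) is also not a tree: one concatenates the codewords of a linear-time-encodable high-distance insdel code of block length $\Theta(\log n)$ to get the long-distance guarantee for large intervals, and pairs this symbol-by-symbol with repetitions of a short $O(\log n)$-length $\eps$-synchronization string to handle small intervals. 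Computing $S[i,i+\log n]$ then requires encoding $O(1)$ codewords plus reading the short string, which is $O(\log n)$ work.

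The genuine gap is in your infinite construction. You propose concatenating finite $\eps$-synchronization strings of doubling lengths and say that the long-distance property is what controls matchings that straddle block boundaries. But the long-distance property is a statement about far-apart substrings \emph{within a single string}; it says nothing about the edit distance between a substring of $S_{2^m}$ and a substring of $S_{2^{m+1}}$, which are independent strings and could share long common subsequences. A ``short block identifier'' tag does not fix this either, since any constant-size tag repeats. The paper's solution (Theorem~\ref{thm:infiniteSync}) is structurally different: it runs \emph{two} tracks $U$ and $V$, each a concatenation of finite $(\eps/2)$-synchronization strings of geometrically growing lengths with ratio $k=6/\eps$, but with the two tracks offset so that the boundaries never nearly coincide. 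Lemma~\ref{lem:covering} then shows that every interval $T[x,z)$ lies, up to a $3/k = \eps/2$ fraction, inside a single block of one of the two tracks, so the ordinary $(\eps/2)$-synchronization property of that block already gives the required edit-distance bound. No long-distance property is invoked in this step.
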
 

Since any substring of an $\eps$-synchronization string is also an $\eps$-synchronization string itself this infinite synchronization string construction also implies a deterministic linear time construction of finite synchronization strings which is fully parallelizable. In particular, for any $n$ there is a linear work parallel $\mathbf{NC}^1$ algorithm with depth $O(\log n)$ and $O(n / \log n)$ processors which computes the $\eps$-synchronization string $S[1,n]$.


%

\subsection{Long Distance Synchronization Strings and Fast Local Decoding}

Section~\ref{sec:localDecoding} shows that the long-distance property we introduced in Section~\ref{sec:LongDistSync}, together with our highly explicit constructions from Section~\ref{sec:BoostingStepII}, allows the design of a much faster and highly local decoding procedure. In particular, to decode the index of an element in a stream that was indexed with a synchronization string it suffices to look at the only the $O(\log n)$ previously received symbols. The decoding of the index itself furthermore takes only $O(\log^3 n)$ time and can be done in a streaming fashion. This is significantly faster than the $O(n^3)$ streaming decoder or the $O(n^2)$ global decoder given in \cite{haeupler2017synchronization}.

\bigskip

The paper furthermore gives several applications which demonstrate the power of these improved synchronization string constructions and the local decoding procedure. 

\subsection{Application: Codes Against Insdels, Block Transpositions and Replications\shortOnly{ (Appendix~\ref{sec:AppilcationsCode})}}
\subsubsection{Near Linear Time Decodable Error Correcting Codes}
	
Fast encoding and decoding procedures for error correcting codes have been important and influencial in both theory and practice. For regular error correcting block codes the celebrated expander code framework given by Sipser and Spielman~\cite{sipser1996expander} and in Spielman's thesis~\cite{spielman1995computationally} as well as later refinements by Alon, Edmonds, and Luby~\cite{alon1995linear} as well as Guruswami and Indyk~\cite{guruswami2005linear, guruswami2001expander} gave good ECCs with linear time encoding and decoding procedures. Very recently a beautiful work by Hemenway, Ron-Zewi, and Wooters~\cite{hemenway2017local} achieved linear time decoding also for capacity achieving list decodable and locally list recoverable codes. 

\medskip

The synchronization string based insdel codes in \cite{haeupler2017synchronization} have linear encoding times but quadratic decoding times. As pointed out in \cite{haeupler2017synchronization} the later seemed almost inherently to the harsher setting of insdel errors because \emph{``in contrast
to Hamming codes, even computing the distance between the received and the sent/decoded string
is  an  edit  distance  computation. Edit  distance  computations  in  general  do  usually  not  run  in
sub-quadratic time,  which is not surprising given the recent SETH-conditional lower bounds~\cite{backurs2015edit}''}. Very surprisingly to us, our fast decoding procedure allows us to construct insdel codes with near linear decoding complexity:

\begin{theorem}
For any $\delta < 1$ and $\eps > 0$ there exists an \emphbf{insdel error correcting block code} with rate $1 - \delta - \eps$ that can correct from any $O(\delta)$ fraction of insertions and deletions in $O(n \log^3 n)$ time. The encoding time is linear and the alphabet bit size is near linear in $\frac{1}{\delta+\eps}$.
\end{theorem}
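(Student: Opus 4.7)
The plan is to follow the synchronization-string-indexing template from \cite{haeupler2017synchronization} but to replace both the inner Hamming code and the synchronization-string decoder with components that run in near-linear time. I would start from a linear-time encodable, near-linear-time decodable error correcting block code $C$ with rate $1-\delta-\eps/2$ over an alphabet whose bit size is near linear in $1/(\delta+\eps)$, and which can correct any $\delta + \Theta(\eps)$ fraction of half-errors (erasures plus twice substitutions). Such codes are produced by the recent capacity-achieving linear-time construction of Hemenway, Ron-Zewi, and Wooters~\cite{hemenway2017local} after, if needed, a standard distance-amplification step and the routine reduction from errors to half-errors (via a code whose distance is twice the desired half-error budget).

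Next, I would take the highly-explicit, long-distance $\eps'$-synchronization string $S$ of length $n$ over an alphabet of size $\eps'^{-O(1)}$ guaranteed by the constructions of Section~\ref{sec:BoostingStepII} (for a suitably small $\eps' = \Theta(\eps)$) and \emph{index} the codeword of $C$ symbol by symbol with $S$, i.e., the $i$th transmitted symbol is the pair $(C(m)_i,\, S[i])$. Since every prefix of $S$ can be computed in linear total time, the encoder runs in $O(n)$. The resulting block code has rate $1-\delta-\eps$ and an alphabet whose bit size remains near linear in $1/(\delta+\eps)$.

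Upon receipt of the (possibly insdel-corrupted) stream, the decoder processes the indexing symbols through the streaming \emph{local} decoder of Section~\ref{sec:localDecoding}: at each received position it inspects only the $O(\log n)$ most recent indexing symbols and returns a guess for the original index of that symbol in $O(\log^3 n)$ time. The long-distance property of $S$ combined with the $\eps'$-synchronization string guarantee ensures that, given any $\delta n$ insertions and deletions, the total number of misindexed positions (wrong, missing, or spurious) is $O(\delta + \eps) n$ half-errors on the $C$-codeword. Feeding the re-indexed stream into the near-linear-time half-error decoder of $C$ recovers the message. Summing costs yields encoding time $O(n)$ and decoding time $n \cdot O(\log^3 n) + \tilde O(n) = O(n \log^3 n)$.

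The step I expect to be the main obstacle is the accounting for the half-error reduction under \emph{streaming, logarithmic-window} decoding: one must verify that truncating the decoder's view to an $O(\log n)$ suffix does not blow up the number of misindexings beyond $O(\delta n)$, even when corruptions are adversarially concentrated. This is precisely what the long-distance property introduced in Section~\ref{sec:LongDistSync} is designed to enforce, so once that lemma is in hand the reduction should go through essentially verbatim from \cite{haeupler2017synchronization}. A secondary bookkeeping issue is tracking alphabet sizes through the concatenation of $C$ with $S$ and through the half-error/distance-amplification reduction so that the final alphabet bit size remains near linear in $1/(\delta+\eps)$, but this is a routine parameter chase.
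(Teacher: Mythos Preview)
Your proposal is correct and follows essentially the same route as the paper: index a near-MDS half-error code with a highly-explicit long-distance $\eps'$-synchronization string, use the $O(\log^3 n)$-time local streaming decoder of Section~\ref{sec:localDecoding} to recover indices, and then run the half-error decoder on the re-indexed word. The misindexing bound you flag as the main obstacle is exactly what the paper isolates as Lemma~\ref{thm:FastEncFastDec}, which combines the local-decoding guarantee of Theorem~\ref{thm:LocalDecodingFinite} with the suffix-error-density counting lemma (Lemma~5.14 of \cite{haeupler2017synchronization}) to bound the number of misdecodings by $n\delta/(1-\eps')$.

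One small divergence: for the inner half-error code the paper simply uses the Guruswami--Indyk expander codes of Theorem~\ref{thm:GuruswamiIndykHighRateCodes}, which are already \emph{linear}-time (not just near-linear-time) encodable and decodable and directly give the required alphabet bit size; invoking \cite{hemenway2017local} with a distance-amplification step works too but is more machinery than needed here. Also, the synchronization string you want is not just the output of Section~\ref{sec:BoostingStepII} but the specific locally-decodable variant of~\eqref{eq:LocalDecConstruction} (the block-counter augmentation), since the local decoder in Section~\ref{sec:localDecoding} relies on that extra coordinate.
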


Note that for any input string the decoder finds the codeword that is closest to it in edit distance, if a codeword with edit distance of at most $O(\delta n)$ exists. However, computing the distance between the input string and the codeword output by the decoder is an edit distance computation. Shockingly, even now, we do not know of any sub-quadratic algorithm that can compute or even crudely approximate this distance between input and output of our decoder, even though intuitively this seems to be much easier almost prerequisite step for the distance minimizing decoding problem itself. After all, decoding asks to find the closest (or a close) codeword to the input from an exponentially large set of codewords, which seems hard to do if one cannot even approximate the distance between the input and any particular codeword.

\subsubsection{Application: High-Rate InsDel Codes that Efficiently Correct Block Transpositions and Replications}

\fullOnly{Section}\shortOnly{Appendix}~\ref{sec:BlockTranspositionAndReplication} gives another interesting application of our local decoding procedure. In particular, we show that 
local decodability directly implies that insdel ECCs constructed with our highly-explicit long-distance synchronization strings can not just efficiently recover from $\delta$ fraction of insdel errors but also from any $O(\delta / \log n)$ fraction of \emphbf{block transpositions and block replications}. Block transpositions allow for arbitrarily long substrings to be swapped while a block replication allows for an arbitrarily long substring to be duplicated and inserted anywhere else. A similar result, albeit for block transpositions only, was shown by Schulman, Zuckerman \cite{schulman1999asymptotically} for the efficient constant distance constant rate insdel codes given by them. They also show that the $O(\delta / \log n)$ resilience against block errors is optimal up to constants. 

\subsection{Application: Exponentially More Efficient Infinite Channel Simulations\shortOnly{ (Appendix~\ref{sec:InfiniteSimulation})}}

\cite{haeupler2017synchronization2:ARXIV} introduced the powerful notion of a channel simulation. In particular, 
\cite{haeupler2017synchronization2:ARXIV} showed that for any adversarial one-way or two-way insdel channel one can put a simple black-box at both ends such that to any two parties interacting with these black-boxes the behavior is indistinguishable from a much nicer Hamming channel which only introduces (a slightly larger fraction of) erasures and symbol corruptions. To achieve this these black-boxes were required to know a prior for how many steps $T$ the channel would be used and required an amount of memory size that is linear in $T$. Furthermore, for each transmission at a time step $t$ the receiving black-box would perform a $O(t^3)$ time computation. We show that using our locally decodable highly explicit long-distance synchronization strings can reduce both the memory requirement and the computation complexity exponentially. In particular each box is only required to have $O(\log t)$ bits of memory (which is optimal because at the very least it needs to store the current time) and any computation can be done in $O(\log^3 t)$ rounds. Furthermore due to our infinite synchronization string constructions the channel simulations black-boxes are not required to know anymore for how much time overall the channel will be used. These drastic improvements make channel simulations significantly more useful and indeed potentially quite practical. 

\subsection{Application: Near-Linear Time Interactive Coding Schemes for InsDel Errors\shortOnly{ (Appendix~\ref{sec:NearLinearInteractiveScheme})}}
Interactive coding schemes, as introduced by Schulman~\cite{schulman1992communication, schulman1996Interactive}, allow to add redundancy to any interactive protocol between two parties in such a way that the resulting protocol becomes robust to noise in the communication. Interactive coding schemes that are robust to symbol corruptions have been intensely studied over the last few years~\cite{braverman2014toward,franklin2015optimal, kol2013interactive,haeupler2014interactive:FOCS, brakerski2012efficient,brakerski2013fast,ghaffari2014optimal, gelles2014efficient}. Similar to error correcting codes the main parameters for an interactive coding scheme is the fraction of errors it can tolerate\cite{schulman1992communication, schulman1996Interactive,braverman2014toward,franklin2015optimal} its communication rate\cite{kol2013interactive,haeupler2014interactive:FOCS} and its computational efficiency~\cite{brakerski2012efficient,brakerski2013fast,ghaffari2014optimal, gelles2014efficient}. In particular, Brakerski and Kalai~\cite{brakerski2012efficient} gave the first computationally efficient polynomial time interactive coding scheme. Brakerski and Naor~\cite{brakerski2013fast} improved the complexity to near linear. Lastly, Ghaffari and Haeupler~\cite{ghaffari2014optimal} gave a near-linear time interactive coding scheme that also achieved the optimal maximal robustness. More recently interactive coding schemes that are robust to insertions and deletions have been introduced by Braverman, Gelles, Mao, and Ostrovsky~\cite{braverman2017coding} subsequently Sherstov and Wu~\cite{sherstov2017optimal} gave a scheme with optimal error tolerance and Haeupler, Shahrasbi, and Vitercik~\cite{haeupler2017synchronization2:ARXIV} used channel simulations to give the first computationally efficient polynomial time interactive coding scheme for insdel errors. Our improved channel simulation can be used together with the coding scheme from \cite{ghaffari2014optimal} to directly get the first interactive coding scheme for insertions and deletions with a near linear time complexity - i.e., the equivalent of the result of Brakerski and Naor~\cite{brakerski2013fast} but for insertions and deletions. 

\section{Definitions and Preliminaries}\label{sec:def}

In this section, we provide the notation and definitions we will use throughout the rest of the paper. \fullOnly{We also briefly review key definitions and techniques from \cite{haeupler2017synchronization, haeupler2017synchronization2:ARXIV}.

\subsection{String Notation}}

\shortOnly{\smallskip}

\noindent \textbf{String Notation.} For two strings $S \in \Sigma^n$ and $S' \in \Sigma^{n'}$ be two strings over alphabet $\Sigma$. We define $S \cdot S' \in \Sigma^{n+n'}$ to be their concatenation. For any positive integer $k$ we define $S^k$ to equal $k$ copies of $S$ concatenated together. For $i,j \in \{1, \dots, n\}$, we denote the substring of $S$ from the $i^{th}$ index through and including the $j^{th}$ index as $S[i,j]$. Such a consecutive substring is also called a \emph{factor} of $S$. For $i < 1$ we define $S[i,j] = \bot^{-i+1} \cdot S[1,j]$ where $\bot$ is a special symbol not contained in $\Sigma$. We refer to the substring from the $i^{th}$ index through, but not including, the $j^{th}$ index as $S[i,j)$. The substrings $S(i,j]$ and $S[i,j]$ are similarly defined. $S[i]$ denotes the $i^{th}$ symbol of $S$ and $|S| = n$ is the length of $S$. Occasionally, the alphabets we use are the cross-product of several alphabets, i.e. $\Sigma = \Sigma_1 \times \cdots \times \Sigma_n$. If $T$ is a string over $\Sigma,$ then we write $T[i] = \left[a_1, \dots, a_n\right]$, where $a_i \in \Sigma_i$. Finally, symbol by symbol concatenation of two strings $S$ and $T$ of similar length is $\left[(S_1, T_1), (S_2, T_2), \cdots\right]$.

\smallskip

\noindent \textbf{Edit Distance.} Throughout this work, we rely on the well-known \emph{edit distance} metric defined as follows.
\begin{definition}[Edit distance]
The \emph{edit distance} $ED(c,c')$ between two strings $c,c' \in \Sigma^*$ is the minimum number of insertions and deletions required to transform $c$ into $c'$.
\end{definition}
It is easy to see that edit distance is a metric on any set of strings and in particular is symmetric and satisfies the triangle inequality property. Furthermore, $ED\left(c,c'\right) = |c| + |c'| - 2\cdot LCS\left(c,c'\right)$, where $LCS\left(c,c'\right)$ is the longest common substring of $c$ and $c'$.
\shortOnly{
\begin{definition}[$\eps$-Synchronization String]\label{def:synCode}
String $S \in \Sigma^n$ is an $\eps$-synchronization string if for every $1 \leq i < j < k \leq n + 1$ we have that $ED\left(S[i, j),S[j, k)\right) > (1-\eps) (k-i)$. We call the set of prefixes of such a string an $\eps$-synchronization code. 
\end{definition}
We provide a brief review of main concepts and techniques related to synchronization strings from~\cite{haeupler2017synchronization} which will be used throughout the rest of this paper in Appendix~\ref{App:Prelims}.
}
\global\def\PrelimsSynchStrings{
\smallskip
\shortOnly{
\section{Definitions and Preliminaries}\label{App:Prelims}
\subsection{Relative Suffix Distance}}
\begin{definition}[Relative Suffix Distance]
For any two strings $S, S' \in \Sigma^*$ we define their relative suffix distance $RSD$ as follows:
$$RSD(S,S') = \max_{k > 0} \frac{ED\left(S(|S|-k,|S|],S'(|S'|-k,|S'|]\right)}{2k}$$
\end{definition}

\begin{lemma}\label{lem:RSDMetricProperties}
For any strings $S_1,S_2,S_3$ we have 	
\begin{itemize}
	\item {\bfseries Symmetry:} $RSD(S_1,S_2) = RSD(S_2,S_1)$,
	\item {\bfseries Non-Negativity and Normalization:} $0 \leq RSD(S_1,S_2) \leq 1$,
	\item {\bfseries Identity of Indiscernibles:} $RSD(S_1,S_2) = 0 \Leftrightarrow S_1 = S_2$, and
	\item {\bfseries Triangle Inequality:} $RSD(S_1,S_3) \leq RSD(S_1,S_2) + RSD(S_2,S_3)$.
\end{itemize}
In particular, RSD defines a metric on any set of strings. 
\end{lemma}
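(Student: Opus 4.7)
The plan is to verify the four listed properties directly from the definition of $RSD$ and basic facts about edit distance, handling the $\bot$-padding convention from the string notation section carefully. Note first that for every $k>0$ the two strings $S(|S|-k,|S|]$ and $S'(|S'|-k,|S'|]$ have length exactly $k$, because the convention $S[i,j] = \bot^{-i+1}\cdot S[1,j]$ for $i<1$ pads the suffix when $k$ exceeds $|S|$. This uniform length will be the key to turning claims about edit distance into claims about $RSD$.

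For \textbf{symmetry}, I would simply invoke $ED(A,B)=ED(B,A)$ termwise inside the maximum. For \textbf{non-negativity}, every $ED$ value is $\geq 0$, so the max is too. For the upper bound $RSD\leq 1$, observe that two strings of the same length $k$ have $ED$ at most $2k$ (delete the $k$ symbols of one, then insert the $k$ symbols of the other), so each ratio $ED(\cdot,\cdot)/(2k)$ lies in $[0,1]$, and so does the maximum.

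For \textbf{identity of indiscernibles}, the forward direction is immediate: if $S_1=S_2$ then every pair of length-$k$ suffixes (after any padding) coincides, so each $ED$ term is $0$. For the converse, suppose $RSD(S_1,S_2)=0$. Then for every $k>0$ the length-$k$ suffixes of $S_1$ and $S_2$ (with padding) are identical. Take $k=\max(|S_1|,|S_2|)$; if, say, $|S_1|<|S_2|$, then the $k$-suffix of $S_1$ begins with a $\bot$, whereas the $k$-suffix of $S_2$ is all of $S_2$ and contains only symbols of $\Sigma$, contradicting $\bot\notin\Sigma$. Hence $|S_1|=|S_2|$, and then the length-$|S_1|$ suffix equality is exactly $S_1=S_2$.

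For the \textbf{triangle inequality}, the plan is to apply the (known) triangle inequality of edit distance suffix-by-suffix. Fix any $k>0$ and abbreviate the three length-$k$ suffixes as $A_k, B_k, C_k$. Then $ED(A_k,C_k) \leq ED(A_k,B_k) + ED(B_k,C_k)$. Dividing by $2k$ and bounding each of the two right-hand ratios by its corresponding maximum gives
\[
\frac{ED(A_k,C_k)}{2k} \;\leq\; RSD(S_1,S_2) + RSD(S_2,S_3).
\]
Taking the maximum of the left side over $k>0$ yields $RSD(S_1,S_3)\leq RSD(S_1,S_2)+RSD(S_2,S_3)$.

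The only place that requires genuine care is the identity-of-indiscernibles converse, because one must argue that $RSD=0$ forces both equal length and equal content; this is where the $\bot$ convention plays a nontrivial role. Everything else is a one-line appeal to a corresponding property of edit distance applied uniformly across all suffix lengths $k$, so I expect no real obstacle beyond being precise about the padding.
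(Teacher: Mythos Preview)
Your proof is correct and is essentially the canonical argument: each metric axiom for $RSD$ is inherited from the corresponding property of edit distance, applied uniformly across all suffix lengths $k$, with the $\bot$-padding convention doing the work in the identity-of-indiscernibles converse.

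Note, however, that this paper does not actually supply a proof of this lemma; it is stated in the preliminaries as a fact imported from \cite{haeupler2017synchronization}, so there is no ``paper's own proof'' to compare against here. Your argument is exactly the kind of direct verification one would expect for such a statement, and there is nothing to add.
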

\global\def\PrelimsECC{\color{red}
\subsection{Error Correcting Codes}

Next, we give a quick summary of the standard definitions and formalism around error correcting codes. This is mainly for completeness and we remark that readers already familiar with basic notions of error correcting codes might want to skip this part.

\paragraph{Codes, Distance, Rate, and Half-Errors.}

\newcommand{\Sigmain}{{\Sigma'}}
\newcommand{\Sigmaout}{{\Sigma}}

An \emph{error correcting code} $C$ is an injective function which
takes an input string $s \in (\Sigmain)^{n'}$ over alphabet $\Sigmain$
of length $n'$ and generates a \emph{codeword} $C(s) \in \Sigmaout^n$
of length $n$ over alphabet $\Sigmaout$. The length $n$ of a codeword
is also called the \emph{block length}. The two most important
parameters of a code are its distance $\Delta$ and its rate
$R$. The \emph{rate} $R = \frac{n \log |\Sigmaout|}{n' \log
|\Sigmain|}$ measures what fraction of bits in the codewords produced
by $C$ carries non-redundant information about the
input. The \emph{code distance} $\Delta(C)
= \min_{s,s'} \Delta(C(s),C(s'))$ is simply the minimum Hamming
distance between any two codewords. The \emph{relative distance}
$\delta(C) = \frac{\Delta(C)}{n}$ measures what fraction of output
symbols need to be corrupted to transform one codeword into another.

It is easy to see that if a sender sends out a codeword $C(s)$ of code $C$ with relative distance $\delta$ a receiver can uniquely  recover $s$ if she receives a codeword in which less than a $\delta$ fraction of symbols are affected by an \emph{erasure}, i.e., replaced by a special $''?''$ symbol.
Similarly, a receiver can uniquely recover the input $s$ if less than
$\delta / 2$ \emph{symbol corruptions}, in which a symbol is replaced
by any other symbol from $\Sigmaout$, occurred. More generally it is
easy to see that a receiver can recover from any combination of
$k_{e}$ erasures and $k_{c}$ corruptions as long as $k_{e} + 2 k_{c}
< \delta n$. This motivates defining half-errors to incorporate both
erasures and symbol corruptions where an erasure is counted as a
single half-error and a symbol corruption is counted as two
half-errors. In summary, any code of distance $\delta$ can tolerate
any error pattern of less than $\delta n$ half-errors.

We remark that in addition to studying codes with decoding guarantees
for worst-case error pattern as above one can also look at more benign
error models which assume a distribution over error patterns, such as
errors occurring independently at random. In such a setting one looks
for codes which allow unique recovery for typical error patterns,
i.e., one wants to recover the input with probability tending to $1$
rapidly as the block length $n$ grows. While synchronization strings 
might have applications for such codes as well, this paper focuses
exclusively on codes with good distance guarantees which tolerate an
arbitrary (worst-case) error pattern.

\paragraph{Synchronization Errors.}
In addition to half-errors, we study \emph{synchronization errors}
which consist of \emph{deletions}, that is, a symbol being removed
without replacement, and \emph{insertions}, where a new symbol from
$\Sigmaout$ is added anywhere. It is clear
that \important{synchronization errors are strictly more general and
harsh than half-errors}. The above formalism of codes, rate, and distance works equally well for synchronization errors if one replaces the Hamming distance with edit distance. Instead of measuring the number of symbol corruptions required to transform one string into another, \emph{edit distance}
measures the minimum number of insertions and deletions to do so. An insertion-deletion error correcting code, or \emph{insdel code} for short, of relative distance $\delta$ is a set of codewords for which at least $\delta n$ insertions and deletions are needed to transformed any codeword into another. Such a code can correct any combination of less than $\delta n/2$ insertions and deletions. We remark that it is possible for two codewords of length $n$ to have edit distance up to $2n$ putting the (minimum) relative edit distance between zero and two and allowing for constant rate codes which can tolerate $(1 - \eps)n$ insdel errors. 

\paragraph{Efficient Codes}

In addition to codes with a good minimum distance, one furthermore
wants efficient algorithms for the encoding and error-correction tasks
associated with the code. Throughout this paper we say a code is
efficient if it has encoding and decoding algorithms running in time
polynomial in the block length. While it is often not hard to show
that random codes exhibit a good rate and distance, designing codes
which can be decoded efficiently is much harder. We remark that most
codes which can efficiently correct for symbol corruptions are also
efficient for half-errors. For insdel codes the situation is slightly
different. While it remains true that any code that can uniquely be
decoded from any $\delta(C)$ fraction of deletions can also be decoded
from the same fraction of insertions and
deletions~\cite{Levenshtein65} doing so efficiently is often much
easier for the deletion-only setting than the fully general insdel
setting. 
}
\subsection{Synchronization Strings}
\exclude{\color{red}

\begin{lemma}\label{lem:codewordRSDdistance}
If $S$ is an $\eps$-synchronization string, then $RSD(S[1,i],S[1,j]) > 1-\eps$ for any $i < j$, i.e.,  any two codewords associated with $S$ have RSD distance of at least $1-\eps$. 
\end{lemma}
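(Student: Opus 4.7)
The plan is to exhibit a single value of $k$ in the maximization defining $RSD$ that already achieves a ratio strictly greater than $1-\eps$. The natural choice is $k = j-i$, since then the $k$-length suffix of $S[1,j]$ is exactly the factor $S[i+1,j]$, which sits immediately after the range producing $S[1,i]$; this sets us up to invoke the $\eps$-synchronization string property on adjacent factors of $S$.

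I would split into two cases. If $j \le 2i$, then the $k$-length suffix of $S[1,i]$ is the un-padded factor $S[2i-j+1,i]$, and applying the synchronization property with the index triple $(2i-j+1,\ i+1,\ j+1)$ immediately yields $ED(S[2i-j+1,i],\, S[i+1,j]) > (1-\eps)\cdot 2(j-i) = 2(1-\eps)k$, so $RSD(S[1,i],S[1,j]) > 1-\eps$. If instead $j > 2i$, the $k$-length suffix of $S[1,i]$ becomes the $\bot$-padded string $\bot^{j-2i}\cdot S[1,i]$. The key observation is that, since $\bot \notin \Sigma$, every copy of $\bot$ is forced to be a deletion in any alignment with a $\Sigma$-string, so the edit distance decomposes cleanly as $ED(\bot^{j-2i}S[1,i],\, S[i+1,j]) = (j-2i) + ED(S[1,i],\, S[i+1,j])$. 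I would then lower-bound the second summand using the synchronization property with the indices $(1,\ i+1,\ j+1)$, giving $ED(S[1,i],S[i+1,j]) > (1-\eps)j$. Substituting and comparing against $2(1-\eps)k = 2(1-\eps)(j-i)$ reduces to the inequality $\eps(j-2i) \ge 0$, which holds strictly since $j > 2i$, and again yields $RSD > 1-\eps$.

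The only genuine obstacle is handling the $\bot$-padding in the second case; once one observes that $\bot$ symbols are forced deletions and that the edit distance splits additively across the $\bot$-prefix and the remaining $\Sigma$-suffix, the lemma reduces to a short algebraic check. Both cases share the single choice $k = j-i$, which keeps the argument uniform and avoids having to separately optimize over $k$.
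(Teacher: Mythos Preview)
Your proof is correct and follows the same core idea as the paper's: pick $k = j - i$ in the maximization and invoke the $\eps$-synchronization property on the two adjacent factors that arise. The paper's proof is a two-line version that simply asserts the needed edit-distance bound ``holds even if $i-k<1$'' without justification; your case split and the $LCS$-based observation that $\bot$'s contribute additively to the edit distance supply exactly the detail the paper omits. So same approach, with your write-up filling the gap the paper glosses over.
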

\begin{proof}
Let $k = j - i$. The $\eps$-synchronization string property of $S$ guarantees that 
\[ED\left(S[i-k, i),S[i,j)\right) > (1-\eps) 2k.\]
Note that this holds even if $i-k<1$. To finish the proof we note that the maximum in the definition of RSD includes the term $\frac{ED\left(S[i-k, i),S[i,j)\right)}{2k} > 1-\eps$, which implies that $RSD(S[1,i],S[1,j]) > 1-\eps$.
\end{proof}

}

We now recall synchronization string based techniques and relevant lemmas from \cite{haeupler2017synchronization, haeupler2017synchronization2:ARXIV} which we will be of use here. In short, synchronization strings allow communicating parties to protect against synchronization errors by indexing their messages without blowing up the communication rate. The general idea of coding schemes introduced and utilized in \cite{haeupler2017synchronization, haeupler2017synchronization2:ARXIV}, is to index any communicated symbol in the sender side and then \emph{guess} the actual position of received symbols on the other end using the attached indices.

A straightforward candidate for such technique is to attach $1, \cdots, n$ to communicated symbols where $n$ indicates the rounds of communication. However, this trivial indexing scheme would not lead to an efficient solution as it requires assigning a $\log n$-sized space to indexing symbols. This shortcoming accentuates a natural trade-off between the size of the alphabet among which indexing symbols are chosen and the accuracy of the guessing procedure on the receiver side.

Haeupler and Shahrasbi~\cite{haeupler2017synchronization} introduce and discuss $\eps$-synchronization strings as well-fitting candidates for this matter. This family of strings, parametrized by $\eps$, are over alphabets of constant size in terms of communication length $n$ and dependent merely on parameter $\eps$. $\eps$-synchronization strings can convert any adversarial $k$ synchronization errors into hamming-type errors. The extent of disparity between the number translated hamming-type errors and $k$ can be controlled by parameter $\eps$. 

Imagine Alice and Bob as two parties communicating over a channel suffering from up to $\delta$-fraction of adversarial insertions and deletions. Suppose Alice sends a string $S$ of length $n$ to Bob. On the other end of the communication, Bob will receive a distorted version of $S$ as adversary might have inserted or deleted a number of symbols. 
A symbol which is sent by Alice and is received by Bob without being deleted by the adversary is called a \emph{successfully transmitted} symbol.

Assume that Alice and Bob both know string $S$ a priori. Bob runs an algorithm to determine the actual index of each of the symbols he receives, in other words, to guess which element of $S$ they correspond to. Such algorithm has to return an number in $[1, n]$ or ``I don't know'' for any symbol of $S_\tau$. We call such an algorithm an \emph{$(n, \delta)$-indexing algorithm}.

Ideally, a indexing algorithm is supposed to correctly figure out the indices of as many successfully transmitted symbols as possible. The measure of \emph{misdecodings} has been introduced in~\cite{haeupler2017synchronization} to evaluate the quality of a $(n, \delta)$-indexing algorithm as the number of successfully transmitted symbols that an algorithm might not decoded correctly. 
An indexing algorithm is called to be \emph{streaming} if its output for a particular received symbol depends only on the symbols that have been received before it.

Haeupler and Shahrasbi~\cite{haeupler2017synchronization} introduce and discuss $\eps$-synchronization strings along with several decoding techniques for them.

\begin{definition}[$\eps$-Synchronization String]\label{def:synCode}
String $S \in \Sigma^n$ is an $\eps$-synchronization string if for every $1 \leq i < j < k \leq n + 1$ we have that $ED\left(S[i, j),S[j, k)\right) > (1-\eps) (k-i)$. We call the set of prefixes of such a string an $\eps$-synchronization code. 
\end{definition}

We will make use of the global decoding algorithm from~\cite{haeupler2017synchronization} described as follows.
\begin{theorem}[Theorems  and 6.14 from \cite{haeupler2017synchronization}]\label{thm:globalDecoding}
There is a decoding algorithm for an $\eps$-synchronization string of length $n$ which guarantees decoding with up to $O(n\sqrt{\eps})$ misdecodings and runs in $O(n^2/\sqrt{\eps})$ time.
\end{theorem}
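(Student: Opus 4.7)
The plan is to use an LCS-based global decoder: compute the longest common subsequence between the sent synchronization string $S$ and the received string $S_\tau$ by standard Wagner--Fischer dynamic programming, and for each received position $j$ that gets matched to sender index $i$ output $i$, while unmatched received positions return ``I don't know''. The DP table has $O(n \cdot |S_\tau|) = O(n^2)$ cells, and the $1/\sqrt{\eps}$ factor in the claimed $O(n^2/\sqrt{\eps})$ runtime reflects an $\eps$-dependent constant that appears when one restricts the DP to shift bands of width $\Theta(1/\sqrt{\eps})$ around the main diagonal; that restriction is justified a posteriori by the misdecoding analysis.

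For correctness I track two matchings of $[n]$ to the received positions simultaneously: the LCS matching $L$ and the \emph{true} matching $T$ induced by which sender index each surviving received symbol actually came from. I partition $L$ into maximal \emph{runs} of consecutive matched pairs sharing a common shift $i - j$, and call a run \emph{wrong-shift} if its shift disagrees with that of $T$ on the same received positions. Each misdecoded transmitted symbol then lies either inside a wrong-shift run of $L$ or at a position that $L$ leaves unmatched, so the misdecoding count is bounded by the total length $M$ of wrong-shift runs plus the number of unmatched transmitted positions (the latter is absorbed into the adversary's budget).

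The heart of the argument bounds $M$ via Definition~\ref{def:synCode}. A wrong-shift run aligns a factor $S[a, b)$ of $S$ against a received window that is also within small edit distance of the true source factor $S[a', b')$ with $a' \ne a$; by the triangle inequality $ED(S[a, b), S[a', b'))$ is controlled by the local insdel load plus the local disagreement of $L$ and $T$. But the $\eps$-synchronization property forces any two disjoint factors of $S$ to have edit distance at least $(1 - \eps)$ times their combined length, so a wrong-shift run of length $\ell$ either has source windows that overlap almost entirely (giving a tiny shift gap, and hence few misdecodings) or it must be short. Combining this with the observation that the sum of shift-gap magnitudes across run boundaries is $O(n)$ (each unit of shift change costs the adversary a unit of insdel error), and applying a Cauchy--Schwarz argument against the $(1-\eps)$-bound from Definition~\ref{def:synCode}, yields $M \leq O(n \sqrt{\eps})$.

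The main obstacle will be making that final charging quantitative: shift changes at run boundaries can partially cancel (an insertion followed by a deletion returns to the original shift), so one must carefully separate the ``net shift change'' from the ``absolute insdel cost'' per boundary while simultaneously tracking how much $\eps$-slack is consumed. I would formalize this by bounding $\sum_r \ell_r$ over the wrong-shift runs by the geometric mean of the total adversarial error budget and the per-run $\eps$-slack extracted from the synchronization property, which is precisely where the $\sqrt{\eps}$ factor emerges and in turn explains the symmetric $1/\sqrt{\eps}$ factor one pays in the DP runtime when one restricts to shift bands of the corresponding width.
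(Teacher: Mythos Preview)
This theorem is not proved in the present paper at all; it is quoted verbatim from~\cite{haeupler2017synchronization} (their Theorem~6.14) and used as a black box. So there is no ``paper's own proof'' to compare against here. That said, your sketch has two genuine gaps that would prevent it from going through as written.

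First, your runtime justification is backwards. Restricting the Wagner--Fischer DP to a diagonal band of width $\Theta(1/\sqrt{\eps})$ would \emph{reduce} the cell count to $O(n/\sqrt{\eps})$, not increase it to $O(n^2/\sqrt{\eps})$. The extra $1/\sqrt{\eps}$ factor in the stated bound does not come from banding a single LCS computation; in the source paper it comes from the structure of the global decoder (which is not a single plain LCS call), and your sketch does not account for it.

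Second, and more seriously, you invoke ``the $\eps$-synchronization property forces any two disjoint factors of $S$ to have edit distance at least $(1-\eps)$ times their combined length.'' That is precisely what Definition~\ref{def:synCode} does \emph{not} give you: it only constrains \emph{adjacent} pairs $S[i,j)$ and $S[j,k)$. Large edit distance for arbitrary non-adjacent factor pairs is the content of the long-distance property introduced in this paper (Definition~\ref{def:distSynchStr}), and a plain $\eps$-synchronization string need not satisfy it. Your wrong-shift runs compare $S[a,b)$ against the true source $S[a',b')$ with $a' \neq a$, and those two windows are in general neither adjacent nor nested, so you cannot apply the definition directly. The actual argument in~\cite{haeupler2017synchronization} routes through an intermediate combinatorial object (self-matchings): one shows that an $\eps$-synchronization string admits no large monotone non-identity self-matching, and the $\sqrt{\eps}$ loss appears exactly in that reduction from the adjacent-interval edit-distance guarantee to the global self-matching bound. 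Your Cauchy--Schwarz step is gesturing at the right order of magnitude, but without the self-matching lemma (or an equivalent replacement) the charging of wrong-shift run lengths against the synchronization property has no foundation.
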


\begin{theorem}[Theorem 4.1 from~\cite{haeupler2017synchronization}]\label{thm:mainECC}
Given a synchronization string $S$ over alphabet $\Sigma_S$ with an (efficient) decoding algorithm $\mathcal{D}_S$ guaranteeing at most $k$ misdecodings and decoding complexity $T_{\mathcal{D}_{S}}(n)$ and an (efficient) ECC $\mathcal{C}$ over alphabet $\Sigma_{\mathcal{C}}$ with rate $R_{\mathcal{C}}$, encoding complexity $T_{\mathcal{E}_{\mathcal{C}}}$, and decoding complexity $T_{\mathcal{D}_{\mathcal{C}}}$ that corrects up to $n\delta + 2k$ half-errors, one obtains an insdel code that can be (efficiently) decoded from up to $n\delta$ insertions and deletions. The rate of this code is at least
$$\frac{R_{\mathcal{C}}}{1 + \frac{\log \Sigma_S}{\log \Sigma_{\mathcal{C}}}}$$
The encoding complexity remains $T_{\mathcal{E}_{\mathcal{C}}}$, the decoding complexity is $T_{\mathcal{D}_{\mathcal{C}}} + T_{\mathcal{D}_{S}}(n)$ and the preprocessing complexity of constructing the code is the complexity of constructing $\mathcal{C}$ and $S$.
\end{theorem}

\global\def\ExtendedPrelimSync{\color{red}
To review Haeupler et. al.~\cite{haeupler2017synchronization, haeupler2017synchronization2:ARXIV} techniques more accurately, we start reviewing their basic definitions and formal problem statements. We being with the notion of string matching, originally introduced by Braverman et. al. \cite{braverman2017coding}.

\begin{definition}[String matching]
Suppose that $c$ \allowbreak and $c'$ are two strings in $\Sigma^*$, and suppose that $*$ is a symbol not in $\Sigma$. Next, suppose that there exist two strings $\tau_1$ and $\tau_2$ in $\left(\Sigma \cup \{*\}\right)^*$ such that $|\tau_1| = |\tau_2|$, $del\left(\tau_1\right) = c$, $del(\tau_2) = c'$, and $\tau_1[i] \approx \tau_2[i]$ for all $i \in \left\{1, \dots, |\tau_1|\right\}$. Here, $del$ is a function that deletes every $*$ in the input string and $a \approx b$ if $a = b$ or one of $a$ or $b$ is $*$. Then we say that $\tau = \left(\tau_1, \tau_2\right)$ is a \emph{string matching} between $c$ and $c'$ (denoted $\tau: c \to c'$). We furthermore denote with $sc\left(\tau_i\right)$ the number of $*$'s in $\tau_i$.
\end{definition}

Imagine Alice and Bob as two parties communicating over a channel suffering from up to $\delta$-fraction of adversarial insertions and deletions. Suppose Alice sends a string $S$ of length $n$ to Bob. On the other end of the communication, Bob will receive a distorted version of $S$ as adversary might have inserted or deleted a number of symbols. Note that adversary's actions can be characterized by a string matching $\tau$ from $S$ to the distorted version of $S$ received by Bob that we will denote by $S_\tau$.

A symbol which is sent by Alice and is received by Bob without being deleted by the adversary is called a \emph{successfully transmitted} symbol. More formally, $S_\tau[j]$ for $\tau=(\tau_1, \tau_2)$ is a successfully transmitted symbol if there exist a $k$ such that $|del(\tau_2[1,k])=j$ and $\tau_1[k] = \tau_2[k]$.

Assume that Alice and Bob both know string $S$ a priori. Bob runs an algorithm to determine the actual index of each of the symbols he receives, in other words, to guess which element of $S$ they correspond to. Such algorithm has to return an number in $[1, n]$ or ``I don't know'' for any symbol of $S_\tau$. We call such an algorithm an \emph{$(n, \delta)$-indexing algorithm}. A formal definition of such algorithm is defined as follows.

\begin{definition}[$(n,\delta)$-Indexing Algorithm] The pair $(S, \mathcal{D}_S)$ consisting of a string $S \in \Sigma^n$ and an algorithm $\mathcal{D}_S$ is called an $(n,\delta)$-indexing algorithm over alphabet $\Sigma$ if for any set of $n\delta$ insertions and deletions altering the string $S$ to a string $S_{\tau}$, the algorithm $\mathcal{D}_S(S_{\tau})$ outputs either $\top$ or an index between 1 and $n$ for every symbol in $S_{\tau}$.
\end{definition}

Ideally, a indexing algorithm is supposed to correctly figure out the indices of as many successfully transmitted symbols as possible. Therefore, Haeupler and Shahrasbi~\cite{haeupler2017synchronization} introduced the measure of \emph{misdecodings} to evaluate the quality of a $(n, \delta)$-indexing algorithm as the number of successfully transmitted symbols which are not decoded correctly. This includes outputting an incorrect guessed index or $\top$ for a successfully transmitted symbol. Formally, we say an $(n, \delta)$-indexing algorithm is guaranteed to have up to $k$ misdecodings if for any possible $n\delta$ or less adversarial insertions or deletions, the outcome of the decoding algorithm on the receiver side does not contain more than $k$ misdecodings.
An indexing algorithm is called to be \emph{streaming} if its output for a particular received symbol $S_\tau[j]$ depends only on symbols that have been received sooner than it, i.e., $S_\tau[1, j]$.

Haeupler and Shahrasbi~\cite{haeupler2017synchronization} introduce and discuss $\eps$-synchronization strings along with several decoding techniques that together form well-fitting candidates for indexing algorithm.

\begin{definition}[$\eps$-Synchronization String]\label{def:synCode}
String $S \in \Sigma^n$ is an $\eps$-synchronization string if for every $1 \leq i < j < k \leq n + 1$ we have that $ED\left(S[i, j),S[j, k)\right) > (1-\eps) (k-i)$. We call the set of prefixes of such a string an $\eps$-synchronization code. 
\end{definition}

Haeupler and Shahrasbi provide an efficient randomized construction of $\eps$-synchronization strings of arbitrary length over a constant-sized alphabet.

\begin{lemma}[From \cite{haeupler2017synchronization}]\label{lemma:FiniteSyncConstruction}
There exists a randomized algorithm which, for any $\eps >0$, constructs a $\eps$-synchronization string of length $n$ over an alphabet of size $O(\eps^{-4})$ in expected time $O(n^5)$.
\end{lemma}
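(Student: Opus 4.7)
The plan is a randomized construction via the Lovász Local Lemma (LLL), made algorithmic through Moser--Tardos resampling. I would sample each symbol $S[i]$ for $i = 1, \dots, n$ independently and uniformly from an alphabet $\Sigma$ of size $q = \Theta(\eps^{-4})$. For each triple $1 \le i < j < k \le n+1$, define the bad event $B_{i,j,k}$ as the violation $ED(S[i,j), S[j,k)) \le (1-\eps)(k-i)$; via the identity $ED(u,v) = |u|+|v| - 2\,LCS(u,v)$ this is precisely the event $LCS(S[i,j), S[j,k)) \ge \eps(k-i)/2$. The goal is to show that with positive probability none of these bad events occurs and then to locate such an assignment efficiently.

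The key probability estimate is a union bound over candidate common subsequences: the probability that two independent uniform strings of lengths $a$ and $b$ over $\Sigma$ share a common subsequence of length $c$ is at most $\binom{a}{c}\binom{b}{c} q^{-c}$. Plugging in $a+b = k-i$, $c = \eps(k-i)/2$ and using $\binom{m}{c} \le (em/c)^c$ yields
\[
\Pr[B_{i,j,k}] \;\le\; \left(\tfrac{e^2}{\eps^2\, q}\right)^{\eps(k-i)/2} \;=\; \rho^{\,\eps(k-i)/2}
\]
for some fixed $\rho < 1$ once $q = \Theta(\eps^{-4})$ is chosen with a sufficiently large hidden constant. Two events $B_{i,j,k}$ and $B_{i',j',k'}$ are dependent only when the scopes $[i,k-1]$ and $[i',k'-1]$ intersect. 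Grouping neighbors by the length scale $\ell' = k'-i'$, the count of intersecting triples at scale $\ell'$ is $O((k-i+\ell')\,\ell')$, and the geometric decay of $\rho^{\eps\ell'/2}$ dominates this polynomial growth. The asymmetric LLL with witness weights $x_{i,j,k} \propto \rho^{\,\eps(k-i)/4}$ then makes the condition $\Pr[B_{i,j,k}] \le x_{i,j,k}\prod_{B' \sim B_{i,j,k}}(1-x_{B'})$ routinely verifiable, proving existence.

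For the algorithm, I would apply Moser--Tardos: start from a random assignment and, while some $B_{i,j,k}$ is violated, pick one and resample the $k-i$ symbols in its scope. The standard analysis bounds the expected number of resamples by $\sum x_{i,j,k}/(1-x_{i,j,k})$; the geometric decay makes short-interval triples the dominant contribution and keeps the total $O(n)$. To bound wall-clock time, I would process the string left to right: at step $i$ one introduces $O(i^2)$ new triples ending at position $i$, each checkable in $O(i^2)$ time by the standard edit-distance dynamic program, and the LLL witness calculation gives $O(1)$ expected local resamples per position. Summing yields total expected work $\sum_{i=1}^{n} O(i^4) = O(n^5)$, matching the claim.

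The hard part will be simultaneously balancing three parameters: keeping the alphabet at $O(\eps^{-4})$ rather than something exponential in $1/\eps$, obtaining an LCS tail bound strong enough that the LLL neighbor sums converge despite the polynomial dependency growth, and ensuring the Moser--Tardos bookkeeping amortizes to $O(n^5)$ rather than a larger polynomial. A secondary subtlety is the base case: for $k-i \in \{2,3\}$ the LCS union bound is essentially vacuous, so the synchronization condition at these small scales reduces to forbidding short repeats among nearby symbols, which follows directly from the alphabet being of size $\omega(1)$.
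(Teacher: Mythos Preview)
The paper does not prove this lemma itself; it is quoted verbatim as a result from the prior work \cite{haeupler2017synchronization} (and in fact sits inside an excluded extended-preliminaries block), so there is no in-paper proof to compare against. Your approach---independent uniform sampling over a $\Theta(\eps^{-4})$ alphabet, asymmetric LLL over the bad events $B_{i,j,k}$ with the LCS union bound giving geometric decay in $k-i$, and Moser--Tardos resampling with $O(n^4)$ verification cost per step for an overall $O(n^5)$---is exactly the construction of the cited paper, and your sketch is correct.
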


They prove the following useful property which leads them to an indexing algorithm.

\begin{lemma}[From \cite{haeupler2017synchronization}]
Let $S \in \Sigma^n$ be an $\eps$-synchronization string and let $S_{\tau}[1,j]$ be a prefix of $S_{\tau}$. Then there exists at most one index $i \in [n]$ such that the suffix distance between $S_{\tau}[1,j]$ and $S[1,i]$, denoted by $SD(S_{\tau}[1,j],S[1,i])$ is at most $1-\eps$.
\end{lemma}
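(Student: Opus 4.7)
The plan is to prove uniqueness by contradiction. Suppose there exist two distinct candidates $i_1 < i_2$ in $[n]$ with $SD(S_\tau[1,j], S[1,i_1]) \leq 1-\eps$ and $SD(S_\tau[1,j], S[1,i_2]) \leq 1-\eps$. The goal is to contradict the defining property of $\eps$-synchronization strings applied to the three indices $i_1-k < i_1 < i_2$, where $k := i_2 - i_1$.

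The key alignment step is to evaluate each suffix-distance hypothesis at the specific suffix length $k$. Because $SD$ is a maximum over all suffix lengths of a normalized edit distance, the global bound $SD \le 1-\eps$ forces the per-length inequality at this particular $k$. The $k$-long suffix of $S[1,i_2]$ is exactly $S(i_1,i_2]$, and the $k$-long suffix of $S[1,i_1]$ is $S(i_1-k,i_1]$, while both are being compared to the same string $S_\tau(j-k,j]$. Thus the two hypotheses yield
\[
ED\bigl(S_\tau(j-k,j],\,S(i_1-k,i_1]\bigr) \;\le\; (1-\eps)\,k, \qquad
ED\bigl(S_\tau(j-k,j],\,S(i_1,i_2]\bigr) \;\le\; (1-\eps)\,k,
\]
after unpacking the normalization used by $SD$ in \cite{haeupler2017synchronization}.

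Next I would invoke the triangle inequality for edit distance with $S_\tau(j-k,j]$ as the intermediate string, obtaining
\[
ED\bigl(S(i_1-k,i_1],\,S(i_1,i_2]\bigr) \;\le\; 2k(1-\eps).
\]
On the other hand, $S(i_1-k,i_1]$ and $S(i_1,i_2]$ are adjacent factors of $S$ of total length $2k$, so applying Definition~\ref{def:synCode} to the triple $i_1-k < i_1 < i_2$ gives the strict lower bound $ED(S(i_1-k,i_1],S(i_1,i_2]) > (1-\eps)\cdot 2k$. This directly contradicts the upper bound above, finishing the argument.

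The main technical point to check carefully is the boundary regime $i_1 < k$, in which the "length-$k$ suffix" of $S[1,i_1]$ overruns the left end of $S$. The cleanest way to handle this is to use the paper's convention of padding with the out-of-alphabet symbol $\bot$ (so $S(i_1-k,i_1]=\bot^{k-i_1}\cdot S[1,i_1]$) and to apply the synchronization property after the same padding, which keeps both the edit-distance triangle inequality and the strict lower bound intact with identical arithmetic. A secondary point to verify is that the particular normalization of $SD$ used in \cite{haeupler2017synchronization} is exactly the one that makes the $k(1-\eps)+k(1-\eps)$ triangle estimate tight against the synchronization string's $2k(1-\eps)$ lower bound; with any looser scaling the argument degenerates, which is why the threshold in the lemma is stated as $1-\eps$ rather than something larger.
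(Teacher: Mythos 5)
The first thing to note is that this paper never proves the lemma at all---it is quoted from \cite{haeupler2017synchronization}---so the only question is whether your argument proves the cited statement, and it does not, because it quietly substitutes the wrong notion of suffix distance. The $SD$ in this lemma is the matching-based suffix distance of \cite{haeupler2017synchronization} (which is exactly why the same preliminaries block introduces string matchings): the minimum over string matchings $\tau=(\tau_1,\tau_2)$ between the two strings of the maximum over $k$ of the number of $*$ symbols in the last $k$ positions of $\tau_1$ plus those of $\tau_2$, divided by $k$; in particular $SD\le 1$. It is not $\max_k ED(\cdot,\cdot)/k$ over equal-length suffixes, and the only suffix distance this paper itself defines, $RSD$, is normalized by $2k$. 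Consequently your key ``unpacking'' step---that $SD(S_\tau[1,j],S[1,i])\le 1-\eps$ forces $ED(S_\tau(j-k,j],S(i-k,i])\le (1-\eps)k$ at the chosen scale $k$---is unjustified and in fact false for the actual $SD$: if, say, the channel deletes a burst of $(1-\eps)m$ symbols ending $\eps m$ positions before the current point, then $SD$ between the received prefix and the true sent prefix is about $1-\eps$, yet the edit distance of the equal-length suffixes at scale $k\approx m$ is close to $2k$, since the compared suffixes are then essentially disjoint nearby factors of $S$, which the $\eps$-synchronization property forces far apart. (Under the $RSD$ normalization the unpacking only gives $ED\le 2(1-\eps)k$, the triangle inequality gives $4(1-\eps)k$, and there is no contradiction with the synchronization bound $2(1-\eps)k$; that route only proves uniqueness at threshold $(1-\eps)/2$.)

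The issue is not mere bookkeeping: because the true $SD$ is bounded by $1$, no argument of your shape---two ``distance at most $1-\eps$'' hypotheses merged through one common received suffix by a triangle inequality---can reach the stated threshold, since such a template only rules out two candidates within half the pairwise separation of prefixes, i.e.\ $(1-\eps)/2$. A correct proof has to manipulate the two string matchings themselves: restrict the matching with $S[1,i_2]$ to a window covering $S(i_1,i_2]$, restrict the matching with $S[1,i_1]$ to a compatible window of the received string, compose them to extract a monotone common subsequence between $S(i_1,i_2]$ and some suffix $S(i_1-b,i_1]$ whose length $b$ is in general different from $k=i_2-i_1$, and then play the suffix error-density bounds (at the relevant scales, not just a single one) against the synchronization bound $ED(S(i_1-b,i_1],S(i_1,i_2])>(1-\eps)(b+k)$. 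Your high-level instinct---contradicting the $\eps$-synchronization property on adjacent intervals ending at $i_1$ and $i_2$, with $\bot$-padding handling the boundary case---is the right one, but the ``secondary point'' you deferred about the normalization of $SD$ is actually the crux, and with the correct definition the proof as written does not go through.
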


This lemma suggests a simple $(n,\delta)$-indexing algorithm given an input prefix $S_{\tau}[1,j]$: Use dynamic programming to search over all prefixes of $S$ for the one with the smallest suffix distance from $S_{\tau}[1,j]$. Haeupler and Shahrasbi present a dynamic programming algorithm which is efficient and results in a small number of misdecodings, and described in the following theorem.

\begin{theorem}[From \cite{haeupler2017synchronization}]\label{thm:RSPDmisdecodings}
Let $S \in \Sigma^n$ be an $\eps$-synchronization string that is sent over an insertion-deletion channel with a $\delta$ fraction of insertions and deletions. There exists a streaming $(n,\delta)$-indexing algorithm that returns a solution with $\frac{c_i}{1-\eps} + \frac{c_d\eps}{1-\eps}$ misdecodings. The algorithm runs in time $O(n^5)$, spending $O(n^4)$ on each received symbol.
\end{theorem}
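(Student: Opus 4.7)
The natural algorithm to propose is the \emph{suffix-distance minimizer}: upon receiving $S_\tau[1,j]$, compute $\hat\imath(j) = \arg\min_{i \in [n]} RSD(S_\tau[1,j], S[1,i])$ and output it whenever that minimum is at most $1-\eps$; otherwise output $\top$. By the uniqueness property already established (at most one index meets the $1-\eps$ threshold against any fixed prefix of $S_\tau$) the rule is well-defined, and streaming holds trivially since only $S_\tau[1,j]$ is inspected.

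For the misdecoding bound, let $i^\star(j)$ denote the true source position in $S$ of a successfully transmitted $S_\tau[j]$. By the uniqueness lemma, any misdecoding forces $RSD(S_\tau[1,j], S[1,i^\star(j)]) > 1-\eps$, so it suffices to bound
$$M := \left|\{ j : S_\tau[j] \text{ is successfully transmitted and } RSD(S_\tau[1,j], S[1,i^\star(j)]) > 1-\eps \}\right|.$$
For each such $j$ fix a witnessing suffix length $k(j)$ for which the contributing ratio $ED\bigl(S_\tau(j-k(j), j], S(i^\star(j)-k'(j), i^\star(j)]\bigr) / (2k(j)) > 1-\eps$, where $k'(j)$ absorbs the net insertion surplus over the window. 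The crucial observation is that the numerator is at most twice the number of channel operations lying strictly inside that window, so at least $(1-\eps)\,k(j)$ insertions and deletions lie near the misdecoded symbol. A charging argument then assigns each misdecoded $j$ to the errors in its witness window, distinguishing insertions (which introduce new positions into $S_\tau$ and can fully account for a misdecoding) from deletions (whose effect is moderated because they additionally contract the effective window length, so only an $\eps$ fraction of them can generate fresh misdecoding witnesses). Double counting then produces the asymmetric bound $M \leq \frac{c_i}{1-\eps} + \frac{c_d \eps}{1-\eps}$.

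For runtime, on receiving $S_\tau[j]$ one must evaluate $RSD(S_\tau[1,j], S[1,i])$ for every $i \in [n]$. Each evaluation requires the suffix edit distances between $S_\tau[1,j]$ and $S[1,i]$ at every length $k$, all of which can be read off from a single $O(n)\times O(n)$ edit-distance DP table between $S[1,i]$ and $S_\tau[1,j]$ in $O(n^2)$ time. Summing over $i \in [n]$ gives $O(n^3)$ per incoming symbol with aggressive reuse, or the advertised $O(n^4)$ if each step is recomputed from scratch, and across a stream of $n + c_i = O(n)$ received symbols this totals $O(n^5)$.

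The main obstacle is extracting the precise coefficients $\tfrac{1}{1-\eps}$ and $\tfrac{\eps}{1-\eps}$ rather than a weaker symmetric $O\bigl((c_i+c_d)/(1-\eps)\bigr)$ bound. The asymmetry comes entirely from the charging step: insertions and deletions both grow the suffix edit distance, but deletions simultaneously shrink the effective suffix length on the $S_\tau$ side, and turning this qualitative distinction into the clean $\eps / (1-\eps)$ factor requires careful bookkeeping of window lengths and of how multiple misdecoded indices can share underlying channel operations.
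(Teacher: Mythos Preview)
This theorem is not proved in the present paper at all: it is quoted verbatim from \cite{haeupler2017synchronization} as background in the preliminaries section, with no proof given here. So there is no ``paper's own proof'' to compare against.

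On the merits of your attempt: the algorithm you propose (the minimum-RSD decoder) is indeed the right one, and your runtime accounting is fine. But the misdecoding bound is not actually proved. You assert that a charging argument yields the asymmetric $\frac{c_i}{1-\eps} + \frac{c_d\eps}{1-\eps}$ bound and then, in your final paragraph, explicitly concede that ``turning this qualitative distinction into the clean $\eps/(1-\eps)$ factor requires careful bookkeeping'' that you have not supplied. That bookkeeping is the entire content of the theorem. Concretely: your claim that ``only an $\eps$ fraction of [deletions] can generate fresh misdecoding witnesses'' is stated without justification, and the double-counting you invoke is never carried out --- you do not specify how overlapping witness windows share errors, nor why each insertion is charged at most $\tfrac{1}{1-\eps}$ times while each deletion is charged at most $\tfrac{\eps}{1-\eps}$ times. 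The actual argument in \cite{haeupler2017synchronization} proceeds via a relative-suffix-error-density notion and an amortized/potential analysis over the stream of successfully transmitted symbols; your sketch gestures at this structure but does not execute it. As written, what you have is a correct algorithm, a correct runtime, and a plausibility argument for the misdecoding bound rather than a proof.
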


We now quickly review main theorems of Haeupler et. al.~\cite{haeupler2017synchronization, haeupler2017synchronization2:ARXIV} on how indexing algorithms can be of use to design insertion-deletion codes and even more strongly, simulate hamming-type channels over given insertion-deletion channels.

\begin{theorem}[Theorem 4.1 from~\cite{haeupler2017synchronization}]\label{thm:mainECC}
Given a synchronization string $S$ over alphabet $\Sigma_S$ with an (efficient) decoding algorithm $\mathcal{D}_S$ guaranteeing at most $k$ misdecodings and decoding complexity $T_{\mathcal{D}_{S}}(n)$ and an (efficient) ECC $\mathcal{C}$ over alphabet $\Sigma_{\mathcal{C}}$ with rate $R_{\mathcal{C}}$, encoding complexity $T_{\mathcal{E}_{\mathcal{C}}}$, and decoding complexity $T_{\mathcal{D}_{\mathcal{C}}}$ that corrects up to $n\delta + 2k$ half-errors, one obtains an insdel code that can be (efficiently) decoded from up to $n\delta$ insertions and deletions. The rate of this code is at least
$$\frac{R_{\mathcal{C}}}{1 + \frac{\log \Sigma_S}{\log \Sigma_{\mathcal{C}}}}$$
The encoding complexity remains $T_{\mathcal{E}_{\mathcal{C}}}$, the decoding complexity is $T_{\mathcal{D}_{\mathcal{C}}} + T_{\mathcal{D}_{S}}(n)$ and the preprocessing complexity of constructing the code is the complexity of constructing $\mathcal{C}$ and $S$.
\end{theorem}

Further, Haeupler et. al.~\cite{haeupler2017synchronization2:ARXIV} showed that indexing algorithms that satisfy streaming property can simulate ordinary corruption channels over given insertion deletion channels. In particular, they have shown that any streaming decoding algorithm that forms a solution to $(n, \delta)$-indexing algorithm with a $\eps$-synchronization string guaranteeing $O\left(\frac{n\delta}{1-\eps}\right)$ many misdecodings can be used to get following channel simulation results. We refer readers to~\cite{haeupler2017synchronization2:ARXIV} for the details of such simulations.

\begin{theorem}\label{thm:simulationRecap}(Based on Theorems 3.3, 3.5, 3.11, and 3.13 of~\cite{haeupler2017synchronization2:ARXIV})
\begin{enumerate}
\item[(a)] Suppose that $n$ rounds of a one-way/interactive insertion-deletion channel over an alphabet $\Sigma$ with a $\delta$ fraction of insertions and deletions are given. Using an $\eps$-synchronization string over an alphabet $\Sigma_{syn}$, it is possible to simulate $n\left(1-O_\eps(\delta)\right)$ rounds of a one-way/interactive corruption channel over $\Sigma_{sim}$ with at most $O_\eps\left(n\delta\right)$ symbols corrupted so long as $|\Sigma_{sim}| \times |\Sigma_{syn}| \le |\Sigma|$. 
\item[(b)] Suppose that $n$ rounds of a binary one-way/interactive insertion-deletion channel with a $\delta$ fraction of insertions and deletions are given. It is possible to simulate 
$n(1-\Theta( \sqrt{\delta\log(1/\delta)}))$
 rounds of a binary one-way/interactive corruption channel 
 with $\Theta(\sqrt{\delta\log(1/\delta)})$ fraction of corruption errors between two parties over the given channel.
\end{enumerate}
Having the synchronization string utilized in the simulation as a pre-processed component, upon sending each symbol, the simulations introduced above take $O(1)$ for sending/starting party of one-way/interactive communications respectively. Further, the simulation spends polynomial time in terms of $n$ upon arrival of each symbol on the other side.
\end{theorem}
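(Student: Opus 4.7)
The plan is to treat this as essentially a reformulation of Theorems 3.3, 3.5, 3.11, and 3.13 of~\cite{haeupler2017synchronization2:ARXIV}, specialized to the streaming $(n,\delta)$-indexing algorithm recalled in Theorem~\ref{thm:RSPDmisdecodings}. Since the inputs to those theorems are any streaming indexing scheme together with a misdecoding bound, I would mainly need to verify that the hypotheses line up and then invoke them. The bookkeeping is a straightforward unpacking of the simulation wrappers from~\cite{haeupler2017synchronization2:ARXIV}, so the real contribution of this statement is to package them with the synchronization-string guarantees already reviewed in the prelim section.

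For part (a), I would exhibit the simulation protocol explicitly. Fix an $\eps$-synchronization string $S \in \Sigma_{syn}^n$ shared by both parties. To transmit the $i^{th}$ simulated symbol $x_i \in \Sigma_{sim}$, the sender transmits the pair $(x_i,S[i])$ over the insdel channel, which fits in $\Sigma$ because $|\Sigma_{sim}|\cdot|\Sigma_{syn}|\le|\Sigma|$. The receiver strips the $\Sigma_{syn}$-coordinate from each arriving pair, feeds the resulting stream of synchronization symbols into the streaming indexing algorithm, and for each decoded index $j$ places the corresponding $\Sigma_{sim}$-coordinate into slot $j$ of the simulated stream (declaring an erasure on output $\top$ or on a collision). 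By the definition of a misdecoding, any simulated slot $j$ whose content is incorrect or erased corresponds either to a deletion of the $j^{th}$ transmitted symbol, to an insertion absorbed at $j$, or to a misdecoded but successfully transmitted symbol. By Theorem~\ref{thm:RSPDmisdecodings} the total count of these events is $O_\eps(n\delta)$, and at least $n(1-O_\eps(\delta))$ slots remain uncorrupted. The interactive case is handled by running the protocol symmetrically at both endpoints as in~\cite{haeupler2017synchronization2:ARXIV}, with the insdel budget split across the two directions.

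For part (b), the binary setting precludes appending a synchronization coordinate to each bit, so the plan is to fall back to the block-encoding reduction of~\cite{haeupler2017synchronization2:ARXIV}. I would group the $n$ bits into blocks of length $\ell$, treat each block as one symbol over $\{0,1\}^\ell$, and apply part (a) at the block level. Standard counting shows that $\delta n$ bit-level insdel errors induce at most $O(\delta n/\ell)$ block-level insdels plus $O(\delta n)$ block-level substitutions. Balancing the alphabet overhead $\log|\Sigma_{syn}|/\ell$ against this block-corruption fraction by choosing $\ell=\Theta(\log(1/\delta))$ yields both the stated simulated rate $1-\Theta(\sqrt{\delta\log(1/\delta)})$ and the claimed corruption fraction $\Theta(\sqrt{\delta\log(1/\delta)})$; I would import this calculation verbatim from the cited theorems rather than redoing it.

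The main obstacle, and really the only point that requires any checking beyond quoting the existing statements, is the per-symbol complexity accounting in the final paragraph of the theorem. For the sender the $O(1)$ bound is immediate because producing $(x_i,S[i])$ is a single lookup into the preprocessed synchronization string $S$. For the receiver, the per-symbol runtime of the simulation equals the per-symbol runtime of the streaming indexing algorithm, which Theorem~\ref{thm:RSPDmisdecodings} bounds by a polynomial in $n$; in the interactive setting each party maintains its own indexing data structure for its incoming stream and the two compose without interaction, preserving this per-round bound. Once the lookup and per-round composition are verified, both parts follow by direct invocation of the four cited theorems of~\cite{haeupler2017synchronization2:ARXIV}, with no new technical ingredients beyond the streaming decoder.
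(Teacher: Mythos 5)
The paper does not actually prove this statement: it is presented purely as a recap of Theorems 3.3, 3.5, 3.11, and 3.13 of~\cite{haeupler2017synchronization2:ARXIV}, with readers explicitly referred to that paper for the simulation details, and the only substantive content of the surrounding text is that those theorems consume a streaming $(n,\delta)$-indexing algorithm with $O\left(\frac{n\delta}{1-\eps}\right)$ misdecodings, as supplied by Theorem~\ref{thm:RSPDmisdecodings}. Your proposal does essentially the same thing --- verify that the streaming-indexing hypothesis lines up and then invoke the four cited theorems --- so at the level at which the paper treats this statement, your approach matches.

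Where you go beyond citation and sketch the mechanics, two points would not survive if the citation were actually replaced by a proof. First, a channel simulation must deliver the simulated rounds online, because the outer (possibly interactive) protocol consumes them in real time; ``placing the $\Sigma_{sim}$-coordinate into slot $j$'' is an offline description, and the online bookkeeping --- what to emit when the decoded index stalls or jumps, and how the two parties keep their turn structure synchronized under insertions and deletions in the interactive case --- is precisely the nontrivial content of the cited wrappers, not just ``running the protocol symmetrically with the insdel budget split.'' Second, your counting for part (b), that $\delta n$ bit-level insdels induce $O(\delta n/\ell)$ block-level insdels plus $O(\delta n)$ block-level substitutions, is not correct under naive parsing into length-$\ell$ blocks: a single bit deletion misaligns every subsequent block boundary, and the cited binary simulation needs a more careful block-resynchronization argument to obtain the $\Theta\left(\sqrt{\delta\log(1/\delta)}\right)$ trade-off. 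Since you state you would import these calculations verbatim from~\cite{haeupler2017synchronization2:ARXIV}, this is acceptable for the purpose the theorem serves here, but the sketch should not be mistaken for a self-contained reconstruction of the cited proofs.
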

}
}
\fullOnly{\PrelimsSynchStrings}
\section{Highly Explicit Constructions of Long-Distance and Infinite $\eps$-Synchronization Strings}\label{sec:sync_construction}
We start this section by introducing a generalized notion of synchronization strings in Section~\ref{sec:LongDistSync} and then provide a deterministic efficient construction for them in Section~\ref{sec:LLLConstruction}. In Section~\ref{sec:BoostingStepII}, we provide a boosting step which speeds up the construction to linear time in Theorem~\ref{thm:linearTimeConstruction}. In Section~\ref{sec:BoostingStepI}, we use the linear time construction to obtain a linear-time high-distance insdel code (Theorem~\ref{thm:insdelCode}) and then use another boosting step to obtain a highly-explicit linear-time construction for long-distance synchronization strings in Theorem~\ref{thm:FiniteLongDistConstructionNEW}. We provide similar construction for infinite synchronization strings in Section~\ref{sec:infiniteConstruction}. A pictorial representation of the flow of theorems and lemmas in this section can be found in Figure~\ref{fig:flow}.
\begin{figure}
\centering
\includegraphics[scale=.38]{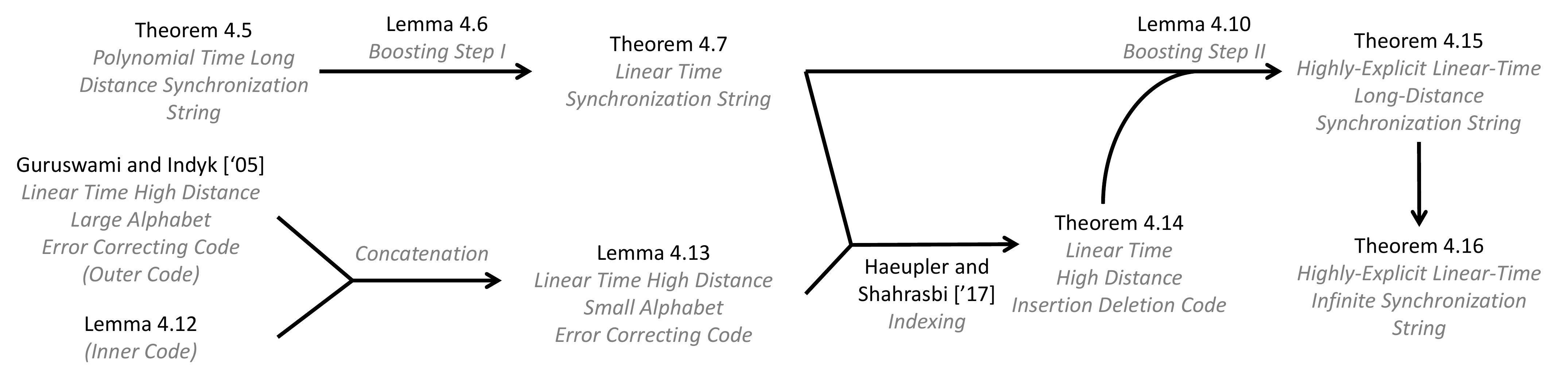}
\caption{Schematic flow of Theorems and Lemmas of Section~\ref{sec:sync_construction}}
\label{fig:flow}
\end{figure}

\subsection{Long-Distance Synchronization Strings}\label{sec:LongDistSync}
The existence of synchronization strings is proven in~\cite{haeupler2017synchronization} using an argument based on Lov\'{a}sz local lemma. This lead to an efficient randomized construction for synchronization strings which cannot be easily derandomized. Instead, the authors introduced the weaker notion of self-matching strings and gave a deterministic construction for them. Interestingly, in this paper we introduce a revised notion, denoted by \emph{$f(l)$-distance $\eps$-synchronization strings}, which generalizes $\eps$-synchronization strings and allows for a deterministic construction.

Note that the synchronization string property poses a requirement on the edit distance of neighboring substrings. $f(l)$-distance $\eps$-synchronization string property extends this requirement to any pair of intervals that are nearby. More formally, any two intervals of aggregated length $l$ that are of distance $f(l)$ or less have to satisfy the edit distance property in this generalized notion.

\begin{definition}[$f(l)$-distance $\eps$-synchronization string]\label{def:distSynchStr}
String $S \in \Sigma^n$ is an $f(l)$-distance $\eps$-synchronization string if for every $1 \leq i<j\leq i'<j' \leq n + 1$ we have that $ED\left(S[i,j),S[i',j')\right) > (1-\eps) (k-i)$ if $i'-j \leq f(l)$ where $l=j+j'-i-i'$.
\end{definition}

It is noteworthy to mention that the constant function $f(l)=0$ gives the original $\eps$-synchronization strings. Haeupler and Shahrasbi~\cite{haeupler2017synchronization} have studied the existence and construction of synchronization strings for this case. In particular, they have shown that arbitrarily long $\eps$-synchronization strings exist over an alphabet that is polynomially large in terms of $\eps^{-1}$. 
Besides $f(l)=0$, there are other several other functions that might be of interest in this context.

One can show that, as we do in Appendix~\ref{app:alphabetSizeVSDistance}, that for any polynomial function $f(l)$, arbitrarily long $f(l)$-distance $\eps$-synchronization strings exist over alphabet sizes that are polynomially large in terms of $\eps^{-1}$. Also, for exponential functions, these strings exist over exponentially large alphabets in terms of $\eps^{-1}$ but not over sub-exponential alphabet sizes. Finally, if function $f$ is super-exponential, $f(l)$-distance $\eps$-synchronization strings do not exist over any alphabet whose size is independent of $n$.

While studying existence, construction, and alphabet sizes of $f(l)$-distance $\eps$-synchronization strings might be of interest by its own, we will show that having synchronization string edit distance guarantee for pairs of intervals that are exponentially far in terms of their aggregated length is of significant interest as it leads to improvements over applications of ordinary synchronization strings described in~\cite{haeupler2017synchronization, haeupler2017synchronization2:ARXIV} from several aspects. Even though distance function $f(l)=c^l$ provides such property, throughout the rest of this paper, we will focus on a variant of it, i.e., $f(l) = n \cdot \mathbbm{1}_{l>c\log n}$ which allows polynomial-sized alphabet.
$\mathbbm{1}_{l>c\log n}$ is the indicator function for $l>c\log n$, i.e., one if $l>c\log n$ and zero otherwise

To compare distance functions $f(l) = c^l$ and $f(l) = n \cdot \mathbbm{1}_{l>c\log n}$, note that the first one allows intervals to be exponentially far away in their total length. In particular, intervals of length $l>c\log n$ or larger can be arbitrarily far away. The second function only asks for the guarantee over large intervals and does not strengthen the $\eps$-synchronization property for smaller intervals. We refer to the later as \emph{$c$-long-distance $\eps$-synchronization string} property.

\begin{definition}[$c$-long-distance $\eps$-synchronization strings]
We call $n \cdot \mathbbm{1}_{l>c\log n}$-distance $\eps$-synchronization strings \emph{$c$-long-distance $\eps$-synchronization strings}.
\end{definition}

\subsection{Polynomial Time Construction of Long-Distance Synchronization Strings}\label{sec:LLLConstruction}

An LLL-based proof for existence of ordinary synchronization strings has benn provided by~\cite{haeupler2017synchronization}.
Here we provide a similar technique along with the deterministic algorithm for Lov\'{a}sz local lemma from Chandrasekaran et al.\cite{chandrasekaran2013deterministic} to prove the existence and give a deterministic polynomial-time construction of strings that satisfy this quality over an alphabet of size $\eps^{-O(1)}$. 

Before giving this proof right away, we first show a property of the these strings which allows us to simplify the proof and, more importantly, get a deterministic algorithm using deterministic algorithms for Lov\'{a}sz local lemma from Chandrasekaran et al.\cite{chandrasekaran2013deterministic}. 

\begin{lemma}\label{lem:longdistancereduction}
If $S$ is a string and there are two intervals $i_1<j_1\leq i_2 < j_2$ of total length $l = j_1 - i_1 + j_2 - i_2$ and $ED(S[i_1,j_1), S[i_2,j_2)) \le (1-\eps) l$ then there also exists intervals $i_1 \leq i'_1<j'_1\leq i'_2 < j'_2 \leq i_2$ of total length $l' \in \{\ceil{l/2}-1,\ceil{l/2}, \ceil{l/2}+1\}$ with $ED(S[i'_1,j'_1),S[i'_2,j'_2)) \leq (1-\eps) l'$.
\end{lemma}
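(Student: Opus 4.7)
The plan is to bisect an optimal alignment. Let $\pi$ be an optimal edit-distance alignment of $S[i_1, j_1)$ to $S[i_2, j_2)$, viewed as a monotone lattice path from $(i_1, i_2)$ to $(j_1, j_2)$ whose unit steps are match $(+1, +1)$, deletion $(+1, 0)$, or insertion $(0, +1)$; its cost equals the number of deletion and insertion steps, which is $ED(S[i_1,j_1), S[i_2,j_2)) \leq (1-\eps) l$. Cutting $\pi$ at any lattice vertex $(m_1, m_2)$ splits it into a prefix that is a valid alignment of $\bigl(S[i_1, m_1), S[i_2, m_2)\bigr)$ and a suffix that is a valid alignment of $\bigl(S[m_1, j_1), S[m_2, j_2)\bigr)$; writing $c_1, c_2$ for the costs of these sub-alignments gives $c_1 + c_2 = \mathrm{cost}(\pi) \leq (1-\eps) l$.

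Next, I want the cut to roughly equalize total lengths. Set $s(m_1, m_2) = (m_1 - i_1) + (m_2 - i_2)$, tracking $\ell_1$-progress along the path; each step of $\pi$ increases $s$ by $1$ or $2$, so $s$ ranges from $0$ to $l$ without ever skipping two consecutive integers. Hence the first lattice vertex on $\pi$ with $s \geq \ceil{l/2} - 1$ lands at $s \in \{\ceil{l/2} - 1, \ceil{l/2}\}$. Call this vertex $(m_1, m_2)$, and set $l_1^\star = s(m_1,m_2)$ and $l_2^\star = l - l_1^\star$. A quick parity check shows both $l_1^\star, l_2^\star \in \{\ceil{l/2} - 1, \ceil{l/2}, \ceil{l/2} + 1\}$.

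A one-line averaging argument now finishes things. If both sub-alignments exceeded their thresholds, i.e.\ $c_1 > (1-\eps) l_1^\star$ and $c_2 > (1-\eps) l_2^\star$, then summing would give $c_1 + c_2 > (1-\eps)(l_1^\star + l_2^\star) = (1-\eps) l$, contradicting $c_1 + c_2 \leq (1-\eps) l$. So at least one sub-pair, say the $t$-th, has edit distance at most $c_t \leq (1-\eps) l_t^\star$ with $l_t^\star \in \{\ceil{l/2} - 1, \ceil{l/2}, \ceil{l/2} + 1\}$. Taking $(i'_1, j'_1, i'_2, j'_2)$ to be the endpoints of that sub-pair yields the claimed intervals; the ordering $i_1 \leq i'_1 < j'_1 \leq i'_2 < j'_2$ follows from the original separation $j_1 \leq i_2$ together with $i_1 \leq m_1 \leq j_1$ and $i_2 \leq m_2 \leq j_2$.

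The only technical point worth isolating is the ``no double-skip'' observation used to locate a prefix of $\pi$ with $s \in \{\ceil{l/2} - 1, \ceil{l/2}\}$; everything else---cost additivity of a split alignment, the averaging contradiction, and the ordering of the resulting endpoints---is elementary once that hitting property is in hand.
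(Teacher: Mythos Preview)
Your proof is correct and follows essentially the same bisection-plus-averaging strategy as the paper. The only cosmetic difference is that the paper phrases the argument in terms of a monotone matching (the LCS) of size at least $\eps l/2$ and averages on matching density, whereas you work directly with the cost of an optimal alignment and average on cost; since $\mathrm{cost} = l - 2\cdot|\text{matching}|$, the two viewpoints are equivalent and yield the same split point and the same conclusion.
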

\global\def\ProofOfLemLongdistancereduction{
\shortOnly{\subsection{Proof of Lemma~\ref{lem:longdistancereduction}}}
\begin{proof}
As $ED(S[i_1,j_1),[i_2,j_2)) \le (1-\eps) l$, there has to be a monotone matching 
$M = \{(a_1,b_1), \cdots, (a_m,b_m)\}$
 from $S[i_1,j_1)$ to $S[i_2,j_2)$ of size $m \ge \frac{\eps l}{2}$. Let $1\le i\le m$ be the largest number such that $|S[i_1,a_i]| + |S[i_2,b_i]| \le \ceil{l/2}$. It is easy to verify that there are integers $a_i < k_1 \le a_{i+1}$ and $b_i < k_2 \le b_{i+1}$ such that $|S[i_1, k_1)| + |S[i_2, k_2)| \in \{\ceil{l/2} - 1, \ceil{l/2}\}$.
 
Therefore, we can split the pair of intervals $\left(S[i_1,j_1), S[i_2,j_2)\right)$ into two pairs of intervals $\left(S[i_1,k_1), S[i_2,k_2)\right)$ and $\left(S[k_1,j_1), S[k_2,j_2)\right)$ such that each pair of the matching $M$ falls into exactly one of these pairs. Hence, in at least one of those pairs, the size of the matching is larger than $\frac{\eps}{2}$ times the total length.
This gives that the edit distance of those pairs is less than $1-\eps$ and finishes the proof.
\end{proof}
}\fullOnly{\ProofOfLemLongdistancereduction}

Lemma~\ref{lem:longdistancereduction} shows that if there is a pair of intervals of total length $l$ that have small relative edit distance, we can find a pair of intervals of size 
$\{\ceil{l/2}-1,\ceil{l/2}, \ceil{l/2}+1\}$ which have small relative edit distance as well. Now, let us consider a string $S$ with a pair of intervals that violate the $c$-long distance $\eps$-synchronization property. If the total length of the intervals exceed $2c \log  n$, using Lemma~\ref{lem:longdistancereduction} we can find another pair of intervals of almost half the total length which still violate the $c$-long distance $\eps$-synchronization property. Note that as their total length is longer than $c\log n$, we do not worry about the distance of those intervals. Repeating this procedure, we can eventually find a pair of intervals of a total length between $c\log n$ and $2c\log n$ that violate the $c$-long distance $\eps$-synchronization property. More formally, we can derive the following statement by Lemma~\ref{lem:longdistancereduction}.

\begin{corollary}\label{cor:shortDistanceSufficient}
If $S$ is a string which satisfies the $c$-long-distance $\eps$-synchronization property for any two non-adjacent intervals of total length $2c\log n$ or less, then it satisfies the property for all pairs of non-adjacent intervals.
\end{corollary}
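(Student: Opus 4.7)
The plan is to prove the corollary by contrapositive via a halving argument driven entirely by Lemma~\ref{lem:longdistancereduction}. Suppose there exists a non-adjacent pair of intervals $(S[i_1,j_1),S[i_2,j_2))$ of total length $l$ that violates the $c$-long-distance $\eps$-synchronization property. Because $f(l)=0$ for $l\leq c\log n$ renders the property vacuous on non-adjacent pairs in that regime, any such violating pair must already have $l>c\log n$. If additionally $l\leq 2c\log n$, the hypothesis is directly contradicted, so we may assume the minimum-length violation has $l>2c\log n$ and work toward a contradiction.

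The next step is to apply Lemma~\ref{lem:longdistancereduction} to this pair, producing a new pair of intervals of total length $l'\in\{\ceil{l/2}-1,\ceil{l/2},\ceil{l/2}+1\}$ whose relative edit distance remains at most $1-\eps$. A brief inspection of the proof of that lemma (rather than its written statement) shows that the new pair is obtained by cutting the monotone matching at a point strictly inside one of the two original intervals and keeping either the left or the right restriction; so the two resulting sub-intervals inherit non-adjacency from $(S[i_1,j_1),S[i_2,j_2))$. Moreover, since $l>2c\log n$, we have $l'\geq \ceil{l/2}-1>c\log n$ (up to an absorbable $\pm 1$ slack near the threshold), so the reduced pair remains in the regime where $f(l')=n$ makes the long-distance property nontrivial, and the edit distance inequality continues to witness a genuine violation.

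Iterating this halving, the total length decreases roughly by a factor of two at each step while staying strictly above $c\log n$, so after $O(\log n)$ applications it must land inside the interval $(c\log n, 2c\log n]$. At that moment the resulting pair is non-adjacent, violates the long-distance edit distance bound, and has total length within the hypothesis's scope—a contradiction. Consequently no violating non-adjacent pair can exist at any length, which is exactly the claim.

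The main obstacle is the bookkeeping around the halving: one must verify that non-adjacency is preserved at each step (which requires reading through the proof of Lemma~\ref{lem:longdistancereduction} rather than trusting its statement alone) and that the $\pm 1$ slack in the halving never causes the length to slip down to exactly $c\log n$ before having first entered the hypothesis's range $(c\log n, 2c\log n]$. Both are minor technical points that can be handled by a careful case analysis or by choosing $c$ slightly conservatively; apart from them, the argument is a routine inductive contradiction.
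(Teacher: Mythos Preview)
Your proposal is correct and follows essentially the same approach as the paper: both arguments use Lemma~\ref{lem:longdistancereduction} to descend from a violating pair of total length $>2c\log n$ to one of total length in $(c\log n,2c\log n]$, contradicting the hypothesis. The only cosmetic difference is that the paper phrases the descent as a minimality argument (take the smallest violating pair with $l>2c\log n$ and apply the lemma once) while you phrase it as an explicit iteration; the content is identical, and the same minor boundary concerns about the $\pm 1$ slack and preservation of non-adjacency arise in both.
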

\global\def\ProofOfCorShortDistanceSufficient{
\shortOnly{\subsection{Proof of Corollary~\ref{cor:shortDistanceSufficient}}}
\begin{proof}
Suppose, for the sake of contradiction, that  there exist two intervals of total length $2\log_c n$ or more that violate the $c$-long-distance $\eps$-synchronization property. Let $[i_1, j_1)$ and $[i_2, j_2)$ where $i_1<j_1\leq i_2 < j_2$ be two intervals of the smallest total length $l = j_1 - i_1 + j_2 - i_2$ larger than $2\log_c n$ (breaking ties arbitrarely) for which $ED(S[i_1,j_1),[i_2,j_2)) \leq (1 - \eps) l$. By Lemma~\ref{lem:longdistancereduction} there exists two intervals $[i'_1, j'_1)$ and $[i'_2, j'_2)$ where $i'_1<j'_1\leq i'_2 < j'_2$ of total length $l' \in [l/2,l)$ with $ED(S[i'_1,j'_1),[i'_2,j'_2)) \leq (1 - \eps) l$. If $l' \le 2\log_c n$, the assumption of $c$-long-distance $\eps$-synchronization property holding for intervals of length $2 \log_c n$ or less is contradicted. Unless, $l' > 2\log_c n$ that contradicts the minimality of our choice of $l$.
\end{proof}
}\fullOnly{\ProofOfCorShortDistanceSufficient}

\begin{theorem}\label{thm:polyConstruction}
For any $0<\eps<1$ and every $n$ there is a deterministic $n^{O(1)}$ time algorithm for computing a $c=O(1/\eps)$-long-distance $\eps$-synchronization string over an alphabet of size $O(\eps^{-4})$.
\end{theorem}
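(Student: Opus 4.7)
The plan is to prove Theorem~\ref{thm:polyConstruction} by a Lovász local lemma argument that, thanks to Corollary~\ref{cor:shortDistanceSufficient}, only needs to reason about ``short'' pairs of intervals, and to then derandomize using the algorithmic deterministic LLL of Chandrasekaran et al.~\cite{chandrasekaran2013deterministic}.

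First I would set $c = \Theta(1/\eps)$ and pick $S \in \Sigma^n$ uniformly at random over an alphabet $\Sigma$ of size $q = O(\eps^{-4})$. By Corollary~\ref{cor:shortDistanceSufficient}, it suffices to show that with positive probability $S$ satisfies the $c$-long-distance $\eps$-synchronization property for every pair of non-adjacent intervals of total length $l \leq 2c\log n$. For each quadruple $(i_1,j_1,i_2,j_2)$ with $i_1 < j_1 \leq i_2 < j_2$, $j_2 - i_1 - (i_2-j_1) =: l \leq 2c \log n$, and (when $l \leq c\log n$) also $i_2 - j_1 = 0$, let $B_{i_1,j_1,i_2,j_2}$ be the bad event that $ED(S[i_1,j_1),S[i_2,j_2)) \leq (1-\eps)l$.

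The main probabilistic estimate is essentially the same as the one used in \cite{haeupler2017synchronization} for plain $\eps$-synchronization strings: $B_{i_1,j_1,i_2,j_2}$ occurs only if there is a monotone matching of size at least $\eps l / 2$ between the two intervals, and the number of such matchings is at most $\binom{l}{\eps l /2}^2 \leq (2e/\eps)^{\eps l}$, each of which is satisfied by the random $S$ with probability $q^{-\eps l/2}$. Plugging in $q = \Theta(\eps^{-4})$ gives $\Pr[B_{i_1,j_1,i_2,j_2}] \leq 2^{-\Omega(\eps l \log(1/\eps))} \leq e^{-\Omega(l/c)}$ after absorbing constants. Since each bad event depends only on the $l \leq 2c\log n$ positions it touches, the dependency graph degree of a bad event of length $l$ is at most $n \cdot (2c\log n)^{O(1)}$ restricted to events of any given length, and summing the classical LLL criterion $e \cdot p \cdot d \leq 1$ over length scales (weighted by the geometric decay in $p$) yields convergence for $c$ a sufficiently large constant multiple of $1/\eps$. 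This establishes existence.

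For the deterministic construction, I would invoke the algorithmic framework of Chandrasekaran et al.~\cite{chandrasekaran2013deterministic}, which produces in polynomial time an assignment to independent discrete random variables avoiding all bad events, provided (i) each event depends on polylogarithmically many variables, (ii) the LLL slack condition holds with a small polynomial gap, and (iii) each event is checkable in polynomial time. All three hold here: every bad event touches at most $O(\log n/\eps)$ positions; the bound above can be sharpened by a further $q^{-O(1)}$ factor (e.g.\ by taking the alphabet size a larger constant times $\eps^{-4}$) to provide the required polynomial slack; and evaluating a single $B_{i_1,j_1,i_2,j_2}$ is an $O(l^2)$ edit-distance computation. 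The total number of events to consider is $n \cdot (\log n)^{O(1)}$, giving overall runtime $n^{O(1)}$, and the output is guaranteed to satisfy the $c$-long-distance $\eps$-synchronization property on all short pairs, hence by Corollary~\ref{cor:shortDistanceSufficient} on all pairs.

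The main obstacle I anticipate is verifying the precise hypotheses of the Chandrasekaran et al.\ derandomization: their result typically demands an LLL condition of the form $e \cdot p \cdot d^{1+\gamma} \leq 1$ for some constant $\gamma > 0$, together with a ``small-support'' condition on the bad events. The first is handled by choosing the alphabet constant in $O(\eps^{-4})$ a bit larger so that $p$ decays as $q^{-\Omega(\eps l)}$ beats the dependency degree $d$ with room to spare; the second follows from the $l \leq 2c\log n$ locality that Corollary~\ref{cor:shortDistanceSufficient} bought us and is precisely why the long-distance reduction was needed. Modulo these bookkeeping checks, the rest of the argument is a routine combination of the LLL calculation from \cite{haeupler2017synchronization} with the deterministic LLL of \cite{chandrasekaran2013deterministic}.
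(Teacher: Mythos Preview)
Your outline has the right skeleton (random string, LLL on short pairs via Corollary~\ref{cor:shortDistanceSufficient}, derandomize via \cite{chandrasekaran2013deterministic}), but it has a genuine gap at the bottom of the length scale. The $c$-long-distance property still includes the ordinary $\eps$-synchronization requirement for \emph{adjacent} interval pairs of every total length $l\ge 2$, and you include these as bad events. For such constant-length events your probability estimate $(2e/\eps)^{\eps l}q^{-\eps l/2}$ is essentially $1$; the true probability is $\Theta(1/q)=\Theta(\eps^{4})$. Meanwhile each length-$2$ event is an LLL neighbor of every other short adjacent event touching its two positions: there are $\Theta(l'^2)$ such neighbors of each length $l'\le c\log n$. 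With the natural weights $x_l=D^{-\eps l}$ (or any choice that makes the long-distance part go through) one has $1-x_{l'}=\Theta(\eps l')$ for $l'=O(1/\eps)$, so the product $\prod_{l'\ge 2}(1-x_{l'})^{\Theta(l'^2)}$ is at most $\eps^{\Omega(1/\eps)}$. The LLL inequality for a length-$2$ event then forces $q\ge 2^{\mathrm{poly}(1/\eps)}$, not $\eps^{-O(1)}$. No amount of enlarging the constant in $O(\eps^{-4})$ repairs this, and your ``sum $epd\le 1$ over length scales'' remark does not correspond to an actual LLL criterion.

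The paper's proof avoids exactly this obstruction by \emph{not} putting constant-length pairs into the LLL at all: the LLL is applied only to pairs of total length at least $2\eps^{-2}$ over an alphabet of size $O(\eps^{-2})$, for which the bound $(e/(\eps\sqrt{|\Sigma|}))^{\eps l}$ is already at most $2^{-\Theta(1/\eps)}$ and the weighted asymmetric LLL (with the required constant slack for \cite{chandrasekaran2013deterministic}) goes through. Pairs with $l<2\eps^{-2}$ are handled by concatenating, symbol by symbol, the periodic string $1,2,\dots,2\eps^{-2},1,2,\dots$, which makes every window of length $<2\eps^{-2}$ consist of distinct symbols and hence trivially satisfies the $\eps$-synchronization inequality there. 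This two-layer construction is precisely where the alphabet bound $O(\eps^{-4})=O(\eps^{-2})\times O(\eps^{-2})$ comes from. Your sketch is missing this separation; once you add it, the rest of your plan (including the derandomization bookkeeping you flag) matches the paper.
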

\shortOnly{
\begin{proof}[Proof Sketch]
We will make use of the Lov\'{a}sz local lemma and deterministic algorithms proposed for it in~\cite{chandrasekaran2013deterministic}. We generate a random string $R$ over an alphabet of size $|\Sigma|=O(\eps^{-2})$ and define bad event $B_{i_1, l_1, i_2, l_2}$ as the event of intervals $[i_1, i_1+l_1)$ and $[i_2, i_2+l_2)$ violating the $O(1/\eps)$-long-distance synchronization string property over intervals of total length $2/\eps^2$ or more. 
We will show that for large enough $n$, with non-zero probability, none of these bad events happen and an instance can be found in polynomial time. To turn this string into a $c=O(1/\eps)$-long-distance $\eps$-synchronization strings, we simply concatenate it with a string consisting of repetitions of $1, \cdots, 2\eps^{-2}$ which takes care of the edit distance requirement for neighboring intervals with total length smaller than $2\eps^{-2}$.
\end{proof}
}
\global\def\ProofOfLLLConstruction{
\fullOnly{\begin{proof}}
\shortOnly{\subsection{Proof of Theorem~\ref{thm:polyConstruction}}\begin{proof}}
To proof this, we will make use of the Lov\'{a}sz local lemma and deterministic algorithms proposed for it in~\cite{chandrasekaran2013deterministic}. We generate a random string $R$ over an alphabet of size $|\Sigma|=O(\eps^{-2})$ and define bad event $B_{i_1, l_1, i_2, l_2}$ as the event of intervals $[i_1, i_1+l_1)$ and $[i_2, i_2+l_2)$ violating the $O(1/\eps)$-long-distance synchronization string property over intervals of total length $2/\eps^2$ or more. In other words, $B_{i_1, l_1, i_2, l_2}$ occurs if and only if $ED(R[i_1, i_1+l_1), R[i_2, i_2 + l_2)) \le (1-\eps)(l_1 + l_2)$. 
Note that by the definition of $c$-long-distance $\eps$-synchronization strings, 
$B_{i_1, l_1, i_2, l_2}$
is defined for $(i_1, l_1, i_2, l_2)$s where either $l_1+l_2\ge c\log n$ and $i_1+l_1\le i_2$ or $1/\eps^2<l_1+l_2 < c\log n$ and $i_2 = i_1+l_1$. 
We aim to show that for large enough $n$, with non-zero probability, none of these bad events happen. This will prove the existence of a string that satisfies $c=O(1/\eps)$-long-distance $\eps$-synchronization strings for all pairs of intervals that are of total length $2/\eps^2$ or more. To turn this string into a $c=O(1/\eps)$-long-distance $\eps$-synchronization strings, we simply concatenate it with a string consisting of repetitions of $1, \cdots, 2\eps^{-2}$, i.e., $1,2,\cdots,2\eps^{-2}, 1,2,\cdots,2\eps^{-2},\cdots$. This string will take care of the edit distance requirement for neighboring intervals with total length smaller than $2\eps^{-2}$.

Note that using Lemma~\ref{lem:longdistancereduction}, by a similar argument as in Claim~\ref{cor:shortDistanceSufficient}, we only need to consider bad events where 
$l_1 + l_2 \le 2c\log n$. As the first step, note that $B_{i_1, l_1, i_2, l_2}$ happens only if there is a common subsequence of length $\eps (l_1+l_2) /2$ or more between $R[i_1, i_1 + l_1)$ and $R[i_2, i_2 + l_2)$. Hence, the union bound gives that

\begin{eqnarray*}
\Pr\left\{B_{i_1, l_1, i_2, l_2}\right\} &\le& {l_1 \choose \eps(l_1+l_2)/2}{l_1 \choose \eps(l_1+l_2)/2}{\left|\Sigma\right|}^{-\frac{\eps (l_1+l_2)}{2}}\\
&\le& \left(\frac{l_1 e}{\eps(l_1+l_2)/2}\right)^{\eps(l_1+l_2)/2} \left(\frac{l_2 e}{\eps(l_1+l_2)/2}\right)^{\eps(l_1+l_2)/2}{\left|\Sigma\right|}^{-\frac{\eps (l_1+l_2)}{2}}\\
&=&\left(\frac{2e\sqrt{l_1l_2}}{\eps(l_1+l_2)\sqrt{|\Sigma|}}\right)^{\eps(l_1+l_2)}\\
&\le&  \left(\frac{el}{\eps l \sqrt{|\Sigma|}}\right)^{\eps l} = \left(\frac{e}{\eps  \sqrt{|\Sigma|}}\right)^{\eps l}
\end{eqnarray*}
where $l=l_1 + l_2$.
In order to apply LLL, we need to find real numbers $x_{i_1, l_1, i_2, l_2} \in [0, 1]$ such that for any $B_{i_1, l_1, i_2, l_2}$
\begin{equation}\label{eqn:LLLcondition}
\Pr\{B_{i_1, l_1, i_2, l_2}\} \le x_{i_1, l_1, i_2, l_2} \prod_
{\left[S[i_1, i_1+l_1)\cup S[i_2, i_2+l_2)\right]\cap[S[i'_1, i'_1+l'_1)\cup S[i'_2, i'_2+l'_2)]\neq \emptyset}
 (1-x_{i'_1, l'_1, i'_2, l'_2})
\end{equation}
 
We eventually want to show that our LLL argument satisfies the conditions required for polynomial-time deterministic algorithmic LLL specified in~\cite{chandrasekaran2013deterministic}. 
Namely, it suffices to certify two other properties in addition to~\eqref{eqn:LLLcondition}. The first additional requirement is to have each bad event in LLL depend on up to logarithmically many variables and the second is to have~\eqref{eqn:LLLcondition} hold with a constant exponential slack. The former is clearly true as our bad events consist of pairs of intervals each of which is of a length between $c\log n$ and $2c\log n$. To have the second requirement, instead of~\eqref{eqn:LLLcondition} we find $x_{i_1, l_1, i_2, l_2} \in [0, 1]$ that satisfy the following stronger property.
\begin{equation}\label{eqn:LLLcondition2}
\Pr\{B_{i_1, l_1, i_2, l_2}\} \le \left[x_{i_1, l_1, i_2, l_2} \prod_
{\left[S[i_1, i_1+l_1)\cup S[i_2, i_2+l_2)\right]\cap[S[i'_1, i'_1+l'_1)\cup S[i'_2, i'_2+l'_2)]\neq \emptyset}
 (1-x_{i'_1, l'_1, i'_2, l'_2})\right]^{1.01}
\end{equation}
Any small constant can be used as slack. We pick 1.01 for the sake of simplicity. We propose $x_{i_1, l_1, i_2, l_2} = D^{-\eps (l_1 + l_2)}$ for some $D> 1$ to be determined later. $D$ has to be chosen such that for any $i_1, l_1, i_2, l_2$ and $l=l_1 + l_2$:
\begin{eqnarray}
\left(\frac{e}{\eps  \sqrt{|\Sigma|}}\right)^{\eps l} &\le& \left[D^{-\eps l} \prod_
{[S[i_1, i_1+l_1)\cup S[i_2, i_2+l_2)]\cap[S[i'_1, i'_1+l'_1)\cup S[i'_2, i'_2+l'_2)]\neq \emptyset}
 \left(1-D^{-\eps (l'_1 + l'_2)}\right)\right]^{1.01}\label{eq:LLLMiddleStep}
\end{eqnarray}

Note that:
\begin{eqnarray}
&&D^{-\eps l} \prod_{[S[i_1, i_1+l_1)\cup S[i_2, i_2+l_2)]\cap[S[i'_1, i'_1+l'_1)\cup S[i'_2, i'_2+l'_2)]\neq \emptyset}
 \left(1-D^{-\eps (l'_1 + l'_2)}\right)\allowdisplaybreaks \\
&\ge& D^{-\eps l} \prod_{l'=c\log n}^{2c\log n}\prod_{l'_1=1}^{l'}
\left(1-D^{-\eps l'}\right)^{\left[(l_1+l'_1)+(l_1+l'_2)+(l_2+l'_1)+(l_2+l'_2)\right] n}
\nonumber\\
&&\times \prod_{l''=1/\eps^2}^{c\log n} \left(1-D^{-\eps l''}\right)^{l+l''}\label{eqn:countPairOfIntervals}\allowdisplaybreaks \\
 &=& D^{-\eps l} \prod_{l'=c\log n}^{2c\log n}\prod_{l'_1=1}^{l'}
 \left(1-D^{-\eps l'}\right)^{4(l+l') n}
 \times \prod_{l''=1/\eps^2}^{c\log n} \left(1-D^{-\eps l''}\right)^{l+l''}
 \allowdisplaybreaks \\&=& 
 D^{-\eps l} \prod_{l'=c\log n}^{2c\log n}
 \left(1-D^{-\eps l'}\right)^{4l'(l+l') n}
  \times \left[\prod_{l''=1/\eps^2}^{c\log n} \left(1-D^{-\eps l''}\right)\right]^{l}
  \times \prod_{l''=1/\eps^2}^{c\log n} \left(1-D^{-\eps l''}\right)^{l''}\allowdisplaybreaks \\
 &\ge&D^{-\eps l} \left(1-\sum_{l'=c\log n}^{2c\log n}\left(4l'(l+l') n\right)D^{-\eps l'}\right)
\nonumber\\&&
\times \left[1-\sum_{l''=1/\eps^2}^{c\log n} D^{-\eps l''}\right]^{l}
 \times  \left(1-\sum_{l''=1/\eps^2}^{c\log n}l''D^{-\eps l''}\right)
\label{eqn:linearization}\allowdisplaybreaks \\
  &\ge&D^{-\eps l} \left(1-\sum_{l'=c\log n}^{2c\log n}\left(4\cdot 2c\log n(2c\log n+2c\log n) n\right)D^{-\eps l'}\right)\\
&& 
\times \left[1-\sum_{l''=1/\eps^2}^{\infty} D^{-\eps l''}\right]^{l}
 \times  \left(1-\sum_{l''=1/\eps^2}^{\infty}l''D^{-\eps l''}\right)
\allowdisplaybreaks \\
  &=&D^{-\eps l} \left(1-\sum_{l'=c\log n}^{2c\log n}\left(32c^2n\log^2 n\right)D^{-\eps l'}\right) 
\times \left[1-\frac{D^{-\eps\cdot1/\eps^2}}{1-D^{-\eps }}\right]^{l}
\nonumber\\&&
 \times  \left(1-\frac{D^{-\eps\cdot 1/\eps^2}(D^{-\eps}+1/\eps^2-D^{-\eps}/\eps^2)}{(1-D^{-\eps})^2}\right)
 \allowdisplaybreaks \\&\ge& D^{-\eps l} \left(1-32c^3n\log^3 nD^{-\eps c\log n}\right)
 \left[1-\frac{D^{-1/\eps}}{1-D^{-\eps }}\right]^{l}
\nonumber\\&&
\times \left(1-\frac{D^{-1/\eps}(D^{-\eps}+1/\eps^2-D^{-\eps}/\eps^2)}{(1-D^{-\eps})^2}\right)\label{eqn:BadEventProbLowerBound}
\end{eqnarray}
To justify equation~\eqref{eqn:countPairOfIntervals}, note that there are two kinds of bad events that might intersect $B_{i_1, l_1, i_2, l_2}$. The first product term is considering all pairs of long intervals of length $l'_1$ and $l'_2$ where $l_1 + l_2 \ge c\log n$ that overlap a fixed pair of intervals of length $l_1$ and $l_2$. The number of such intervals is at most $\left[(l_1+l'_1)+(l_1+l'_2)+(l_2+l'_1)+(l_2+l'_2)\right] n$. The second one is considering short neighboring pairs of intervals ($\eps^{-2}\le l''=l''_1+l''_2 \le c\log n$).

Equation~\eqref{eqn:linearization} is a result of the following inequality for $0<x, y<1$:
$$(1-x)(1-y) > 1-x-y.$$

We choose $D=2$ and $c=2/\eps$. Note that 
$\lim_{\eps\rightarrow 0}\frac{2^{-1/\eps}(2^{-\eps}+1/\eps^2-2^{-\eps}/\eps^2)}{(1-2^{-\eps})^2}=0$. So, for small enough $\eps$, $\frac{2^{-1/\eps}}{1-2^{-\eps}} < \frac{1}{2}$.
Also, for $D=2$ and $c=2/\eps$,
$$32c^3n\log^3 nD^{-\eps c\log n} = \frac{2^8}{\eps^3}\cdot \frac{\log ^3 n}{n} = o(1).$$
Finally, one can verify that for small enough $\eps$, 
$1-\frac{2^{-1/\eps}}{1-2^{-\eps}}
 > 2^{-\eps}$.
Therefore, for sufficiently small $\eps$ and sufficiently large $n$, \eqref{eqn:BadEventProbLowerBound} is satisfied if the following is satisfied.
\begin{eqnarray}
&&D^{-\eps l} \prod_{[S[i_1, i_1+l_1)\cup S[i_2, i_2+l_2)]\cap[S[i'_1, i'_1+l'_1)\cup S[i'_2, i'_2+l'_2)]\neq \emptyset}
 \left(1-D^{-\eps (l'_1 + l'_2)}\right)\\
 &\ge& 2^{-\eps l} \left(1-\frac{1}{2}\right) \left(2^{-\eps}\right)^l \left(1-\frac{1}{2}\right) \ge \frac{4^{-\eps l}}{4}
 \end{eqnarray}
 
 So, for LLL to work, the following have to be satisfied.
 $$
 \left(\frac{e}{\eps  \sqrt{|\Sigma|}}\right)^{\frac{\eps l}{1.01}} \le 
\frac{4^{-\eps l}}{4} \Leftrightarrow
4 \le \left(\frac{\eps  \sqrt{|\Sigma|}}{e4^{1.01}}\right)^{\frac{\eps l}{1.01}}
\Leftarrow 4\le \left(\frac{\eps  \sqrt{|\Sigma|}}{e4^{1.01}}\right)^{\frac{\eps\cdot 1/\eps^2}{1.01}} 
\Leftrightarrow \frac{4^{2.02(1+\eps)}e^2}{\eps^2} \le |\Sigma|
$$
 Therefore, for $|\Sigma| = \frac{4^{4	.04}e^2}{\eps^2}=O(\eps^{-2})$, the deterministic LLL conditions hold. This finishes the proof.
\end{proof}
}
\fullOnly{\ProofOfLLLConstruction}

\subsection{Boosting I: A Linear Time Construction of Synchronization Strings}\label{sec:BoostingStepII}

Next, we provide a simple boosting step which allows us to polynomially speed up any $\eps$-synchronization string construction. Essentially, we propose a way to construct an $O(\eps)$-synchronization string of length $O_\eps(n^2)$ having an $\eps$-synchronization string of length $n$. 

\begin{lemma}\label{lem:simplepolyboosting}
Fix an even $n \in \mathbb{N}$ and $\gamma > 0$ such that $\gamma n \in \mathbb{N}$. Suppose $S \in \Sigma^n$ is an $\eps$-synchronization string. The string $S' \in \Sigma'^{\gamma n^2}$ with $\Sigma' = \Sigma^3$ and 
\fullOnly{$$S'[i] = \left(S[i \bmod n], S[(i+n/2) \bmod n], S\left[\left\lceil {\frac{i}{\gamma n}}\right\rceil\right]\right)$$}
\shortOnly{$S'[i] = \left(S[i \bmod n], S[(i+n/2) \bmod n], S\left[\left\lceil {\frac{i}{\gamma n}}\right\rceil\right]\right)$}
 is an $(\eps + 6\gamma)$-synchronization string of length $\gamma n^2$. 
\end{lemma}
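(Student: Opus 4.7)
The goal is to bound $L := |LCS(S'[i,j), S'[j,k))|$ for arbitrary $1 \le i < j < k \le \gamma n^2 + 1$ by $\tfrac{\eps + 6\gamma}{2}\ell$, where $\ell := k-i$, which is equivalent to the claimed $(\eps + 6\gamma)$-synchronization property. I would split the argument by whether $\ell \le n/2$ or $\ell > n/2$. In the short regime, the ``wrap'' positions of components~1 and~2 (multiples of $n$, and positions $\equiv n/2 \pmod n$, respectively) together form an arithmetic progression with common difference $n/2$, so an interval of length at most $n/2$ can contain at most one such position. Hence at least one of the two cyclic components reads a contiguous substring of $S$ over $[i,k)$; projecting the LCS onto that component reduces it to the LCS of two adjacent substrings of $S$ of total length $\ell$, which the $\eps$-synchronization property of $S$ bounds by $\tfrac{\eps}{2}\ell$, safely below target.

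For long intervals ($\ell > n/2$), I would use the third component, which is the blow-up $S[1]^{\gamma n} S[2]^{\gamma n} \cdots S[n]^{\gamma n}$ under the map $f(p) = \lceil p/(\gamma n) \rceil$. Set $a = f(i)$, $b = f(j-1)$, $c = f(j)$, $d = f(k-1)$. If $u := S'[i,j)$ or $v := S'[j,k)$ lies in a single block of component~3, then $\min(|u|,|v|) \le \gamma n < 2\gamma\ell$, which already meets the bound. Otherwise I would pad the partial boundary blocks of $u$ and $v$ to obtain their full $\gamma n$-blow-up extensions $\tilde u$ of $S[a, b+1)$ and $\tilde v$ of $S[c, d+1)$; since $u$ and $v$ are subsequences of $\tilde u$ and $\tilde v$, we have $L \le LCS(\tilde u, \tilde v)$. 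A short auxiliary claim --- that the LCS of two blow-ups equals the blow-up factor times the LCS of the originals, provable by converting an LCS matching of the blow-ups into a monotone matching on the originals --- then gives $LCS(\tilde u, \tilde v) = \gamma n \cdot LCS(S[a, b+1), S[c, d+1))$. When $c = b+1$ the two intervals of $S$ are adjacent and the $\eps$-synchronization property yields $LCS \le \tfrac{\eps}{2}(d-a+1)$; when $c = b$ they overlap only at the single index $b$, and decomposing the matching into at most one pair touching $b$ plus a strict monotone matching on the remaining non-overlapping part yields $LCS \le 1 + \tfrac{\eps}{2}(d-a+1)$.

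Finally, the partial first and last blocks of $u$ and $v$ inflate the padded length by at most $2\gamma n$, so $(d-a+1)\gamma n \le \ell + 2\gamma n$, giving $L \le \tfrac{\eps}{2}\ell + O(\gamma n)$; using $\ell > n/2$ to convert the additive $\gamma n$ into a multiplicative $O(\gamma)\ell$ delivers the required $\tfrac{\eps + 6\gamma}{2}\ell$ bound after tracking constants. The two steps I expect will require the most care are the overlapping-blocks subcase --- where the $\eps$-synchronization property does not apply directly to the two intervals of $S$ and one must account explicitly for the (at most one) matching pair touching the shared index --- and the blow-up LCS identity, which is intuitive but needs its own short monotone-matching argument to rigorously justify.
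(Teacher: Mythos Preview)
Your proposal is correct and follows essentially the same two-case strategy as the paper: use the staggered cyclic copies for short intervals and the blown-up third component for long intervals, with the blow-up LCS identity as the key step.

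One small remark on the long-interval case: you pad both $u$ and $v$ outward to full blocks, which forces you to handle the overlap at block $b$ when $c=b$ and costs you an extra $+1$ in the $S$-level LCS bound. The paper instead (implicitly) rounds the single splitting index $j$ to the nearest block boundary, so that the two block-aligned pieces are always \emph{adjacent} in $S$ with no overlap; since $|j-j'|\le \gamma n/2$, the total boundary modification is at most $(i-i')+2|j-j'|+(k'-k)\le 3\gamma n$, which with $\ell>n/2$ gives exactly the $6\gamma$ constant. Your padding route yields $L\le \tfrac{\eps}{2}\ell+(1+\eps)\gamma n$, which meets the $\tfrac{\eps+6\gamma}{2}\ell$ target only when $\eps\le 1/2$; for larger $\eps$ it gives $(\eps+8\gamma)$ rather than $(\eps+6\gamma)$. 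If you want the stated constant for all $\eps$, use the paper's rounding of $j$ rather than double-padding --- it also obviates the overlap subcase entirely. Otherwise your argument is complete and in fact more carefully spelled out than the paper's sketch.
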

\global\def\ProofOfLemmaSimplepolyboosting{
\shortOnly{\subsection{Proof of Lemma~\ref{lem:simplepolyboosting}}}
\begin{proof}
Intervals of length at most $n/2$ lay completely within a copy of $S$ and thus have the $\eps$-synchronization property. For intervals of size $l$ larger than $n/2$ we look at the synchronization string which is blown up by repeating each symbol $\gamma n$ times. Ensuring that both sub-intervals contain complete blocks changes the edit distance by at most $3 \gamma n$ and thus by at most $6 \gamma l$. Once only complete blocks are contained we use the observation that the longest common subsequence of any two strings becomes exactly a factor $k$ larger if each symbols is repeated $k$ times in each string. This means that the relative edit distance does not change and is thus at least $\eps$. Overall this results in the $(\eps + 6\gamma)$-synchronization string property to hold for large intervals in $S'$.
\end{proof}
}
\fullOnly{\ProofOfLemmaSimplepolyboosting}
We use this step to speed up the polynomial time deterministic $\eps$-synchronization string construction in Theorem~\ref{thm:polyConstruction} to linear time.

\begin{theorem}\label{thm:linearTimeConstruction}
There exists an algorithm that, for any $0< \eps < 1$, constructs an $\eps$-synchronization string of length $n$ over an alphabet of size $\eps^{-O(1)}$ in $O(n)$ time.
\end{theorem}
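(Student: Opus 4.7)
The plan is to combine Theorem~\ref{thm:polyConstruction} with a constant number of applications of the boosting step from Lemma~\ref{lem:simplepolyboosting}. Let $c$ denote the exponent of the polynomial running time in Theorem~\ref{thm:polyConstruction}, so its deterministic construction produces $\eps$-synchronization strings of length $m$ in time $O(m^c)$. A direct computation shows that each boosting step roughly squares the length while tripling the exponent of the alphabet, so $k$ iterations starting from length $n_0$ yield a string of length $\Theta\!\left(\gamma^{2^k-1} n_0^{2^k}\right)$ over an alphabet of size $|\Sigma_0|^{3^k}$. I will fix $k := \lceil \log_2 c \rceil$ so that $c/2^k \le 1$, and set $\gamma := \eps/(12k)$ so that the total slack accumulated across the $k$ boosting steps is at most $k\cdot 6\gamma = \eps/2$.

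The construction then has two phases. First, I would invoke Theorem~\ref{thm:polyConstruction} on parameter $\eps/2$ to produce a seed $(\eps/2)$-synchronization string $S_0$ of length $n_0 = \Theta\!\left((n\gamma)^{1/2^k}/\gamma\right)$ over an alphabet of size $O(\eps^{-4})$; since $c/2^k \le 1$ and $k$ is a constant, the cost $n_0^c$ is $O(n)$ with $\eps$-dependent hidden constants. Second, I would iteratively apply Lemma~\ref{lem:simplepolyboosting} $k$ times with parameter $\gamma$. Each boosting step computes an output symbol from three input symbols at indices obtainable in $O(1)$ arithmetic, so the $i$-th step runs in time $O(n_i)$ where $n_i = \gamma n_{i-1}^2$. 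Because $(n_i)$ grows super-geometrically and $n_k \ge n$, the total boosting cost is $O(n_k) = O(n)$. The output $S_k$ is then an $\eps$-synchronization string over an alphabet of size $|\Sigma_0|^{3^k} = \eps^{-O(1)}$, and taking its length-$n$ prefix—which is itself an $\eps$-synchronization string—gives the claim.

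The main subtlety to get right is the choice of $k$: it must be large enough that the polynomial-time seed cost $n_0^c$ telescopes to linear in $n$, yet small enough that the $3^k$-fold alphabet blowup leaves $|\Sigma_k|$ polynomial in $\eps^{-1}$ and independent of $n$. Since the required $k$ depends only on the fixed exponent $c$ from Theorem~\ref{thm:polyConstruction} and not on $n$, a single constant choice satisfies both constraints simultaneously; the remaining bookkeeping (verifying the length and the accumulated $\eps$-loss at each level, and handling rounding so that $\gamma n_0^2$ is an integer) is routine.
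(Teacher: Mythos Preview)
Your proposal is correct and follows essentially the same approach as the paper: both start from the polynomial-time construction of Theorem~\ref{thm:polyConstruction} with parameter $\eps/2$, then apply the boosting step of Lemma~\ref{lem:simplepolyboosting} a constant $k=\lceil \log_2 c\rceil$ times with $\gamma=\eps/(12k)$, so that the seed length $n_0=\Theta(n^{1/2^k})$ makes the $O(n_0^c)$ seed cost linear in $n$ while the accumulated slack $6k\gamma=\eps/2$ and the $3^k$-fold alphabet blowup stay bounded by $\eps^{-O(1)}$. The paper's proof is somewhat terser on the bookkeeping (it writes the final length as $\gamma^{T-1}n'^T$ rather than $\gamma^{2^k-1}n_0^{2^k}$, which is a minor slip there), but the strategy and parameter choices are the same.
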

\global\def\ProofOfThmLinearTimeConstruction{
\shortOnly{\subsection{Proof of Theorem~\ref{thm:linearTimeConstruction}}}
\begin{proof}
Note that if one takes an $\eps'$-synchronization strings of length $n'$ and applies the boosting step in Theorem~\ref{lem:simplepolyboosting} $k$ times with parameter $\gamma$, he would obtain a $(\eps'+6k\gamma)$-synchronization string of length $\gamma^{2^k-1}n^{2^k}$. 

For any $0<\eps<1$, Theorem~\ref{thm:polyConstruction} gives a deterministic algorithm for constructing an $\eps$-synchronization string over an alphabet $O(\eps^{-4})$ that takes $O(n^T)$ time for some constant $T$ independent of $\eps$ and $n$.
We use the algorithm in Theorem~\ref{thm:polyConstruction} to construct an $\eps'=\frac{\eps}{2}$ synchronization string of length $n'=\frac{n^{1/T}}{\gamma}$ for $\gamma=\frac{\eps}{12\log T}$ 
over an alphabet of size $O(\eps^{-4})$
in $O({n'}^{T}) = O(n)$ time. Then, we apply boosting step I $k=\log T$ times with $\gamma=\frac{\eps}{12\log T}$ to get an $(\eps'+6\gamma\log T = \eps)$-synchronization string of length $\gamma^{T-1}n'^T \ge n$. 
As boosting step have been employed constant times, the eventual alphabet size will be $\eps^{-O(1)}$ and the run time is $O(n)$.
\end{proof}
}
\fullOnly{\ProofOfThmLinearTimeConstruction}

\subsection{Boosting II: Explicit Constructions for Long-Distance Synchronization Strings}\label{sec:BoostingStepI}
We start this section by a discussion of \emph{explicitness} quality of synchronization string constructions. In addition to the time complexity of synchronization strings' constructions, an important quality of a construction that we take into consideration for applications that we will discuss later is explicitness or, in other words, how fast one can calculate a particular symbol of a synchronization string.

\begin{definition}[$T(n)$-explicit construction]
If a synchronization string construction algorithm can compute $i$th index of the string it is supposed to find, i.e., $S[i]$, in $T(n)$ we call it an \emph{$T(n)$-explicit} algorithm. 
\end{definition}

We are particularly interested in cases where $T(n)$ is polylogarithmically large in terms of $n$. For such $T(n)$, a $T(n)$-explicit construction implies a near-linear construction of the entire string as one can simply compute the string by finding out symbols one by one in $n\cdot T(n)$ overall time. We use the term \emph{highly-explicit} to refer to $O(\log n)$-explicit constructions.

We now introduce a boosting step in Lemma~\ref{thm:CodeBlock} that will lead to explicit constructions of (long-distance) synchronization strings. Lemma~\ref{thm:CodeBlock} shows that, using a high-distance insertion-deletion code, one can construct strings that satisfy the requirement of long-distance synchronization strings for every pair of substrings that are of total length $\Omega_\eps(\log n)$ or more. Having such a string, one can construct a $O_\eps(1)$-long-distance $\eps$-synchronization string by simply concatenating the outcome of Lemma~\ref{thm:CodeBlock} with repetitions of an $O_\eps(\log n)$-long $\eps$-synchronization string. 

This boosting step is deeply connected to our new definition of long-distance $\eps$-synchronization strings. In particular, we observe the following interesting connection between insertion-deletion codes and long-distance $\eps$-synchronization strings.

\begin{lemma}
If $S$ is a $c$-long-distance $\eps$-synchronization string where $c=\theta(1)$ then $\mathcal{C} =\{S(i \cdot c\log n, (i+1) \cdot c\log n] | 0 \leq i < \frac{n}{c\log n}-1\}$ is an insdel error correcting code with minimum distance at least $1 - \eps$ and constant rate. 
Further, if $S$ has a highly explicit construction, $\mathcal{C}$ has a linear encoding time.
\end{lemma}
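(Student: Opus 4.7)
The plan is to verify the three claims—rate, minimum edit distance, and encoding time—essentially independently, with the distance bound being the only place where the $c$-long-distance property actually gets used.

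For the distance, I would take any two distinct codewords indexed by $i_1 < i_2$ and view them as two intervals $I_1 = (i_1 \cdot c\log n,\, (i_1+1)\cdot c\log n]$ and $I_2 = (i_2 \cdot c\log n,\, (i_2+1)\cdot c\log n]$ of $S$, each of length $c\log n$, so of aggregated length $l = 2c\log n$. Since $l > c\log n$, we have $f(l) = n \cdot \mathbbm{1}_{l > c\log n} = n$, so the gap condition $i' - j \leq f(l)$ in Definition~\ref{def:distSynchStr} is satisfied no matter how far apart $I_1$ and $I_2$ sit inside $S$. Invoking the $c$-long-distance $\eps$-synchronization property directly yields
\[
ED(I_1, I_2) \; > \; (1-\eps)\, l \; = \; (1-\eps)\cdot 2c\log n,
\]
which, after normalizing by $2 \cdot c\log n$, gives a relative minimum edit distance of at least $1 - \eps$.

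For the rate, the code has $N := n/(c\log n) - 1$ codewords, each of length $c\log n$, over the alphabet of $S$, which by our long-distance construction has constant size $\eps^{-O(1)}$. Thus the rate is
\[
R \; = \; \frac{\log N}{c\log n \cdot \log |\Sigma|} \; = \; \Theta(1)
\]
once $c$ and $\eps$ are treated as constants; the $\log\log n$ term in the numerator is negligible.

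For the encoding, I would invoke the highly explicit guarantee (Theorem~1.1): given any starting index $p$, one can compute $S[p, p + \log p]$ in time $O(\log p)$. To output the codeword at message index $i$, one needs a contiguous window of length $c\log n$ starting at position $i \cdot c\log n + 1$, which can be assembled by $O(c)$ invocations of this primitive, each costing $O(\log n)$, for a total of $O(\log n)$ time per codeword; this is linear in the codeword length $c\log n$, giving linear encoding time overall.

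No step really constitutes a main obstacle—the argument is mechanical once one observes that choosing two codewords always produces an aggregated length strictly greater than $c\log n$, which is exactly the threshold at which the long-distance property ceases to impose any proximity requirement. The only small pitfalls to be careful about are the normalization convention used when quoting the edit distance as $1 - \eps$ and confirming that the highly explicit primitive actually lets us read $c\log n$ consecutive symbols in time linear in that window length rather than quadratic.
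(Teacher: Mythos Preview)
Your proposal is correct and mirrors the paper's own proof essentially step for step: the paper also derives the distance directly from the long-distance definition, computes the rate as $\frac{\log |\mathcal{C}|}{c\log n \cdot \log q} = \Omega(1)$, and obtains linear encoding time from the highly explicit guarantee that a length-$c\log n$ window of $S$ can be produced in time proportional to its length. Your write-up is in fact somewhat more detailed than the paper's (which dispatches each point in a single sentence), and your caveat about the edit-distance normalization convention is well placed.
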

\begin{proof}
The distance follows from the definition of long-distance $\eps$-synchronization strings. The rate follows because the rate $R$ is equal to $R = \frac{\log |\mathcal{C}|}{c \log n \log q} = \frac{\log \frac{n}{c\log n}}{O(\log n)} = \Omega(1)$. Finally, as 
$S$ is highly explicit and $\left|S(i \cdot c\log n, (i+1) \cdot c\log n]\right|  = c\log n$, one can compute $S(i \cdot c\log n, (i+1) \cdot c\log n]$ in linear time of its length which proves the linear construction.
\end{proof}

Our boosting step is mainly built on the converse of this observation. 

\begin{lemma}\label{thm:CodeBlock}
Suppose $\mathcal{C}$ is a block insdel code over alphabet of size $q$, block length $N$, distance $1-\eps$ and rate $R$ and let $S$ be a string obtained by attaching all codewords back to back in any order. Then, for $\eps'=4\eps$, $S$ is a string of length $n=q^{R\cdot N}\cdot N$ which satisfies the long-distance $\eps'$-synchronization property for any pair of intervals of aggregated length $\frac{4}{\eps} N
 \le \frac{4}{\eps\log q} (\log n-\log R)
$ or more. 
Further, if $\mathcal{C}$ is linearly encodable, $S$ has a highly explicit construction.
\end{lemma}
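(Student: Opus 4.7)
The statement has three parts: the length formula, the long-distance $\eps'$-synchronization property on intervals of aggregated length at least $4N/\eps$, and the highly explicit construction. The length is immediate since $\mathcal{C}$ has $q^{RN}$ codewords each of length $N$, giving $|S| = q^{RN}\cdot N = n$.

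For the synchronization property I plan to upper bound the longest common subsequence and then translate through $ED = l_1 + l_2 - 2\,LCS$. Fix non-overlapping intervals $I_1 = [i_1,j_1)$ and $I_2 = [i_2,j_2)$ with $j_1 \le i_2$ and total length $l = l_1 + l_2 \ge 4N/\eps$. First I would decompose each $S[I_k]$ into a (possibly empty) partial-codeword prefix of length $<N$, $m_k$ complete codewords of $\mathcal{C}$, and a partial-codeword suffix of length $<N$. The key structural observation is that every codeword of $\mathcal{C}$ appears exactly once in $S$ and $I_1 \cap I_2 = \emptyset$, so any full codeword inside $I_1$ is a different codeword from any full codeword inside $I_2$. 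Interpreting distance $1-\eps$ so that the minimum pairwise edit distance of $\mathcal{C}$ is $2(1-\eps)N$, the distance guarantee then gives $LCS(F_i^1, F_j^2) \le \eps N$ for every such pair.

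Next, fix an optimal monotone matching $M$ realizing the LCS and split its matches into two buckets: those with at least one endpoint in a partial codeword, and those with both endpoints in full codewords. The first bucket is bounded by the total size of the four partial codewords, namely $p_1 + s_1 + p_2 + s_2 < 4N$. For the second bucket, monotonicity of $M$ forces the set of pairs $(i,j)$ carrying any match between $F_i^1$ and $F_j^2$ to form a chain in the product order on $\{1,\ldots,m_1\}\times\{1,\ldots,m_2\}$: if $(i,j)$ and $(i',j')$ both carry matches and $i\le i'$, then monotonicity gives $j\le j'$. Along any such chain $i+j$ strictly increases, so the chain has at most $m_1+m_2-1$ elements, each contributing at most $\eps N$ matches. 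Because $(m_1+m_2)N \le l_1 + l_2 = l$, the second bucket contributes at most $\eps l$. Hence $LCS < 4N + \eps l$, giving $ED > l - 8N - 2\eps l = (1-2\eps)l - 8N$, and since $l \ge 4N/\eps$ implies $8N \le 2\eps l$, we conclude $ED > (1-4\eps)l = (1-\eps')l$ as required.

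For the highly explicit construction under linear encoding, a query $i$ is answered by computing $k = \lceil i/N \rceil$ and $o = i-(k-1)N$, invoking the linear-time encoder of $\mathcal{C}$ on the $k$-th message to obtain the $k$-th codeword in $O(N)$ time, and returning its $o$-th symbol. Since $\log n = RN\log q + \log N$ forces $N = \Theta(\log n / \log q)$, this is $O(\log n)$ per query. I anticipate the chain/staircase bound for full-full codeword pairs to be the main step requiring care, as it must handle the possibility of several matches living between the same pair of full codewords; a minor additional subtlety is the edge case $j_1 = i_2$ where a single codeword could contribute partial blocks to both intervals, but its contribution is still absorbed into the $<4N$ partial-bucket bound.
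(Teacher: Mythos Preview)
Your proof is correct and follows essentially the same approach as the paper: both arguments discard the at most $4N$ symbols lying in partial codewords, use monotonicity of an optimal common-subsequence matching to bound the number of full-codeword pairs by $m_1+m_2$ (the paper's $p+q$), and invoke the code distance to cap each pair's contribution by $\eps N$. The only cosmetic difference is that the paper argues by contradiction and pigeonhole to exhibit a single block pair violating the distance, whereas you sum the per-pair contributions directly to bound the LCS; the resulting inequalities are the same.
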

\shortOnly{
\begin{proof}[Proof Sketch]
To prove the claim, we essentially show that the edit distance of two strings consisting of a large enough number of codewords of $\mathcal{C}$ in total can be bounded below in terms of $\eps$.
\end{proof}
}
\global\def\ProofOfLemCodeBlock
{\shortOnly{\subsection{Proof of Lemma~\ref{thm:CodeBlock}}}
\begin{proof}
The length of $S$ follows from the definition of rate. Moreover, the highly explicitness follows from the fact that every substring of $S$ of length $\log n$ may include parts of 
$\frac{1}{\eps\log q} + 1$ codewords each of which can be computed in linear time in terms of their length. Therefore, any substring $S[i, i+\log n]$ can be constructed in $O\left(\max\left\{\frac{\log n}{\eps\log q}, \log n\right\}\right) = O_{\eps, q}(\log n)$. 
To prove the long distance property, we have to show that for every four indices $i_1<j_1\le i_2 < j_2$ where $j_1 + j_2 - i_1 - i_2 \ge \frac{4 N}{\eps}$, we have
\begin{equation}\label{eqn:claim}
ED(S[i_1, j_1),S[i_2, j_2))\ge(1-4\eps) (j_1 + j_2 - i_1 - i_2).
\end{equation}

Assume that $S[i_1, j_1)$ contains a total of $p$ complete blocks of $\mathcal{C}$ and $S[i_2, j_2)$  contains $q$ complete blocks of $\mathcal{C}$. Let $S[i'_1, j'_1)$ and $S[i'_2, j'_2)$ be the strings obtained be throwing the partial blocks away from $S[i_1, j_1)$ and $S[i_2, j_2)$. 
Note that the overall length of the partial blocks in $S[i_1, j_1)$ and $S[i_2, j_2)$ is less than $4N$, which is at most an $\eps$-fraction of $S[i_1, j_1)\cup S[i_2, j_2)$, since $\frac{4N}{4N/\eps} < \eps$.

Assume by contradiction that $ED(S[i_1, j_1), S[i_2, j_2)) < (1-4\eps) (j_1 + j_2 - i_1 - i_2)$. Since edit distance preserves the triangle inequality, we have that 
\begin{align*}
ED\left(S[i'_1, j'_1),S[i'_2, j'_2)\right) &\le ED\left(S[i_1, j_1),S[i_2, j_2)\right) + |S[i_1, i'_1)|+ |S[j'_1, j_1)| + |S[i_2, i'_2)|+ |S[j'_2, j_2)|\\ 
\leq &\left(1-4\eps\right)(j_1 + j_2 - i_1 - i_2) + \eps(j_1 + j_2 - i_1 - i_2)\\
\leq &\left(1-4\eps+\eps\right)(j_1 + j_2 - i_1 - i_2)\\
 <&\left(\frac{1-3\eps}{1-\eps}\right)\left((j'_1 - i'_1) + (j'_2 - i'_2)\right). 
 \end{align*}

This means that the longest common subsequence of $S[i'_1, j'_1)$ and $S[i'_2, j'_2)$ has length of at least \[\frac{1}{2}\left[\left(|S[i'_1, j'_1)|+|S[i'_2, j'_2)|\right) \left(1- \frac{1-3\eps}{1-\eps}\right)\right],\] which means that there exists a monotonically increasing matching between $S[i'_1, j'_1)$ and $S[i'_2, j'_2)$ of the same size.
Since the matching is monotone, there can be at most $p+q$ pairs of error-correcting code blocks having edges to each other. The Pigeonhole Principle implies that there are two error-correcting code blocks $B_1$ and $B_2$ such that the number of edges between them is at least

\begin{eqnarray*}
&&\frac{\frac{1}{2}\left[\left(|S[i_1, j_1)|+|S[i_2, j_2)|\right) \left(1- \frac{1-3\eps}{1-\eps}\right)\right]}{p+q}\\
 &=& \frac{(p+q)N\left(1-\frac{1-3\eps}{1-\eps}\right)}{2(p+q)}\\
&>&\frac{1}{2} \left(1-\frac{1-3\eps}{1-\eps}\right)\cdot N.
\end{eqnarray*}

 Notice that this is also a lower bound on the longest common subsequence of $B_1$ and $B_2$. This means that
$$ED(B_1, B_2) < 2N - \left(1-\frac{1-3\eps/4}{1-\eps/4}\right)N<\left(1+\frac{1-3\eps}{1-\eps}\right)N = \frac{2-4\eps}{1-\eps}N < 2\left(1-\eps\right)N.$$

This contradicts the error-correcting code's distance property, which we assumed to be larger than $2(1-\eps)N$, and therefore we may conclude that for all indices $i_1<j_1\le i_2 < j_2$ where $j_1 + j_2 - i_1 - i_2 \ge \frac{4N}{\eps}$, \eqref{eqn:claim} holds.
\end{proof}
}
\fullOnly{\ProofOfLemCodeBlock}

We point out that even a brute force enumeration of a good insdel code could be used to give an $\eps$-synchronization string for long distance. All is needed is a string for small intervals. This one could be brute forced as well. Overall this gives an alternative polynomial time construction (still using the inspiration of long-distance codes, though). More importantly, if we use a linear time construction for the short distances and a linear time encodable insdel code, we get a simple $O_\eps(\log n)$-explicit long-distance $\eps$-synchronization string construction for which any interval $[i, i + O_\eps(\log n)]$ is computable in $O_\eps(\log n)$.

In the rest of this section, as depicted in Figure~\ref{fig:flow}, we first introduce a high distance, small alphabet error correcting code that is encodable in linear time in Lemma~\ref{lem:highDistanceSmallAlphabetECC} using a high-distance linear-time code introduced in~\cite{guruswami2005linear}. We then turn this code into a high distance insertion deletion code using the indexing technique from~\cite{haeupler2017synchronization}. Finally, we will employ this insertion-deletion code in the setup of Lemma~\ref{thm:CodeBlock} to obtain a highly-explicit linear-time long-distance synchronization strings.

\global\def\InsdelCodeProofPrelude{
Our codes are based on the following code from Guruswami and Indyk~\cite{guruswami2005linear}.

\begin{theorem}[Theorem 3 from \cite{guruswami2005linear}]\label{thm:GuruswamiIndykHighRateCodes}
For every $r$, $0 < r < 1$, and all sufficiently small $\epsilon > 0$, there exists a family of codes of rate $r$ and relative distance at least $(1 - r - \epsilon)$ over an alphabet of size $2^{O(\epsilon^{-4}r^{-1}\log(1/\epsilon))}$ such that codes from the family can be encoded in linear time and can also be (uniquely) decoded in linear time from $2(1 - r - \epsilon)$ fraction of half-errors, i.e., a fraction $e$ of errors and $s$ of erasures provided $2e + s \le (1 - r - \epsilon)$.
\end{theorem}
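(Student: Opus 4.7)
The plan is to establish the theorem via a concatenation of a small near-Singleton inner code with an expander-based outer code, following the Sipser--Spielman framework. First, I would fix parameters and build a small inner code $C_{\mathrm{in}}$ of rate slightly above $r$ and relative distance at least $1 - r - \epsilon/2$, over an alphabet of size $q_{\mathrm{in}} = 2^{O(\epsilon^{-1}\log(1/\epsilon))}$. Random linear codes meet the Singleton bound up to $\epsilon$ slack with overwhelming probability, so such a code exists, and because its block length is a constant depending only on $\epsilon$, it can be found by brute-force search in constant time and equipped with a brute-force half-error decoder that runs in constant time per inner block.

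Next, I would take a bipartite $(c,d)$-regular expander graph $G = (L \cup R, E)$ with very strong vertex expansion (e.g., expansion close to $c$ on subsets of size up to $\beta |L|$) and use it to define an expander code in the style of Spielman: each right-vertex enforces that the symbols on its neighborhood form a codeword of $C_{\mathrm{in}}$. Encoding reduces to computing a constant number of inner-code encodings along edges in parallel and composing with a systematic outer structure, which is linear-time since $G$ has bounded degree. Decoding proceeds iteratively: in each pass, every right-vertex runs the brute-force inner decoder on its neighborhood, and symbols flip according to a majority-style update. A potential function counting inner-block inconsistencies strictly decreases each round by the strong expansion, yielding convergence in $O(\log n)$ rounds, each taking linear total work, hence overall linear time.

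To push the unique-decoding radius all the way to $2e + s \le 1 - r - \epsilon$, I would layer in an erasure-handling step and use the inner code's ability to correct a combination of errors and erasures up to its own Singleton radius. The outer analysis shows that any adversarial half-error pattern of total fraction at most $1 - r - \epsilon$ can be localized so that most inner blocks see fewer than their distance fraction of half-errors, and the remaining fraction of \textbf{bad} inner blocks forms a set small enough to be corrected by the expansion argument. The alphabet blowup to $2^{O(\epsilon^{-4}r^{-1}\log(1/\epsilon))}$ arises from the quartic dependence of the required expander spectral gap on $\epsilon$, composed with the $q_{\mathrm{in}}$ blowup and a $1/r$ factor from repeating inner blocks to ensure linear encoding complexity across all rates.

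The main obstacle is the tight three-way trade-off between alphabet size, decoding radius at Singleton, and strictly linear time. Classical expander-code arguments typically give only a small absolute error-correction radius, and pushing the radius essentially to the full minimum distance requires the decoder to make progress even when many inner blocks are simultaneously corrupted past their unique-decoding radius. Handling this cleanly requires either a list-decoding primitive at the inner level followed by a voting step using outer expansion, or a multi-pass peeling analysis with a carefully chosen monovariant. Keeping the total work linear (rather than $O(n\log n)$) additionally forces the decoder to avoid re-scanning global state, which is the most delicate implementation-level constraint.
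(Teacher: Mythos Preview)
This theorem is not proved in the paper you are working from; it is quoted verbatim as ``Theorem 3 from \cite{guruswami2005linear}'' and used as a black box in the construction of Lemma~\ref{lem:highDistanceSmallAlphabetECC} and downstream results. There is therefore no proof in the paper to compare your proposal against.

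For what it is worth, your sketch is in the right spirit but does not match the actual Guruswami--Indyk argument. Their construction is not a direct Sipser--Spielman expander code with a brute-forced inner code; rather, it builds near-MDS codes by combining a spectral expander (of Ramanujan type) with Reed--Solomon or algebraic-geometry codes as local codes, and the decoding analysis goes through an iterative message-passing procedure whose convergence is controlled by the spectral gap via the expander mixing lemma, not by vertex expansion. Your proposal's claim that a majority-flip decoder on a vertex-expander reaches the full Singleton radius is the weak point: classical Sipser--Spielman analysis corrects only a small constant fraction of errors, and getting to $(1-r-\epsilon)/2$ errors genuinely requires the spectral machinery and the near-MDS inner codes, not a potential-function flip argument. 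The alphabet-size exponent $\epsilon^{-4}r^{-1}\log(1/\epsilon)$ in their result comes from the inner-code alphabet and block length needed to make the spectral error term negligible, not from ``quartic dependence of the expander spectral gap on $\epsilon$'' as you suggest. If you want to actually prove this statement, you should read \cite{guruswami2005linear} directly rather than reconstruct it; for the purposes of the present paper, simply citing it is what is done.
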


One major downside of constructing $\eps$-synchronization strings based on the code from Theorem~\ref{thm:GuruswamiIndykHighRateCodes} is the exponentially large alphabet size in terms of $\eps$. We concatenate this code with an appropriate small alphabet code to obtain a high-distance code over a smaller alphabet size.

\begin{lemma}\label{lem:smallAlphabetRandomCode}
For sufficiently small $\eps$ and $A, R > 1$, and any set $\Sigma_i$ of size $|\Sigma_i|=2^{O(\eps^{-5}\log(1/\eps))}$, there exists a code $C:\Sigma_i\rightarrow\Sigma_o^N$ with distance $1-\eps$ and rate $\eps^{R}$ where $|\Sigma_o| = O(\eps^{-A})$.
\end{lemma}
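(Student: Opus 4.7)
The plan is a straightforward Gilbert--Varshamov-style probabilistic existence argument, which is possible because $R > 1$ leaves enough slack in the rate for the union bound to go through. First I would fix the block length to realize the target rate. Since the message contains $\log |\Sigma_i| = O(\eps^{-5}\log(1/\eps))$ bits and each output symbol carries $\log |\Sigma_o| = A\log(1/\eps)$ bits, setting $N = \Theta(\eps^{-5-R}/A)$ yields a code of rate exactly $\eps^{R}$.

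I would then sample a uniformly random code: for every $x \in \Sigma_i$ pick $C(x) \in \Sigma_o^N$ independently and uniformly. For a fixed pair of distinct messages $x, y$, the number of coordinates on which $C(x)$ and $C(y)$ agree is distributed as $\mathrm{Bin}(N, 1/|\Sigma_o|)$ with mean $N \eps^A$. Because $A > 1$ this mean is vanishingly small compared with the threshold $\eps N$, and the sharp multiplicative Chernoff (KL-divergence) bound then yields
$$\Pr\bigl[\Delta(C(x), C(y)) \leq (1-\eps) N\bigr] \leq \exp\!\bigl(-N\, \eps\, (A-1)\log(1/\eps)\,(1 - o(1))\bigr),$$
where $\Delta$ denotes Hamming distance.

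A union bound over the $|\Sigma_i|^2 = 2^{O(\eps^{-5}\log(1/\eps))}$ pairs of messages then gives a total failure probability of at most
$$\exp\!\bigl(O(\eps^{-5}\log(1/\eps)) \;-\; \Omega(\eps^{-4-R}\log(1/\eps))\bigr),$$
which tends to $0$ for sufficiently small $\eps$ precisely because $R > 1$ forces $\eps^{-4-R} \gg \eps^{-5}$. Hence a code with the required parameters exists. The delicate point, and the main obstacle, is quantitative: one cannot afford the weaker Hoeffding bound $\exp(-N \eps^{2})$ in the per-pair analysis, since that would only close the argument when $R > 2$. Exploiting the fact that the expected agreement $N\eps^A$ is genuinely much smaller than the threshold $\eps N$ via the KL-divergence tail is exactly what allows the stated slack $R > 1$ to be enough.
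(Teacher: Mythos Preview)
Your proposal is correct and follows essentially the same approach as the paper: a random-code existence argument with a union bound over all $|\Sigma_i|^2$ pairs, using the block length $N = \Theta(\eps^{-5-R}/A)$ and the fact that $R>1$ makes $\eps^{-4-R}$ dominate $\eps^{-5}$. The only cosmetic difference is that the paper bounds the per-pair failure probability directly by $\binom{N}{\eps N}\,|\Sigma_o|^{-\eps N}$ and then applies $\binom{N}{\eps N}\le(e/\eps)^{\eps N}$, whereas you phrase the same estimate as the KL-divergence form of Chernoff; these give the identical exponent $-N\eps(A-1)\log(1/\eps)(1-o(1))$, and your remark that the coarser Hoeffding bound would only work for $R>2$ is spot on.
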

\global\def\ProofOfLemSmallAlphabetRandomCode{
\begin{proof}
To prove the existence of such code, we show that a random code with distance $\delta= 1-\eps$, rate $r=\eps^{A}$, alphabet size $|\Sigma_o|=\eps^{-A}$, and block length 
$$N=\frac{\log |\Sigma_i|}{\log |\Sigma_o|}\cdot \frac{1}{r} = O\left(\frac{\eps^{-5}\log(1/\eps)}{A\log(1/\eps)}\cdot\frac{1}{\eps^{R}}\right) = 
\frac{1}{A}\cdot O\left(\eps^{-5 - R}\right)$$
exists with non-zero probability. The probability of two randomly selected codewords of length $N$ out of $\Sigma_o$ being closer than $\delta=1-\eps$ can be bounded above by the following term.
$${N\choose N\eps} \left(\frac{1}{|\Sigma_o|}\right)^{-N\eps}$$
Hence, the probability of the random code with $|\Sigma_o|^{Nr} = |\Sigma_1|$ codewords having a minimum distance smaller than $\delta=1-\eps$ is at most the following.
\begin{eqnarray*}
&&{N\choose N\eps} \left(\frac{1}{|\Sigma_o|}\right)^{N\eps}{|\Sigma_i|\choose 2}\\
&\le&\left(\frac{N e}{N\eps}\right)^{N\eps} \frac{|\Sigma_i|^{2}}{|\Sigma_o|^{N\eps}}\\
&=& \left(\frac{e}{\eps}\right)^{N\eps} \frac{ 2^{O(\eps^{-5}\log(1/\eps))}}{(\eps^{-A})^{N\eps}}\\
&=& 2^{O((1-A)\log(1/\eps)N\eps + \eps^{-5}\log(1/\eps))}\\
&=& 2^{(1-A)O(\eps^{-4-R}\log(1/\eps)) + O(\eps^{-5}\log(1/\eps))}
\end{eqnarray*}

For $A>1$, $1-A$ is negative and for $R > 1$, $\eps^{-4-R}\log(1/\eps)$ is asymptotically larger than $\eps^{-5}\log(1/\eps)$. Therefore, for sufficiently small $\eps$, the exponent is negative and the desired code exists.
\end{proof}
}
\ProofOfLemSmallAlphabetRandomCode

Concatenating the code from Theorem~\ref{thm:GuruswamiIndykHighRateCodes} (as the outer code) and the code from Lemma~\ref{lem:smallAlphabetRandomCode} (as inner code) gives the following code.

\begin{lemma}\label{lem:highDistanceSmallAlphabetECC}
For sufficiently small $\eps$ and any constant $0<\gamma$, there exists an error correcting code of rate $O(\eps^{2.01})$ and distance $1-\eps$ over an alphabet of size $O(\eps^{-(1+\gamma)})$ which is encodable in linear time and also uniquely decodable from an $e$ fraction of erasures and $s$ fraction of symbol substitutions when $s+2e < 1-\eps$ in linear time.
\end{lemma}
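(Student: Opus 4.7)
The plan is to construct the code by concatenating the Guruswami--Indyk high-rate code of Theorem~\ref{thm:GuruswamiIndykHighRateCodes} (as the outer code) with the small-alphabet random code of Lemma~\ref{lem:smallAlphabetRandomCode} (as the inner code), with parameters chosen so that the outer code's output alphabet matches the inner code's input alphabet.

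Concretely, I would first instantiate Theorem~\ref{thm:GuruswamiIndykHighRateCodes} with outer rate $r_{\mathrm{out}} = \Theta(\eps)$ and slack $\eps_{\mathrm{out}} = \Theta(\eps)$, yielding an outer code of relative distance $1 - O(\eps)$ over an input alphabet $\Sigma_i$ of size $2^{O(\eps^{-5}\log(1/\eps))}$, which matches the largest input alphabet supported by Lemma~\ref{lem:smallAlphabetRandomCode}. I would then invoke Lemma~\ref{lem:smallAlphabetRandomCode} on this $\Sigma_i$ with $R = 1.01$ and $A = 1+\gamma$ to obtain an inner code $\Sigma_i \to \Sigma_o^{N_{\mathrm{in}}}$ of distance $1-\Theta(\eps)$, rate $\Theta(\eps^{1.01})$, and output alphabet $|\Sigma_o| = O(\eps^{-(1+\gamma)})$. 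A direct calculation gives inner block length $N_{\mathrm{in}} = O(\eps^{-O(1)})$, which is a constant in $n$.

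Tuning the various $\Theta(\eps)$ constants so that $(1-O(\eps))(1-\Theta(\eps)) \geq 1-\eps$ gives the concatenated code the four claimed parameters: distance $\geq 1-\eps$, rate $\Theta(\eps) \cdot \Theta(\eps^{1.01}) = \Theta(\eps^{2.01})$, and output alphabet of size $O(\eps^{-(1+\gamma)})$. Encoding is linear since the outer encoder is linear by Theorem~\ref{thm:GuruswamiIndykHighRateCodes} and each outer symbol is expanded to its inner codeword in $O_\eps(1)$ time via a precomputed lookup table of size $|\Sigma_i|\cdot N_{\mathrm{in}} = O_\eps(1)$.

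For linear-time decoding from any half-error pattern with $s + 2e < 1 - \eps$, my plan is a two-stage decoder. In the first stage, for each of the $n/N_{\mathrm{in}}$ inner blocks I would brute-force search the constant-size inner codebook to find the nearest codeword and its half-error distance to the received block; this is $O_\eps(1)$ per block and $O(n)$ overall. In the second stage, I would feed the resulting outer-symbol sequence to the Guruswami--Indyk errors-and-erasures decoder. The main obstacle is that a naive threshold rule (output the inner codeword if within $d_{\mathrm{in}}/2$ half-errors, declare erasure otherwise) only achieves unique decoding up to $s + 2e < (1-\eps)/2$, losing a factor of two from the claim; to close this gap I would wrap the second stage in a Forney-style GMD loop that iterates over at most $O(N_{\mathrm{in}})$ erasure thresholds determined by the sorted inner-block reliabilities and invokes the outer decoder at each threshold, returning the first outer codeword whose concatenation-encoding agrees with the received word on more than half of the remaining positions. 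Since $N_{\mathrm{in}}$ is a constant in $n$ and the outer decoder is linear time, the overall decoder remains $O(n)$.
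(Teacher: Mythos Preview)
Your construction is exactly the paper's: concatenate the Guruswami--Indyk code (with $r,\epsilon = \Theta(\eps)$) as outer code with the brute-forced small-alphabet code of Lemma~\ref{lem:smallAlphabetRandomCode} (with $A=1+\gamma$, $R=1.01$) as inner code, and read off the rate, distance, and alphabet size. Your treatment of decoding is in fact more careful than the paper's---the paper simply asserts the concatenated code is linear-time decodable up to $s+2e<1-\eps$ half-errors without addressing the factor-of-two gap you identify, whereas your GMD loop over the $O(N_{\mathrm{in}})=O_\eps(1)$ distinct inner-reliability thresholds is a correct and standard way to close that gap while keeping the decoder linear in $n$.
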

\global\def\ProofOfLemHighDistanceSmallAlphabetECC{
\begin{proof}
To construct such code, we simply codes from Theorem~\ref{thm:GuruswamiIndykHighRateCodes} and Lemma~\ref{lem:smallAlphabetRandomCode} as outer and inner code respectively. Let $\mathcal{C}_1$ be an instantiation of the code from Theorem~\ref{thm:GuruswamiIndykHighRateCodes} with parameters $r=\eps/4$ and $\epsilon = \eps/4$. Code $\mathcal{C}_1$ is a code of rate $r_1 = \eps/4$ and distance $\delta_1=1-\eps/4-\eps/4 = 1-\eps/2$ over an alphabet $\Sigma_1$ of size 
$2^{O(\epsilon^{-4}r^{-1}\log(1/\epsilon))} = 2^{O(\eps^{-5}\log(1/\eps))}$
 which is encodable and decodable in linear time. 

Further, according to Lemma~\ref{lem:smallAlphabetRandomCode}, one can find a code $\mathcal{C}_2:\Sigma_1 \rightarrow \Sigma_2^{N_2}$ for $\Sigma_2=\eps^{-(1+\gamma)}$ with distance $\delta_2=1-\eps/2$ rate $r_2=O(\eps^{1.01})$ by performing a brute-force search. Note that block length and alphabet size of $\mathcal{C}_2$ is constant in terms of $n$. Therefore, such code can be found in $O_\eps(1)$ and by forming a look-up table can be encoded and decoded from $\delta$ half-errors in $O(1)$. Hence, concatenating codes $\mathcal{C}_1$ and $\mathcal{C}_2$ gives a code of distance 
$\delta=\delta_1 \cdot\delta_2=(1-\eps/2)^2 \ge 1-\eps$ and rate $r=r_1\cdot r_2 = O(\eps^{2.01})$ over an alphabet of size $|\Sigma_2|=O\left(\eps^{-(1+\gamma)}\right)$ which can be encoded in linear time in terms of block length and decoded from $e$ fraction of erasures and $s$ fraction of symbol substitutions when $s+2e < 1-\eps$ in linear time as well.
\end{proof}
}
\ProofOfLemHighDistanceSmallAlphabetECC

Indexing the codewords of a code from Lemma~\ref{lem:highDistanceSmallAlphabetECC} with linear-time constructible synchronization strings of Theorem~\ref{thm:linearTimeConstruction} using the technique from~\cite{haeupler2017synchronization} summarized in Theorem~\ref{thm:mainECC} gives Theorem~\ref{thm:insdelCode}.
}\fullOnly{\InsdelCodeProofPrelude}
\begin{theorem}\label{thm:insdelCode}
For sufficiently small $\eps$, there exists a family of insertion-deletion codes with rate $\eps^{O(1)}$ that correct from $1-\eps$ fraction of insertions and deletions over an alphabet of size $\eps^{O(1)}$ that is encodable in linear time and decodable in quadratic time in terms of the block length.
\end{theorem}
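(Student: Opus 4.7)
The plan is to simply instantiate the code-plus-indexing machinery of Theorem~\ref{thm:mainECC} with the two fast ingredients just built: the high-distance linear-time encodable ECC of Lemma~\ref{lem:highDistanceSmallAlphabetECC} as the outer code, and a linear-time $\eps'$-synchronization string from Theorem~\ref{thm:linearTimeConstruction} as the indexing object, decoded by the global decoder from Theorem~\ref{thm:globalDecoding}.

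Concretely, I would first fix the error budget. We need to recover from a $\delta = 1-\eps$ fraction of insertions and deletions. Theorem~\ref{thm:mainECC} converts this task, via indexing by a synchronization string with $k$ misdecodings, into the task of decoding an outer ECC from $n\delta + 2k$ half-errors. Using Theorem~\ref{thm:globalDecoding} on an $\eps'$-synchronization string gives $k = O(n\sqrt{\eps'})$ misdecodings, so the outer code must tolerate $(1-\eps)n + O(n\sqrt{\eps'})$ half-errors. Choosing $\eps' = \Theta(\eps^2)$ pushes the misdecoding slack down to $O(\eps n)$, so the requirement on the outer code becomes: tolerate $(1 - \Theta(\eps))n$ half-errors, i.e., distance $1 - \Theta(\eps)$.

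Next I apply Lemma~\ref{lem:highDistanceSmallAlphabetECC} with its distance parameter set to $\Theta(\eps)$. This yields an outer ECC $\mathcal{C}$ of rate $\eps^{O(1)}$ over an alphabet of size $\eps^{-O(1)}$, encodable in linear time and decodable in linear time from the required half-error budget. In parallel, Theorem~\ref{thm:linearTimeConstruction} gives an $\eps'$-synchronization string $S$ of the right length over an alphabet of size $\eps'^{-O(1)} = \eps^{-O(1)}$, constructible in linear time. Plugging both into Theorem~\ref{thm:mainECC}, the resulting insdel code has combined alphabet size $\eps^{-O(1)}$ and rate $R_{\mathcal{C}}/(1 + \log|\Sigma_S|/\log|\Sigma_{\mathcal{C}}|) = \eps^{O(1)}/O(1) = \eps^{O(1)}$, and corrects $(1-\eps)n$ insertions and deletions.

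For the complexities: encoding is dominated by $T_{\mathcal{E}_{\mathcal{C}}}$ which is linear, and the linear-time synchronization string construction is only a (linear-time) preprocessing step; so encoding is linear. Decoding runs $S$'s global decoder in $O(n^2/\sqrt{\eps'}) = O(n^2/\eps)$ time and then the outer ECC decoder in linear time, for total $O(n^2)$ with constants depending on $\eps$, i.e., quadratic. The only step requiring care is the bookkeeping that verifies $\eps' = \Theta(\eps^2)$ really does drive $n\delta + 2k$ below the distance of the outer code chosen from Lemma~\ref{lem:highDistanceSmallAlphabetECC}; this is a one-line inequality, and I do not anticipate any serious obstacle beyond matching constants so that ``sufficiently small $\eps$'' makes the composition valid.
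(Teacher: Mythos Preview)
Your proposal is correct and essentially identical to the paper's own proof: the paper instantiates Theorem~\ref{thm:mainECC} with the ECC of Lemma~\ref{lem:highDistanceSmallAlphabetECC} at parameter $\eps/2$ and the linear-time $\eps''$-synchronization string of Theorem~\ref{thm:linearTimeConstruction} at $\eps''=(\eps/2)^2$, decoded via Theorem~\ref{thm:globalDecoding}, and then reads off the same rate, alphabet, and complexity bounds you derived.
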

\global\def\ProofOfTheoremInsdelCode{
\shortOnly{\subsection{Proof of Theorem~\ref{thm:insdelCode}}
\subsubsection{Some useful Lemmas}
\InsdelCodeProofPrelude
\subsubsection{Proof of Theorem~\ref{thm:insdelCode}}}
\begin{proof}
Theorem~\ref{thm:mainECC} provides a technique to convert an error correcting code into an insertion-deletion code by indexing the codewords with a synchronization string. We use the error correcting code $\mathcal{C}$ from Lemma~\ref{lem:highDistanceSmallAlphabetECC} with parameter $\eps'=\eps/2$ and $\gamma=0.01$ along with a linear-time constructible synchronization strings $S$ from Theorem~\ref{thm:linearTimeConstruction} with parameter $\eps''=(\eps/2)^2$ in the context of Theorem~\ref{thm:mainECC}. We also use the global decoding algorithm from Haeupler and Shahrasbi~\cite{haeupler2017synchronization} for the synchronization string. This will give an insertion deletion code over an alphabet of size $\eps^{O(1)}$ corrects from $(1-\eps') -\sqrt{\eps''}=1-\eps$ insdels with a rate of
\fullOnly{$$\frac{r_{\mathcal{C}}}{1+|\Sigma_{S}|/|\Sigma_{\mathcal{C}}|}
=\frac{O\left(\eps^{2.01}\right)}{1+O(\eps''^{-O(1)}/\eps^{-1.01})} = \eps^{O(1)}.$$}
\shortOnly{$\frac{r_{\mathcal{C}}}{1+|\Sigma_{S}|/|\Sigma_{\mathcal{C}}|}
=\frac{O\left(\eps^{2.01}\right)}{1+O(\eps''^{-O(1)}/\eps^{-1.01})} = \eps^{O(1)}.$}
As $\mathcal{C}$ is encodable and $S$ is constructible in linear time, the encoding time for the insdel code will be linear. Further, as $\mathcal{C}$ is decodable in linear time and $S$ is decodable in quadratic time (using global decoding from~\cite{haeupler2017synchronization}), the code is decodable in quadratic time.
\end{proof}
}
\fullOnly{\ProofOfTheoremInsdelCode}
Using insertion-deletion code from Theorem~\ref{thm:insdelCode} and boosting step from Lemma~\ref{thm:CodeBlock}, we can now proceed to the main theorem of this section that provides a highly explicit construction for $c=O_\eps(1)$-long-distance synchronization strings.

\begin{theorem}\label{thm:FiniteLongDistConstructionNEW}
There is a deterministic algorithm that, for any constant $0< \eps  < 1$ and $n \in \mathbb{N}$, computes an $c=\eps^{-O(1)}$-long-distance $\eps$-synchronization string $S \in \Sigma^n$ where $|\Sigma|=\eps^{-O(1)}$. Moreover, this construction is $O(\log n)$-explicit and can even compute $S[i, i+\log n]$ in $O_\eps(\log n)$ time.
\end{theorem}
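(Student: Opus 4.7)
The plan is to take $S$ to be the symbol-wise pairing $S[i] = (S_1[i], S_2[i])$ of two length-$n$ strings over a product alphabet. For $S_1$, I invoke Theorem~\ref{thm:insdelCode} to obtain a linearly encodable block insertion-deletion code $\mathcal{C}$ with alphabet size $\eps^{-O(1)}$, rate $\eps^{O(1)}$, and distance $1-\Theta(\eps)$. I choose the block length $N=\Theta_\eps(\log n)$ so that $|\mathcal{C}|\cdot N$ reaches $n$, and let $S_1$ be the concatenation of all codewords of $\mathcal{C}$ in some efficiently indexable order. By Lemma~\ref{thm:CodeBlock}, $S_1$ already satisfies the $\eps$-synchronization property for every pair of intervals of aggregate length at least $4N/\eps$ at any distance. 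Since each codeword has length $N=O_\eps(\log n)$ and is encodable in linear time, any length-$\log n$ window of $S_1$ touches only $O_\eps(1)$ codewords and is computable in $O_\eps(\log n)$ time, giving the required explicitness for the $S_1$ component.

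For $S_2$, I apply Theorem~\ref{thm:linearTimeConstruction} to build a length-$L$ $\eps$-synchronization string $S'$ with $L=\Theta_\eps(\log n)\ge 4N/\eps$ in $O_\eps(\log n)$ time, then define
$$S_2[i]\;=\;\Bigl(S'\bigl[((i-1)\bmod L)+1\bigr],\; \lfloor (i-1)/L\rfloor \bmod 2\Bigr).$$
Setting $S[i]:=(S_1[i],S_2[i])$ yields the final string over a product alphabet of size $\eps^{-O(1)}$, with per-window access time $O_\eps(\log n)$ (sub-constants of $\eps$ are chosen in the pieces so that the composed parameters match the target $\eps$).

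With $c = 4N/(\eps\log n) = \eps^{-O(1)}$, the $c$-long-distance $\eps$-synchronization property splits into two cases. Any pair of intervals of combined length $l>c\log n$ at arbitrary distance is handled directly by the $S_1$-coordinate via Lemma~\ref{thm:CodeBlock}. For neighboring pairs with combined length $l\le c\log n\le L$, either both intervals lie in a single tile of $S_2$ — in which case $S'$'s $\eps$-synchronization property suffices on the $S'$-coordinate — or they straddle exactly one tile boundary. In the straddling case, by symmetry one interval $A$ lies fully in one tile while the other splits as $B=B_1\cdot B_2$ across two tiles; the parity tag forces $LCS(A,B_2)=0$, and $A$ and $B_1$ align with an adjacent pair of factors of $S'$, so $S'$'s $\eps$-synchronization property bounds $LCS(A,B_1)$ by a small multiple of $\eps(|A|+|B_1|)$. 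A routine calculation then yields $ED(A,B)\ge (1-\eps)l$, as required.

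The main technical hurdle is precisely this tile-boundary straddling case: a naive tiling of $S'$ would lose the synchronization property across tile boundaries, but the parity-tag augmentation fixes it at only a factor-$2$ alphabet blowup, preserving $|\Sigma|=\eps^{-O(1)}$ and $O_\eps(\log n)$-explicitness. Everything else is a direct bolt-together of Theorem~\ref{thm:insdelCode}, Lemma~\ref{thm:CodeBlock}, and Theorem~\ref{thm:linearTimeConstruction} from the prior subsections.
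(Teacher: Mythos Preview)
Your proof is correct and follows the same high-level architecture as the paper: use the insdel code from Theorem~\ref{thm:insdelCode} via Lemma~\ref{thm:CodeBlock} to handle all pairs of intervals with aggregate length at least $c\log n$, and handle short neighboring pairs by a periodic tiling of a short $\eps$-synchronization string $S'$ from Theorem~\ref{thm:linearTimeConstruction}. The explicitness and alphabet bounds follow identically.

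The one genuine difference is how you resolve the tile-boundary issue. You use a single tiling of $S'$ together with a one-bit parity tag, and then argue the straddling case by splitting the LCS across the boundary: the parity tag kills the cross-tile piece, and the within-tile pieces form an adjacent pair of factors of $S'$. This is correct (the inequality $LCS(A_1\cdot A_2,B)\le LCS(A_1,B)+LCS(A_2,B)$ holds by splitting any monotone matching at the boundary). The paper instead avoids the straddling case altogether: it takes $S'$ of length $2c\log n$ and overlays \emph{two} tilings of $S'$ offset by $c\log n$, so that every interval of length at most $c\log n$ lies entirely inside a tile of at least one of the two layers. Your approach buys a smaller alphabet blowup ($2|\Sigma_{S'}|$ versus $|\Sigma_{S'}|^2$) at the cost of the extra straddling argument; the paper's approach buys a one-line verification at the cost of doubling $|S'|$ and using two copies. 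Both are well within the $\eps^{-O(1)}$ budget, so the difference is cosmetic.
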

\shortOnly{
\begin{proof}[Proof Sketch]
We construct such string by attaching codewords of a code from Theorem~\ref{thm:insdelCode} and attaching it, symbol by symbol, to a string $T$ that satisfies synchronization string property for (logarithmically) short intervals.
\begin{eqnarray}
R[i]=(S[i], T[i]) = \Bigg(\mathcal{C}\left(\left\lfloor \frac{i}{N}\right\rfloor\right) \left[i \left( \bmod\, N\right)\right], T[i]\Bigg)\label{eq:HighlyExplicitLongDistConstruction}
\end{eqnarray}
Such string $T$ can be found by attaching instances of a short synchronization string as is described in details in the full proof in Appendix~\ref{app:ProofOfThmFiniteLongDistConstructionNEW} and depicted in Figure~\ref{fig:construction}. 
\end{proof}
}
\global\def\ProofOfThmFiniteLongDistConstructionNEW{
\shortOnly{\subsection{Proof of Theorem~\ref{thm:FiniteLongDistConstructionNEW}}\label{app:ProofOfThmFiniteLongDistConstructionNEW}}
\begin{proof}
We simply use an insertion-deletion code from Theorem~\ref{thm:insdelCode} with parameter $\eps'=\eps/4$ and block length 
$N=\frac{\log_q n}{R}$ 
where $q=\eps^{-O(1)}$ is the size of the alphabet from Theorem~\ref{thm:insdelCode}. 
Using this code in Lemma~\ref{thm:CodeBlock} gives a string $S$ of length $q^{RN}\cdot N \ge n$ that satisfies $4\eps'=\eps$-synchronization property over any pair of intervals of total length 
$\frac{4N}{\eps} = O\left(\frac{\log n}{\eps R \log q}\right) = O\left(\eps^{-O(1)} \log n\right)$
or more. Since the insertion-deletion code from Theorem~\ref{thm:insdelCode} is linearly encodable, the construction will be highly-explicit.

To turn $S$ into a $c$-long-distance $\eps$-synchronization string for $c=\frac{4N}{\eps\log n}=O\left(\eps^{-O(1)}\right)$, we simply concatenate it with a string $T$ that satisfies $\eps$-synchronization property for neighboring intervals of total size smaller than $c\log n$. In other words, we propose the following structure for constructing $c$-long-distance $\eps$-synchronization string $R$.
\fullOnly{
\begin{eqnarray}
R[i]=(S[i], T[i]) = \Bigg(\mathcal{C}\left(\left\lfloor \frac{i}{N}\right\rfloor\right) \left[i \left( \bmod\, N\right)\right], T[i]\Bigg)\label{eq:HighlyExplicitLongDistConstruction}
\end{eqnarray}
}\shortOnly{
\begin{eqnarray*}
R[i]=(S[i], T[i]) = \Bigg(\mathcal{C}\left(\left\lfloor \frac{i}{N}\right\rfloor\right) \left[i \left( \bmod\, N\right)\right], T[i]\Bigg)
\end{eqnarray*}
}

Let $S'$ be an $\eps$-synchronization string of length $2c\log n$. Using linear-time construction from Theorem~\ref{thm:linearTimeConstruction}, one can find $S'$ in linear time in its length, i.e, $O(\log n)$. We define strings $T_1$ and $T_2$ consisting of repetitions of $S'$\fullOnly{ as follows.
$$T_1 = (S', S', \cdots, S'), \qquad T_2=(0^{c\log n}, S', S', \cdots, S')$$}
\shortOnly{as $T_1 = (S', S', \cdots, S'), T_2=(0^{c\log n}, S', S', \cdots, S')$.}
The string $T_1\cdot T_2$ satisfies $\eps$-synchronization strings for neighboring intervals of total length $c\log n$ or less as any such substring falls into one copy of $S'$. 
Note that having $S'$ one can find any symbol of $T$ in linear time. Hence, $T$ has a highly-explicit linear time construction. Therefore, concatenating $S$ and $T$ gives a linear time construction for $c$-long-distance $\eps$-synchronization strings over an alphabet of size $\eps^{-O(1)}$ that is highly-explicit and, further, allows computing any substring $[i, i+\log n]$ in $O(\log n)$ time. A schematic representation of this construction can be found in Figure~\ref{fig:construction}.
\begin{figure}
\centering
\includegraphics[scale=.90]{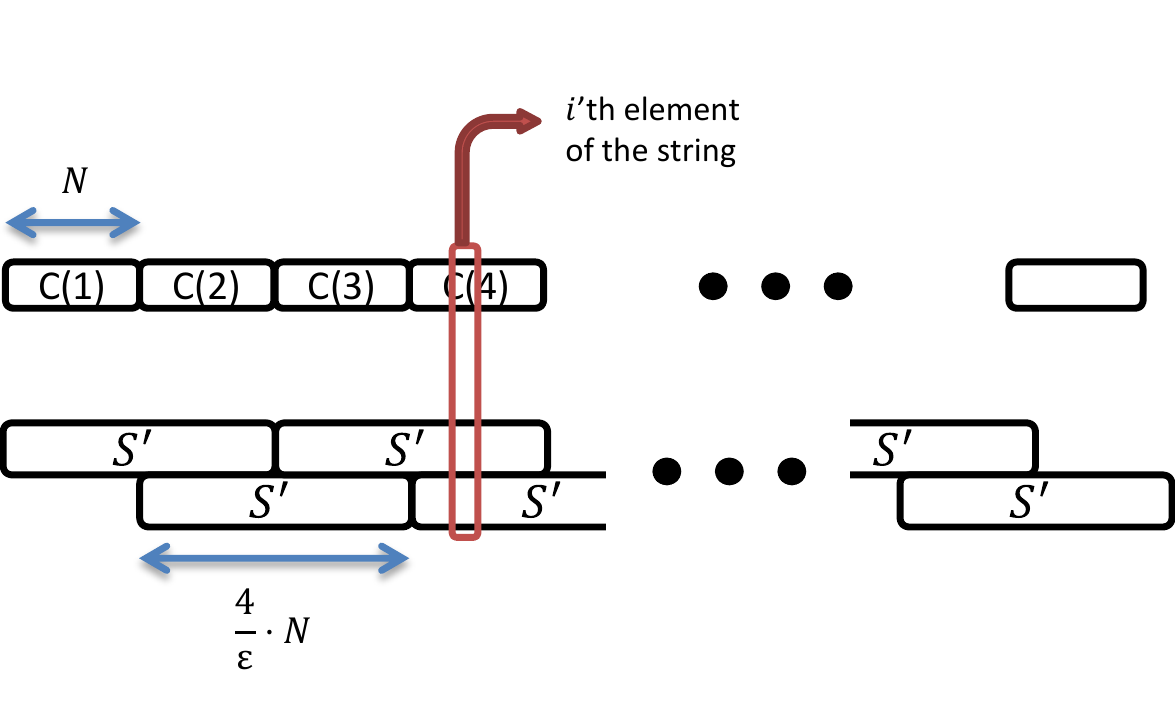}
\caption{Pictorial representation of the construction of a long-distance $\eps$-synchronization string of length $n$.}
\label{fig:construction}
\end{figure}
\end{proof}
}
\fullOnly{\ProofOfThmFiniteLongDistConstructionNEW}

\exclude{
\hrulefill
\begin{proof}
In order to proof this Lemma, we make use of Lemma~\ref{thm:CodeBlock} along with a high-distance insertion-deletion code over $O(\eps^{-O(1)})$ alphabet which is encodable and in linear time. To obtain such insertion-deletion code, we use an ordinary error correcting codes with similar guarantees and use the indexing technique of Haeupler and Shahrasbi~\cite{haeupler2017synchronization} to turn it into an insertion-deletion code. To find such error correcting code, we concatenate the following error correcting code by Guruswami and Indyk~\cite{guruswami2005linear} with an optimal error correcting code to reduce alphabet size that we will find using a brute-force search. 

\begin{theorem}[Theorem 3 from \cite{guruswami2005linear}]\label{thm:GuruswamiIndykHighRateCodes}
For every $r$, $0 < r < 1$, and all sufficiently small $\epsilon > 0$, there exists an explicitly specified family of GF(2)-linear (also called additive) codes of rate $r$ and relative distance at least $(1 - r - \epsilon)$ over an alphabet of size $2^{O(\epsilon^{-4}r^{-1}\log(1/\epsilon))}$ such that codes from the family can be encoded in linear time and can also be (uniquely) decoded in linear time from a fraction $e$ of errors and $s$ of erasures provided $2e + s \le (1 - r - \epsilon)$.
\end{theorem}

Note that with $\epsilon = r = \frac{\eps}{8}$, Theorem~\ref{thm:GuruswamiIndykHighRateCodes} gives a linear-time encodable error correcting code $\mathcal{C}_1$ of distance $\delta_1 = 1-\frac{\eps}{4}$, rate $r_1 = \frac{\eps}{8}$, and alphabet size $|\Sigma_1| = 2^{O(\eps^{-5}\log(1/\eps))}$ which is encodable and decodable in linear time.

For the next step, we concatenate this code as the outer code with an inner code with $|\Sigma_1|$ many codewords which translates each symbol of the large alphabet of this outer code to multiple symbols over an small alphabet of size $|\Sigma_2|=O(\eps^{-O(1)})$. The inner code has a distance $\delta_2=1-\frac{\eps}{4}$ and rate $r_2 = \eps^{O(1)}$ over an alphabet of size $|\Sigma_2| = \eps^{-O(1)}$ to get the desired concatenated error correcting code.  Note that parameters of this code only depend on $\eps$ and are independent of $n$. Therefore, if we show that a code with such parameters exist, we can perform a complete search to find it in $O_\eps(1)$. Further, by forming look-up tables the encoding and decoding complexities for such codes will be in $O_\eps(1)$ time as well. We now proceed to prove the existence of such code.

To prove the existence of such code, we show that a random code with distance $\delta_2= 1-\frac{\eps}{4}$, rate $r_2=\eps^{1.01}$, alphabet size $|\Sigma_2|=\eps^{-c}$ for a sufficiently large $c$, and block length 
$$N_2=\frac{\log |\Sigma_1|}{\log |\Sigma_2|}\cdot \frac{1}{r} = O\left(\frac{\eps^{-5}\log(1/\eps)}{c\log(1/\eps)}\cdot\frac{1}{\eps^{1.01}}\right) = 
O\left(\eps^{-6.01}\right)$$
exists with non-zero probability. The probability of two randomly selected codewords of length $N_2$ over $\Sigma_2$ being closer than $\delta_2=1-\frac{\eps}{4}$ can be bounded above by the following term.
$${N_2\choose N_2\eps/4} \left(\frac{1}{|\Sigma_2|}\right)^{-N_2\eps/4}$$
Hence, the probability of the random code with $|\Sigma_2|^{N_2r_2} = |\Sigma_1|$ codewords having a minimum distance smaller than $\delta_2=1-\eps/4$ is at most the following.
\begin{eqnarray*}
&&{N_2\choose N_2\eps/4} \left(\frac{1}{|\Sigma_2|}\right)^{N_2\eps/4}{|\Sigma_1|\choose 2}\\
&\le&\left(\frac{N_2 e}{N_2\eps/4}\right)^{N_2\eps/4} \frac{|\Sigma_1|^{2}}{|\Sigma_2|^{N_2\eps/4}}\\
&=& \left(\frac{4e}{\eps}\right)^{N_2\eps/4} \frac{ 2^{O(\eps^{-5}\log(1/\eps))}}{\eps^{-c\cdot O(\eps^{-5.01})}}\\
&=& 2^{O(\eps^{-5.01}\log(1/\eps)) + O(\eps^{-5}\log(1/\eps)) - c\cdot O(\eps^{-5.01}\log(1/\eps))}
\end{eqnarray*}

Note that the term $\eps^{-5.01}\log(1/\eps))$ is asymptotically larger than $\eps^{-5}\log(1/\eps)$. Therefore, for sufficiently small $\eps$ and appropriately chosen constant $c$, the exponent is negative and therefore, the desired inner code exists.

Concatenating the inner and outer codes give a final error correcting code $\mathcal{C}$ of rate $r = r_1\cdot r_2 = O(\eps^{2.01})$, distance $\left(1-\frac{\eps}{4}\right)^2 \ge 1-\frac{\eps}{2}$, and alphabet size $\eps^{-O(1)}$ which is encodable and decodable in linear time.

Theorems 4.2 and 6.14 from~\cite{haeupler2017synchronization} provide a technique to transform $\mathcal{C}$ into an insertion-deletion code by attaching symbols of an $\eps'$ synchronization string to codewords of $\mathcal{C}$. Such insertion-deletion code will be correctable from $1-\frac{\eps}{2}-O(\sqrt\eps')$ synchronization errors, linear-time encodable, and decodable in quadratic time. Further, as $\eps'$-synchronization strings exist over alphabets of size $|\Sigma_{sync}|=\eps'^{-4}$, the eventual alphabet size is $O(\eps^{-O(1)}\cdot\eps'^{-O(1)})$ and the rate is
$$\frac{r_{\mathcal{C}}}{1+|\Sigma_{sync}|/|\Sigma_{\mathcal{C}}|}
=O\left(\frac{\eps^{2.01}}{1+O(\eps'^{-4}/\eps^{-O(1)})}\right).	$$
Taking $\eps' = \eps^{O(2.01)}$, the insdel code will be correctable from $1-\eps$ insertions or deletions, over an alphabet of size $O(\eps^{-O(1)})$, is linearly encodable and quadratically decodable. Further, if one takes constant $c$ in alphabet size of the brute force construction large enough, the final rate will be $O(\eps^{2.01})$.

We plug this insertion-deletion code into Lemma~\ref{thm:CodeBlock} to finish the proof.
We use this code with block length $N= O\left(\frac{\log n}{\eps^{2.01}\log(1/\eps)}\right)$. Attaching $R\cdot q^N$ codewords of this code as in the structure introduced in Lemma~\ref{thm:CodeBlock} gives a string of length $n=N\cdot  q^{RN} = n^{\Omega(1)}$ that satisfies long distance $4\eps$-synchronization property for all pairs of intervals of aggregated length of $\frac{4N}{\eps} = \Omega\left(\frac{\log n}{\eps^{3.01}\log(1/\eps)}\right)$ or more. Adjusting parameter $\eps$ by a factor of 4 proves the theorem statement for large intervals.

To take care of small intervals, we simply concatenate current string with repetitions of ordinary synchronization strings of length $l'=O\left(\frac{\log n}{\eps^{3.01}\log(1/\eps)}\right)$. Let $S'$ be such a synchronization string. We define $T_1$ and $T_2$ as follows.
$$T_1 = (S', S', \cdots, S'), \qquad T_2=(0^{l/2}, S', S', \cdots, S')$$
The string $T_1\cdot T_2$ satisfies $\eps$-synchronization strings for neighboring intervals of total length $l'/2$ or less. Note that one can use linear-time construction from Theorem~\ref{thm:linearTimeConstruction} to find $S'$.

As computing each $S[i]$ requires computing one codeword of the above-mentioned linear-time encodable insertion-deletion code and computing $S'$, it can be done in $O(\log n)$ time. Further, computing any interval $[i, i+\log n]$ requires computing $O(1)$ many codewords and $S'$ as well; therefore, that can also be done in $O(\log n)$ time.
\end{proof}

\exclude
{
\hrulefill

\begin{lemma}\label{lem:FiniteLongDistConstruction}
For any constant $0< \eps  < 1$ and $n \in \mathbb{N}$ there is a deterministic algorithm which computes a string $S \in \Sigma^n$ that satisfies $\eps$-synchronization property for any pair of intervals with aggregated length of $\Theta\left(\frac{1}{\eps^3\log(1/\eps)}\right)$ or more where $|\Sigma|=\eps^{-O(1)}$. Furthermore, this construction is nearly explicit, i.e., for any $i\in[n]$, $S[i]$ can be computed in $O(\log^{\text{poly}(1/\eps)} n)$.
\end{lemma}

\begin{proof}
%
%

This construction relies on the following deletion code introduced in \cite{guruswami2015deletion}. 

\begin{theorem}[Theorem 3.1 from \cite{guruswami2015deletion}]\label{thm:HighNoiseDeletionECC}
Let $0 < \epsilon < \frac{1}{2}$. There is an explicit code of rate $\Omega(\epsilon^2)$ and alphabet size $poly(\frac{1}{\epsilon})$ which can be corrected from a $1 - \epsilon$ fraction of worst-case deletions. Moreover, this code can be constructed, encoded, and decoded in time $N^{poly(1/\epsilon)}$, where $N$ is the block length of the code.
\end{theorem}

Having $\epsilon = \eps$ and $N=\Omega\left(\frac{\log n}{\eps^2\log(1/\eps)}\right)$, Theorem~\ref{thm:HighNoiseDeletionECC} gives a deletion code of block length $N$ and rate $\Omega(\eps^2)$ over an alphabet of size $poly(\frac{1}{\eps})$ the codewords of which are far apart by an edit distance of $2(1-\eps)$ or more. 


Attaching $R\cdot q^N$ codewords of this code as in the structure introduced in Lemma~\ref{thm:CodeBlock}, gives a string of length $n=N\cdot  q^{RN} = n^{\Omega(1)}$ that satisfies long distance $\eps$-synchronization property for all pairs of intervals of aggregated length of $\frac{4N}{\eps} = \Omega\left(\frac{\log n}{\eps^3\log(1/\eps)}\right)$ or more.
\begin{figure}
\centering
\includegraphics[scale=.80]{constructionNew.pdf}
\caption{Pictorial representation of the construction of a long-distance $\eps$-synchronization string of length $n$.}
\label{fig:construction}
\end{figure}
\end{proof}

Using Lemma~\ref{lem:FiniteLongDistConstruction}, we will prove the following theorem that provides a nearly-explicit construction for $O_{\eps}(1)$-long-distance $\eps$-synchronization strings with alphabet sizes that grow polynomially in $\frac{1}{\eps}$.

\begin{theorem}\label{thm:FiniteLongDistConstruction}
For any constant $0< \eps  < 1$ and $n \in \mathbb{N}$ there is a deterministic algorithm which computes a $\Theta\left(\frac{1}{\eps^3\log(1/\eps)}\right)$-long-distance $\eps$-synchronization string $S \in \Sigma^n$ with $|\Sigma|=\eps^{-O(1)}$. Furthermore, this construction is nearly explicit, i.e., for any $i\in[n]$, $S[i]$ can be computed in $O(\log^{\text{poly}(1/\eps)} n)$.
\end{theorem}
\begin{proof}
In order to obtain the synchronization property for short neighboring intervals as well, we simply concatenate the string we just obtained with copies of an $\eps$-synchronization string of length $\frac{8N}{\eps}$ as depicted in Figure~\ref{fig:construction}. Note that any neighboring pair of intervals of aggregated length of $\frac{4N}{\eps}$ or less completely falls into a copy of small $\eps$-synchronization string.

Finally, the construction of any codeword takes $N^{\text{poly}(1/\eps)}$ and short synchronization strings can be constructed in polynomial time in their length \cite{haeupler2017synchronization}; therefore, to compute some index $i$ of this construction, one only has to compute a codeword of the insertion-deletion code and the short $\eps$-synchronization string. This all clearly takes $O\left(\log^{\text{poly}(1/\eps)}n\right)$ time.
\end{proof}

Note that Lemma~\ref{thm:CodeBlock} enables us to use any insertion-deletion code to obtain long-distance synchronization strings. 
Haeupler and Shahrasbi~\cite{haeupler2017synchronization} have provided a simple indexing technique that converts error correcting codes into insertion-deletion codes. We use their technique along with Guruswami and Indyk~\cite{guruswami2005linear} error correcting code and then apply Lemma~\ref{thm:CodeBlock} to provide highly-explicit long-distance $\eps$-synchronization strings over alphabets that are exponentially large in $\frac{1}{\eps}$.

\begin{lemma}\label{lem:FiniteLongDistConstruction2}
For any constant $0< \eps  < 1$ and $n \in \mathbb{N}$ there is a deterministic algorithm which computes a string $S \in \Sigma^n$ that satisfies $\eps$-synchronization property for any pair of intervals with aggregated length of $\Theta\left(\frac{\eps^3}{\log(1/\eps)}\right)$ or more where $|\Sigma|=2^{O(\eps^{-5}\log(1/\eps))}$. Furthermore, this construction is highly explicit.
\end{lemma}
\begin{proof}
Our construction relies on the following error correcting code introduced by Guruswami and Indyk~\cite{guruswami2005linear}. 

\begin{theorem}[Theorem 3 from \cite{guruswami2005linear}]\label{thm:GuruswamiIndykHighRateCodes}
For every $r$, $0 < r < 1$, and all sufficiently small $\epsilon > 0$, there exists an explicitly specified family of GF(2)-linear (also called additive) codes of rate $r$ and relative distance at least $(1 - r - \epsilon)$ over an alphabet of size $2^{O(\epsilon^{-4}r^{-1}\log(1/\epsilon))}$ such that codes from the family can be encoded in linear time and can also be (uniquely) decoded in linear time from a fraction $e$ of errors and $s$ of erasures provided $2e + s \le (1 - r - \epsilon)$.
\end{theorem}
For $\epsilon = \delta = \frac{\eps}{4}$ and some $N = \Theta\left(\frac{\eps^4\log n}{\log (1/\eps)}\right)$, Theorem~\ref{thm:GuruswamiIndykHighRateCodes} provides linearly-encodable error correcting code with $\mathcal{C}$ with block size $N$, relative distance $1-\eps$, and rate $\frac{\eps}{2}$ over an alphabet of size $|\Sigma_{\mathcal{C}}|=2^{O(\eps^{-5}\log(1/\eps))}$. Theorems 4.1 and 6.14 from~\cite{haeupler2017synchronization} provide a technique to transform $\mathcal{C}$ into a linearly encodable insdel code of block length $N$ containing ${|\Sigma_{\mathcal{C}}|}^{\frac{\eps}{2}N}$ codewords that is correctable from $1-\epsilon-\delta-2\eps' = 1-\frac{\eps}{2}-2\eps'$ fraction of insertions and deletions by increasing the alphabet size to $|\Sigma_{\mathcal{C}}| \times {\eps'}^{-O(1)}$ for any $0<\eps'<1$. Further, linear constructibility is also preserved as this transformation works by simply attaching symbols of a $\eps'$-synchronization string to each symbol of the codewords.
Therefore, for $\eps' = \frac{\eps}{2}$, the outcome is a linearly encodable insdel code of block length $N$ containing ${|\Sigma_{\mathcal{C}}|}^{\frac{\eps}{2}N}$ codewords that is correctable from $1-\eps$ fraction of insertions and deletions over an alphabet of size $2^{O(\eps^{-5}\log(1/\eps))}$.

Using this insdel code in Lemma~\ref{thm:CodeBlock}, will provide a string of length $N\cdot|\Sigma_\mathcal{C}|^{N\frac{\eps}{2}}=\omega(n)$ that satisfies $\eps$-synchronization string property for intervals of aggregated length $\frac{4N}{\eps} = \Theta\left(\frac{\eps^3\log n}{\log(1/\eps)}\right)$ or more. 
\end{proof}
\begin{theorem}\label{thm:FiniteLongDistConstruction2}
For any constant $0< \eps  < 1$ and $n \in \mathbb{N}$ there is a deterministic algorithm which computes a $\Theta\left(\frac{\eps^3}{\log(1/\eps)}\right)$-long-distance $\eps$-synchronization string $S \in \Sigma^n$ with 
$|\Sigma|=2^{O(\eps^{-5}\log(1/\eps))}$. Furthermore, this construction is highly explicit.
\end{theorem}
\begin{proof}
Similar to the construction suggested in Theorem~\ref{thm:FiniteLongDistConstruction}, a symbol by symbol concatenation of the elements of this string to a string consisted of repetitions of a $\Theta\left(\frac{\eps^3\log n}{\log(1/\eps)}\right)$-long $\eps$-synchronization string, gives a highly explicit $\Theta\left(\frac{\eps^3}{\log(1/\eps)}\right)$-long-distance $\eps$-synchronization strings.
Note that this short $\eps$-synchronization string needs to be constructed only once as a pre-process and therefore, will not affect the highly explicitness. However, we will discuss linear time constructions of synchronization strings in Theorem~\ref{thm:NearLinearFiniteLondDistanceOverPolyAlphabet}.
\end{proof}
}
\exclude{

\begin{lemma}\label{claim:FiniteSyncConstruction}
For positive integers $c_1$ and $c_2$, suppose that $S$ is an $\eps$-synchronization string of length $c_1 c_2\log n$ over alphabet $\Sigma$ and $C:\{1,\dots, n\} \to \Sigma_{ECC}^{c_1\log n}$ is an error-correcting code with edit distance $\frac{2-\eps}{1-6/c_2}c_1\log n$ over alphabet $\Sigma_{ECC}$. String $T$ over the alphabet $\Sigma \times \Sigma \times \Sigma_{ECC}$ of length $n$ constructed as follows
\begin{align}\label{eq:sync_construction}
T[i] = \left(S[i \left( \bmod\, c \log n \right)], S\left[\left(i+\frac{c}{2}\log n\right) \left( \bmod\, c \log n\right)\right], C\left(\left\lceil \frac{i}{c_1\log n}\right\rceil\right) \left[i \left( \bmod\, c_1\log n\right)\right]\right)\end{align}
is an $\eps$-synchronization string of length $n$.
A pictorial representation of this structure can be seen in Figure~\ref{fig:construction}.
\end{lemma}
\begin{figure}
\centering
\includegraphics[scale=.80]{construction.pdf}
\caption{Pictorial representation of the construction of a synchronization string of length $n$.}
\label{fig:construction}
\end{figure}

\begin{proof}
To prove this claim, we have to show that for every three indices $i, j$, and $k$ of $T$,
\begin{equation}\label{eqn:claim}
ED(T[i, j),T[j, k))\ge(1-\eps) (k-i).
\end{equation}
Note that if $k-i\le\frac{c_1c_2}{2}\log n$, then the whole interval $T[i, k)$ lies in one of the instances of $S$ that make up the first and second components of each symbol in $T$'s alphabet ($\Sigma \times \Sigma \times \Sigma_{ECC}$). Since $S$ is a synchronization string, \eqref{eqn:claim} holds trivially.

Now consider the case where $k-i > \frac{c_1c_2}{2}\log n$. Assume that $T[i, k)$ contains $p$ complete ECC blocks and $T[j, k)$ contains $q$ complete ECC blocks. Let $T[i_1, j_1)$ and $T[j_2, k_2)$ be the strings obtained be throwing the partial blocks away from $T[i, j)$ and $T[j, k)$. 
Note that the overall length of the partial blocks in $T[i, j)$ and $T[j, k)$ is less than $3c_1\log n$, which is at most a $\left(\frac{6}{c_2}\right)$-fraction of $T[i, k)$, since $\frac{3c_1\log n}{\left(c_1c_2\log n\right)/2} < \frac{6}{c_2}$.

Assume by contradiction that $ED(T[i, j), T[j, k)) < (1-\eps) (k-i)$. Since edit distance preserves the triangle inequality, we have that 
\begin{align*}
ED\left(T[i_1, j_1),T[j_2, k_2)\right) &\le ED\left(T[i_1, j_1),T[i, j)\right) + ED\left(T[i, j),T[j, k)\right) + ED\left(T[j, k),T[j_2, k_2)\right)\\ \leq &\left(1-\eps+\frac{6}{c_2}\right)(k - i)\\
 <&\left(\frac{1-\eps+6/c_2}{1-6/c_2}\right)\left((j_1 - i_1) + (k_2 - j_2)\right). 
 \end{align*}

This means that the longest common subsequence of $T[i_1, j_1)$ and $T[j_2, k_2)$ has length of at least \[\frac{1}{2}\left[\left(|T[i_1, j_1)|+|T[j_2, k_2)|\right) \left(1- \frac{1-\eps+6/c_2}{1-6/c_2}\right)\right],\] which means that there exists a monotonically increasing matching between $T[i_1, j_1)$ and $T[j_2, k_2)$ of the same size.
Since the matching is monotone, there can be at most $p+q$ pairs of error-correcting code blocks having edges to each other. The Pigeonhole Principle implies that there are two error-correcting code blocks $B_1$ and $B_2$ such that the number of edges between them is at least

\begin{eqnarray*}
&&\frac{\frac{1}{2}\left[\left(|T[i_1, j_1)|+|T[j_2, k_2)|\right) \left(1- \frac{1-\eps+6/c_2}{1-6/c_2}\right)\right]}{p+q}\\
 &=& \frac{(p+q)c_1\log n\left(1-\frac{1-\eps+6/c_2}{1-6/c_2}\right)}{2(p+q)}\\
&>&\frac{1}{2} \left(1-\frac{1-\eps+6/c_2}{1-6/c_2}\right)\cdot c_1\log n.
\end{eqnarray*}

 Notice that this is also a lower bound on the longest common subsequence of $B_1$ and $B_2$. This means that
$$ED(B_1, B_2) < 2c_1\log n - \left(1-\frac{1-\eps+6/c_2}{1-6/c_2}\right)c_1\log n<\left(1+\frac{1-\eps+6/c_2}{1-6/c_2}\right)c_1\log n = \frac{2-\eps}{1-6/c_2}c_1\log n.$$

This contradicts the error-correcting code's error fraction, which we define to be $\frac{2-\eps}{1-6/c_2}c_1\log n$, and therefore we may conclude that for all indices $X$, $Y$, and $Z$ of $T$, \eqref{eqn:claim} holds.
\end{proof}

\begin{theorem}\label{thm:FiniteSyncConstructionPolytime}
For any constant $0< \eps  < 1$ and $n \in \mathbb{N}$ there is a deterministic algorithm which computes a $\eps$-synchronization string $S \in \Sigma^n$ with $|\Sigma|=\eps^{-O(1)}$ in linear, i.e., $O(n)$, time. 
\end{theorem}

\begin{proof}[Proof Sketch]
\acomment{To be rewritten}
Let $\alpha = 1-\eps$. We first show how to construct the $\alpha$-synchronization string guaranteed in Theorem~\ref{thm:FiniteSyncConstructionPolytime} in a time polynomial in $n$. Then one can use the boosting step in Lemma~\ref{lem:simplepolyboosting} to bring the time down to linear exactly like in the proof of Lemma~\ref{lem:lineartimerandomizedconstruction}. 

In Theorem~\ref{thm:existence}, we showed that synchronization strings of arbitrary length exist, which means that it is possible to find synchronization strings of length $\log n$ over a constant sized alphabet efficiently using brute force. In order to construct synchronization strings of length $n$, we use this guarantee along with the fact that there exist constant-rate error-correcting codes against insertion-deletion errors for error fractions approaching 1 over a fixed alphabet size.

Let $C$ be an error correcting code having sufficiently large code word distance (specified later on) with output in $\Sigma^{c_1\log n}$ and let $S$ be an $\alpha$-synchronization string of length $c_1 c_2\log n = c\log n$ ($c_1$ and $c_2$ will be set later) over an alphabet $\Sigma$. We construct the synchronization string $T$ over the alphabet $\Sigma \times \Sigma \times \Sigma_{ECC}$ of length $n$ as follows:
\begin{align*}
T[i] = \left(S[i \left( \bmod\, c \log n \right)], S\left[\left(i+\frac{c}{2}\log n\right) \left( \bmod\, c \log n\right)\right], C\left(\left\lceil \frac{i}{c_1\log n}\right\rceil\right) \left[i \left( \bmod\, c_1\log n\right)\right]\right)\end{align*}
A pictorial representation of this structure is in Figure~\ref{fig:construction}. In Appendix~\ref{app:chapter3}, we show that for a sufficiently large minimum distance between codewords of $C$, $T$ is an $\alpha$-synchronization string.
\end{proof}

{
\begin{proof}
We first show how to construct the $\eps$-synchronization string guaranteed in Theorem~\ref{thm:FiniteSyncConstructionPolytime} in a time polynomial in $n$. Then one can use boosting steps in Lemma~\ref{lem:simplepolyboosting} and Lemma~\ref{claim:FiniteSyncConstruction} to bring the time down to linear exactly like in the proof of Lemma~\ref{lem:lineartimerandomizedconstruction}. 

In Theorem~\ref{thm:existence}, we showed that $\eps$-synchronization strings of arbitrary length exist, which means that it is possible to find synchronization strings of length $\log n$ over a constant sized alphabet efficiently using brute force. In order to construct synchronization strings of length $n$, we will use this guarantee along with the fact that there exist constant-rate error-correcting codes against insertion-deletion errors for error fractions approaching one over a fixed alphabet size:


\begin{theorem}[Theorem 3.1 from \cite{guruswami2015deletion}]\label{thm:HighNoiseDeletionECC}
Let $0 < \epsilon < \frac{1}{2}$. There is an explicit code of rate $\Omega(\epsilon^2)$ and alphabet size $poly(\frac{1}{\epsilon})$ which can be corrected from a $1 - \epsilon$ fraction of worst-case deletions. Moreover, this code can be constructed, encoded, and decoded in time $N^{poly(1/\epsilon)}$, where $N$ is the block length of the code.
\end{theorem}

Note that a set of codewords resilient to $d$ deletions is also resilient to $d$ insertions or deletions. Therefore, one can obtain the following Corollary from Theorem~\ref{thm:HighNoiseDeletionECC}:

\begin{corollary}\label{cor:HighNoiseInDelECC}
Let $0 < \epsilon < \frac{1}{2}$. There is an explicit code of rate $\Omega(\epsilon^2)$ and alphabet size $poly(\frac{1}{\epsilon})$ which can be corrected from a $1 - \epsilon$ fraction of worst-case insertions or deletions. Moreover, this code can be constructed and encoded in time $N^{poly(1/\epsilon)}$, where $N$ is the block length of the code.
\end{corollary}

Note that the efficient decoding property is not preserved in Corollary~\ref{cor:HighNoiseInDelECC}, though this will not hurt the efficiency of our synchronization string construction procedure.
Corollary~\ref{cor:HighNoiseInDelECC} gives that for any $0 < \epsilon < \frac{1}{2}$, there exists an error-correcting code 
$C:\{1,\dots, n\}\rightarrow \Sigma_{ECC}^{c_1\log n}$ for a sufficiently large constant $c_1$ and alphabet size $|\Sigma_{ECC}|$ resilient to a $\left(1-\epsilon\right)$-fraction of insertion-deletions, or equivalently, with codewords of edit distance at least $(2-2\epsilon)c_1 \log n$.

Now, let $S$ be an $\eps$-synchronization string of length $c_1 c_2\log n = c\log n$ over an alphabet $\Sigma$. We construct the synchronization string $T$ as described in Lemma~\ref{} over the alphabet $\Sigma \times \Sigma \times \Sigma_{ECC}$ as depicted in Figure~\ref{fig:construction}.
Note that the numerator of $\frac{2-\eps}{1-6/c_2}$ is smaller than 2. Therefore, by taking $c_2 > \frac{12}{\eps}$, one can make $\frac{2-\eps}{1-6/c_2}$ strictly smaller than 2 so that the existence of $C$ is guaranteed by Corollary~\ref{cor:HighNoiseInDelECC}.

We now explain how the synchronization string $T$ of length $n$, as defined by Equation~\eqref{eq:sync_construction}, can be constructed efficiently in more detail. Recall that we require an error-correcting code $C$ and an $\eps$-synchronization string $S$ of length $\log n$. To find the error-correcting code $C$ resilient to an $\frac{2-\eps}{1-6/c_2}$ fraction of errors, we use Corollary~\ref{cor:HighNoiseInDelECC} with parameter $\epsilon = 1 - \frac{1}{2}\cdot\frac{2-\eps}{1-6/c_2} = \frac{\eps - 12/c_2}{2-12/c_2}$. We take $c_2 = \frac{24}{\eps}$, so $\epsilon \ge \frac{\eps}{4}$. Therefore, Corollary~\ref{cor:HighNoiseInDelECC} enables us to find an error-correcting code over an alphabet of size $poly\left(\frac{1}{\eps}\right)$ with rate $\Omega\left(\eps^2\right)$, resilient to the fraction of errors stated in Claim~\ref{claim:FiniteSyncConstruction}. The rate criteria of this error-correcting code gives that $c_1 = O\left(\frac{1}{\eps^2}\right)$ suffices.

We now analyze the time complexity of constructing $T$. According to Corollary~\ref{cor:HighNoiseInDelECC}, the amount of time needed to find each $C(i)$ is $\log^{poly\left(\frac{1}{\epsilon}\right)} n = \log^{poly\left(\frac{1}{\eps}\right)} n = \log^{O(1)}n$. In order to construct $T$, we need $C(1), \ldots, C(n)$, which takes time $O(n\log^{O(1)}n)$. Furthermore, to find the $\eps$-synchronization string $S$ of length $n$, we simply perform a complete search over strings of length $\log n$ over any alphabet $\Sigma$ with $|\Sigma| =\Theta\left(\frac{1}{\eps^4}\right)$, where the alphabet size is a result of Theorem~\ref{thm:existence}.  This takes time $|\Sigma|^{c_1c_2\log n} = n^{O\left(c_1c_2\log\frac{1}{\eps}\right)} = n^{O\left(\frac{1}{\eps^3}\log\frac{1}{\eps}\right)}$.

Thus far, we have shown that it is possible to construct $C$ with time complexity $O(n\log^{O(1)}n)$ and $S$ with time complexity $n^{O\left(1/\eps^3\log \left(1/\eps\right)\right)}$, which we then use to build an $\eps$-synchronization string $T$ of length $n$ over the alphabet $\Sigma\times\Sigma\times\Sigma_{ECC}$, which is polynomially large in $\frac{1}{\eps}$. Therefore, finding $S$ is the efficiency bottleneck in constructing the synchronization string $T$. However, it is possible to find $S$ using the very same structure as proposed in \eqref{eq:sync_construction} by performing a complete search over $\log\log n$ long strings and using an appropriate error-correcting code. This way, the time needed to find $S$ decreases to $poly\log(n)$ and consequently the $\eps$-synchronization string of length $n$ can be found in time $O(n\log^{O(1)}n)$ over a larger alphabet, yet one that is still polynomially large in terms of $\frac{1}{\eps}$.

In order to decrease the near linear construction time to exact linear time, we apply the boosting step we presented in Lemma~\ref{lem:simplepolyboosting}.
More precisely, one can generate a $\sqrt{\frac{12n}{\eps}}$ long $\eps/2$-synchronization string in $O(\sqrt n \log^{O(1)}n)$ according to the discussion above. Then, using Lemma~\ref{lem:simplepolyboosting} with $\gamma = \frac{\eps}{12}$, one can extend it to a $\eps$-synchronization string of length $n$ in linear time.

Finally, applying the construction in Lemma~\ref{} to this construction, gives a linear time algorithm for constructing $\eps$-synchronization strings on a still $\eps^{-O(1)}$ alphabet. 

We now depart to a discussion of explicitness of this construction. Note that in order to compute index $i$ of a synchronization string using the proposed construction, one needs to compute $O(1)$ indices of a $\log \log n$-long synchronization string that is being computed using brute-force and and $O(1)$ insertion-deletion ECCs. Both these can be done in $O(\log i)$. Further, using same argument, $[i, i+\log i)$ consecutive indices in this structure can be computed in $O(\log i)$.
\end{proof}
}

}

\exclude{\color{red}\subsection{Boosting II: A Linear Time Construction of $\eps$-Synchronization Strings}\label{sec:BoostingStepII}

Next, we provide another simple boosting steps in Lemma~\ref{lem:simplepolyboosting} which allows us to polynomially speed up any $\eps$-synchronization string construction. Essentially, we propose a way to construct an $O(\eps)$-synchronization string of length $O_\eps(n^2)$ having an $\eps$-synchronization string of length $n$.


\begin{lemma}\label{lem:simplepolyboosting}
Fix an even $n \in \mathbb{N}$ and $\gamma > 0$ such that $\gamma n \in \mathbb{N}$. Suppose $S \in \Sigma^n$ is an $\eps$-synchronization string. The string $S' \in \Sigma'^{\gamma n^2}$ with $\Sigma' = \Sigma^3$ and $$S'[i] = \left(S[i \bmod n], S[(i+n/2) \bmod n], S\left[\left\lceil {\frac{i}{\gamma n}}\right\rceil\right]\right)$$ is an $(\eps + 6\gamma)$-synchronization string of length $\gamma n^2$.
\end{lemma}
\begin{proof}
Intervals of length at most $n/2$ lie completely within a copy of $S$ and thus have the $\eps$-synchronization property. For intervals of size $l$ larger than $n/2$ we look at the synchronization string which is blown up by repeating each symbol $\gamma n$ times. Ensuring that both sub-intervals contain complete blocks changes the edit distance by at most $3 \gamma n$ and thus by at most $6 \gamma l$. Once only complete blocks are contained we use the observation that the longest common subsequence of any two strings becomes exactly a factor $k$ larger if each symbols is repeated $k$ times in each string. This means that the relative edit distance does not change and is thus at least $\eps$. Overall this results in the $(\eps + 6\gamma)$-synchronization string property to hold for large intervals in $S'$.
\end{proof}

\begin{theorem}\label{thm:NearLinearFiniteLondDistanceOverPolyAlphabet}
There exists a deterministic algorithm which, for any $\eps > 0$, constructs an $\eps$-synchronization string of length $n$ over an alphabet of size $\eps^{-O(1)}$ in $O(n + \sqrt{n}\log^{\text{poly}(\eps^{-1})}n) = O_{\eps}(n)$, time. Further, this construction is $O(poly\log n)$-explicit.
\end{theorem}
\begin{proof}
Boosting step I provides a technique to expand an $\eps$-synchronization string to another $\eps$-synchronization string of an exponentially longer length. More clearly, as discussed in Theorem~\ref{thm:FiniteLongDistConstructionNEW}, one can use the structure discussed in Lemma~\ref{thm:CodeBlock} to obtain a string of length $n$ that satisfies $\eps$-synchronization string property for long enough intervals and then concatenate that string with repetitions of an $\eps$-synchronization string of length $O_\eps(\log n)$ to obtain a full synchronization string of length $n$. In addition, boosting step II can be utilized to expand an $O(\eps)$-synchronization string to another $O(\eps)$-synchronization string of a quadratically longer length.

The idea of the construction is to find a $\Theta_\eps\left(\log\log\sqrt n\right)$-long $\eps$-synchronization string by brute force and use boosting step I twice and boosting step II once to obtain an $\eps$-synchronization strings of length $n$. 

Theorem 5.7 from Haeupler and Shahrasbi~\cite{haeupler2017synchronization} guarantees the existence of arbitrarily long $\eps$-synchronization strings over alphabets of size $O(\eps^{-4})$. Therefore, we start by finding a 
$$L_1 = \frac{1}{\eps^3\log(1/\eps)}\log\left(\frac{\log (\sqrt{12n/\eps})}{\eps^3\log(1/\eps)} \right) = O_\eps(\log \log n)$$
long $\frac{\eps}{2}$-synchronization string, $S_1$, over an alphabet of size $O(\eps^{-4})$ using brute force. This takes $O\left(\left[\eps^{-4}\right]^{L_1}\right) = O\left(\log^{\text{poly}(\eps^{-1})} n\right)$. The next step is employing boosting step I which leads to construction of a 
$$L_2 = \frac{\log (\sqrt{12n/\eps})}{\eps^3\log(1/\eps)} = O_\eps(\log n)$$
long $\frac{\eps}{2}$-synchronization string, $S_2$, in an additional $O(\log n\cdot \log\log^{\text{poly}(\eps^{-1})}n)$ time. Note that any position in this construction can be computed in $O(\log^{\text{poly}(\eps^{-1})}n)$ as it requires computation of some position of $S_1$ and one codeword as discussed in Theorem~\ref{thm:FiniteLongDistConstructionNEW}.

Once again, we use boosting step I to turn $S_2$ into an $\frac{\eps}{2}$-synchronization string, $S_3$, of length $L_3 = \sqrt{12n/\eps}$ over an alphabet which is polynomially large in terms of $\frac{1}{\eps}$ by spending $O(\sqrt{12n/\eps}\cdot \log^{\text{poly}(\eps^{-1})}n)$ extra time.

Finally, we utilize boosting step II with $\gamma = \frac{\eps}{6}$ to obtain $\left(\frac{\eps}{2}+6\gamma = \eps\right)$-synchronization string to construct an $\eps$-synchronization string, $S_4$, of length $L_4=\frac{\eps}{12}\cdot\left(\sqrt{12n/\eps}\right)^2 = n$ spending an extra $O(n)$ time.

Overall, this procedure takes 
$$O\left(\log^{\text{poly}(\eps^{-1})} n + 
\log n\cdot \log\log^{\text{poly}(\eps^{-1})}n+
\sqrt{12n/\eps}\cdot \log^{\text{poly}(\eps^{-1})}n + 
n\right) = O_\eps(n)$$
time and leads to a deterministic construction of an $\eps$-synchronization string of length $n$ over an alphabet of size $\text{poly}(\eps^{-1})$.
Further, computing any single position in this requires computing constant number of positions of $S_3$ which eventually requires computing $S_1$ and finitely many codewords of codes used in two applications of boosting step I. Therefore, every single position of $S_4$ can be computed in
$$O\left(\log^{\text{poly}(\eps^{-1})} n + 
\log\log^{\text{poly}(\eps^{-1})}n+
\log^{\text{poly}(\eps^{-1})}n\right) = O\left(\log^{\text{poly}(\eps^{-1})} n\right).$$
Hence, this construction is $O(poly\log n)$-explicit.
\exclude{
Theorem ?? of \cite{haeupler2017Synchronization} states that there exists a constant $T$ for which synchronization strings of length $n$ can be constructed in $O(n^T)$.
Therefore, we can construct a $n^{1/T}$ long $\eps/2$-synchronization string in $O(n)$. Using such string along with boosting step introduced in Lemma~\ref{lem:simplepolyboosting}, one can obtain a $\gamma n^{2/T}$ long $(\eps+6\gamma)$-synchronization string in $O(n)$. Similarly, be repeating the boosting step for $\log T$ times, One can get a 
$\gamma^{T-1} n$ long $(\eps/2+6\log T\gamma)$-synchronization string in $O(n)$. Setting $\gamma = \frac{\eps}{12\log T}$, we have obtained a $\Theta(n)$-long $\eps$-synchronization string in $O(n)$ time which proves the statement.
}
\end{proof}

Note the construction proposed in Lemma~\ref{lem:FiniteLongDistConstructionNEW} only satisfies the long-distance synchronization string guarantee for long intervals and falls short of satisfying edit distance requirement for small neighboring intervals. In the same fashion as in Theorem~\ref{thm:polyConstruction}, one can fix this by concatenating to repetitions of a small synchronization string. Using the linear-time construction of synchronization strings discussed in Theorem~\ref{thm:NearLinearFiniteLondDistanceOverPolyAlphabet} to obtain small synchronization strings will preserve explicitness of construction and lead to the following theorem.

\begin{theorem}\label{thm:FiniteLongDistConstructionNEW}
For any constant $0< \eps  < 1$ and $n \in \mathbb{N}$ there is a deterministic algorithm which computes a $\Theta\left(\frac{1}{\eps^{3.01}\log(1/\eps)}\right)$-long-distance $\eps$-synchronization string $S \in \Sigma^n$ with $|\Sigma|=\eps^{-O(1)}$. 
Moreover, this construction is highly-explicit and can even compute $S[i, i+\log n]$ in $O_\eps(\log n)$. Consequently, this string is computable in $O(n)$.
\end{theorem}
}

}

\subsection{Infinite Synchronization Strings: Highly Explicit Construction}\label{sec:infiniteConstruction}
\fullOnly{Throughout this section we focus on construction of infinite synchronization strings. 
To measure the efficiency of a an infinite string's construction, we consider the required time complexity for computing the first $n$ elements of that string. Moreover, besides the time complexity, we employ a generalized notion of explicitness to measure the quality of infinite string constructions.}

In a similar fashion to finite strings, an infinite synchronization string is called to have a \emph{$T(n)$-explicit} construction if there is an algorithm that computes any position $S[i]$ in $O\left(T(i)\right)$. Moreover, it is said to have a highly-explicit construction if $T(i) = O(\log i)$.

We show how to deterministically construct an infinitely-long $\eps$-synchronization string over an alphabet $\Sigma$ which is polynomially large in $\eps^{-1}$. Our construction can compute the first $n$ elements of the infinite string in $O(n)$ time, is highly-explicit, and, further, can compute any $[i, i+\log i]$ in $O(\log i)$.

\begin{theorem}\label{thm:infiniteSync}
For all $0 < \eps < 1$, there exists an infinite $\eps$-synchronization string construction over a $poly(\eps^{-1})$-sized alphabet that is highly-explicit and also is able to compute $S[i, i+\log i]$ in $O(\log i)$. Consequently, using this construction, the first $n$ symbols of the string can be computed in $O(n)$ time.
\end{theorem}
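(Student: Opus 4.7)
The plan is to extend the highly-explicit finite long-distance construction of Theorem~\ref{thm:FiniteLongDistConstructionNEW} to an infinite string via a scale-doubling scheme. Partition the positive integers into scale blocks $B_k = [2^k, 2^{k+1})$. Within block $B_k$, fill its $2^k$ positions by concatenating codewords of an insdel code $\mathcal{C}_k$ from Theorem~\ref{thm:insdelCode} with block length $N_k = \Theta_\eps(k)$, constant rate, and distance $1-\eps'$ for a suitably small $\eps'$, over an alphabet of size $\eps^{-O(1)}$ that is uniform across all scales; the constant rate guarantees $\mathcal{C}_k$ has more than enough codewords to fill $B_k$. Symbol-by-symbol, combine this codeword-based component with a short repeating $\eps$-synchronization string of constant length $O_\eps(1)$, exactly as in Theorem~\ref{thm:FiniteLongDistConstructionNEW}. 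The infinite string $S$ is well-defined because both components depend only on the position $i$, not on a target length.

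To verify the $\eps$-synchronization property for any pair of intervals with aggregated length $\ell$, I would split into two regimes. If $\ell$ is at most the period of the short repeating component, both intervals lie inside a single copy of the short synchronization string and the property holds immediately. If $\ell$ is large, I would apply a Lemma~\ref{thm:CodeBlock}-style argument: any common subsequence of length $\eps \ell/2$ between the two intervals must, by pigeonhole, align a constant fraction of symbols from some pair of codewords, contradicting the distance of the underlying codes. The main subtlety, and the principal obstacle, is when the two intervals straddle different scales, so that codewords of $\mathcal{C}_k$ and $\mathcal{C}_{k'}$ with $k \ne k'$ end up being compared.

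I would handle this cross-scale issue by attaching a small scale distinguisher directly to each symbol: a tag drawn from a constant-size alphabet depending on $\lfloor \log_2 i \rfloor \bmod O(1)$, chosen so that for any two intervals at sufficiently different scales this tag alone forces a linear-fraction Hamming mismatch, hence a large edit-distance lower bound. Within a single scale, the usual codeword distance handles the property, and at scale transitions the short repeating synchronization string together with the tag handles the remaining boundary cases. This keeps the alphabet size at $\eps^{-O(1)}$ independently of position.

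For highly-explicitness, given $i$ we compute $k = \lfloor \log_2 i \rfloor$ in $O(\log i)$ time, identify the relevant codeword index of $\mathcal{C}_k$, and encode that codeword in $O(N_k) = O_\eps(\log i)$ time using the linear encoder of Theorem~\ref{thm:insdelCode}; the short repeating component and the scale tag are $O(1)$-time lookups. For a window $S[i, i+\log i]$ only $O(1)$ codewords at the same (or adjacent) scales are touched, so the amortized cost per window is still $O(\log i)$, and summing over $i = 1, \dots, n$ yields $O(n)$ total time to produce the first $n$ symbols.
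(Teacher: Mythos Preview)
Your proposal has a genuine gap in the middle-length regime. You split into two cases: aggregated length $\ell$ at most the period of the short repeating component, and $\ell$ ``large''. But you take the short repeating $\eps$-synchronization string to have period $O_\eps(1)$, while the Lemma~\ref{thm:CodeBlock} argument you invoke only kicks in once $\ell \ge \tfrac{4}{\eps}N_k = \Theta_\eps(k)$. For positions $i$ in a block $B_k$ with $k$ large, this leaves all interval lengths in the range $O_\eps(1) < \ell < \Theta_\eps(k)$ unhandled --- neither the constant-period repeating string nor the codeword-distance argument says anything about them. In the finite construction of Theorem~\ref{thm:FiniteLongDistConstructionNEW} this does not arise because the short component there has period $\Theta_\eps(\log n)$, matched to the codeword block length; but once you go to an infinite string you cannot fix a single constant period that works at every scale, and letting the period grow with the scale recreates the very cross-scale boundary problem you are trying to avoid.

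The cross-scale fix via a tag $\lfloor \log_2 i\rfloor \bmod O(1)$ is also under-specified. Two neighboring intervals $[i,j),[j,k)$ can each straddle several scale boundaries, so the tag is not constant on either interval; and scales $k$ and $k+c$ share the same tag, so a constant-size modular marker does not by itself force small LCS across all pairs of scales. You would need an additional argument (or an additional component) to cover these cases, and none is given.

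The paper sidesteps both issues with a different idea: it takes two interleaved layers $U$ and $V$, each a concatenation of finite $\tfrac{\eps}{2}$-synchronization strings of geometrically increasing lengths $k,k^3,k^5,\dots$ and $k^2,k^4,k^6,\dots$ with $k=6/\eps$, and sets $T[i]=(U[i],V[i])$. The staggering guarantees that any interval $T[x,z)$ is, up to a $3/k$-fraction, entirely contained in a single finite $\tfrac{\eps}{2}$-synchronization string in one of the two layers (Lemma~\ref{lem:covering}), so the $\eps$-synchronization property is inherited directly, with no separate middle-length or cross-scale analysis needed. High-explicitness then follows because computing $T[i]$ only requires locating and querying one position in each of two finite strings of length at most $\mathrm{poly}(i)$.
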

\shortOnly{
\begin{proof}[Proof Sketch]
We construct such string using symbol by symbol concatenation of two strings that consist of finite synchronization strings with exponentially growing lengths (as shown in Figure~\ref{fig:constructive}) and prove that such string satisfies synchronization string property.
\end{proof}
}
\begin{figure}
\centering
\includegraphics[scale=.65]{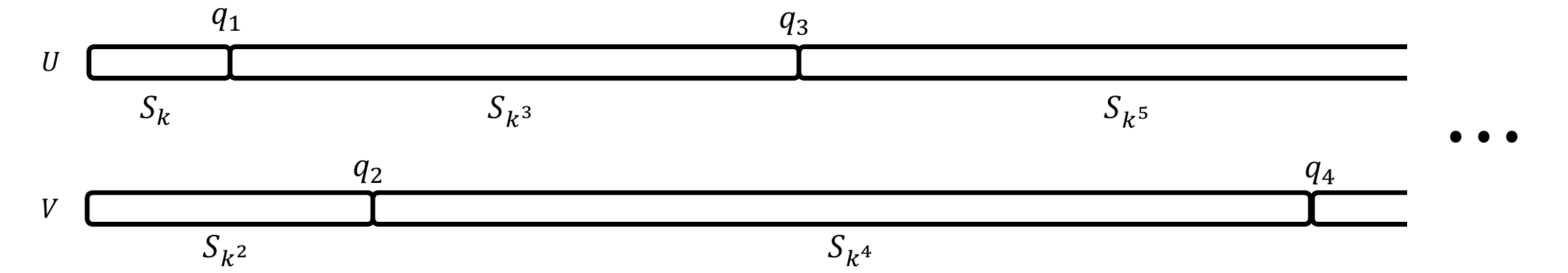}
\caption{Construction of Infinite synchronization string $T$}
\label{fig:constructive}
\end{figure}
\global\def\ProofOfThmInfiniteSync{
\shortOnly{\subsection{Proof of Theorem~\ref{thm:infiniteSync}}}
\begin{proof}
Let $k=\frac{6}{\eps}$ and let $S_i$ denote a $\frac{\eps}{2}$-synchronization string of length $i$. We define $U$ and $V$ as\fullOnly{ follows:
$$U = (S_k, S_{k^3}, S_{k^5}, \dots), \qquad V = (S_{k^2}, S_{k^4}, S_{k^6}, \dots)$$}\shortOnly{ $U = (S_k, S_{k^3}, S_{k^5}, \dots), V = (S_{k^2}, S_{k^4}, S_{k^6}, \dots)$.}
In other words, $U$ is the concatenation of $\frac{\eps}{2}$-synchronization strings of length $k, k^3, k^5, \dots$ and $V$ is the concatenation of $\frac{\eps}{2}$-synchronization strings of length $k^2, k^4, k^6, \dots$.
We build an infinite string $T$ such that $T[i] = (U[i], V[i])$ (see Figure~\ref{fig:constructive}).

First, if finite synchronization strings $S_{k^l}$ used above are constructed using the highly-explicit construction algorithm introduced in Theorem~\ref{thm:FiniteLongDistConstructionNEW}, any index $i$ can be computed by simply finding one index in two of $S_{k^l}$s in $O(\log n)$.  Further, any substring of length $n$ of this construction can be computed by constructing finite synchronization strings of total length $O(n)$. According to Theorem~\ref{thm:FiniteLongDistConstructionNEW}, that can be done in $O_\eps(n)$.



Now, all that remains is to show that $T$ is an $\eps$-synchronization string. We use following lemma to prove this.

\begin{lemma}\label{lem:covering}
Let $x < y < z$ be positive integers and let $t$ be such that $k^t \le |T[x, z)| < k^{t+1}$. Then there exists a block of $S_{k^i}$ in $U$ or $V$ such that all but a $\frac{3}{k}$ fraction of $T[x, z)$ is covered by $S_{k^i}$.
\end{lemma}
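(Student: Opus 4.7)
The plan is to exhibit a single block of $U$ or $V$ containing position $z - 1$ whose start is within $l/k$ of $x$ (where $l = z - x$); any such block automatically extends from its start through at least position $z - 1$, so it will miss at most $l/k < 3l/k$ positions of $T[x, z)$.

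First, I would consider the two blocks at position $z - 1$: the $U$-block $B_U$ of length $L_U = k^{2a+1}$ and the $V$-block $B_V$ of length $L_V = k^{2b}$. Because $U$ uses only odd powers of $k$ and $V$ only even powers, the two exponents differ by at least $1$, so $L_U$ and $L_V$ differ by at least a factor of $k$. Summing the geometric series that precede each block in $U$ and $V$ gives the closed-form start positions
\[ s_U = \frac{L_U + k^2 - k - 1}{k^2-1}, \qquad s_V = \frac{L_V - 1}{k^2-1}. \]
A direct comparison of these two expressions shows that the block of \emph{smaller} length also has the (weakly) smaller start, so I let $B$ denote this shorter block, with length $L$ and start $s$.

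Next, since $s \le z - 1$ holds also for the longer block, the formula above yields $L_{\max} \le (z-1)(k^2 - 1) + O(1)$. The factor-of-$k$ gap then forces $L \le L_{\max}/k$, and substituting back into the formula for $s$ gives $s \le (z-1)/k + O(1)$. Using $z = x + l$ together with $x \ge 1$, this simplifies to $s - x \le l/k$. The lemma then follows: if $s \le x$ the block $B = S_{k^i}$ already contains every position of $T[x, z)$; otherwise the only uncovered positions lie in $[x, s)$, a set of size at most $l/k < 3l/k$.

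The main step to verify carefully is the monotonicity claim that the shorter of the two blocks containing $z - 1$ is also the one with the earlier start, since this is what simultaneously upper-bounds both $L$ (via the factor-of-$k$ gap against the longer block) and $s$ (via the explicit formula). Once that is established, the inequality $s - x \le l/k$ is immediate from the arithmetic, and no case analysis beyond the trivial split on whether $s \le x$ or $s > x$ is needed.
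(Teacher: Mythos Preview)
Your argument is correct and takes a genuinely different route from the paper's. The paper anchors the analysis on the \emph{left}: it defines the interleaved turning points $q_i$ (the starting indices of the $S_{k^{i+1}}$), chops off everything before $q_{t-1}$ (at most a $2/k$ fraction, since $q_{t-1}<2k^{t-1}\le (2/k)\,l$), and then observes that the remaining interval $\tilde T$ has length $<k^{t+1}$ while any two consecutive turning points beyond $q_{t-1}$ are more than $k^{t+1}(1-1/k)$ apart; hence at most one turning point lands inside $\tilde T$, and some $S_{k^i}$ covers all but a further $1/k$ fraction. The two losses combine to $3/k$.

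You instead anchor on the \emph{right}: of the $U$-block and $V$-block through $z-1$, you take the shorter one and show its start is close to $x$. The monotonicity step (shorter block $\Rightarrow$ earlier start) is the crux and is correct: if $L_U<L_V$ then $L_V\ge kL_U\ge k\cdot k$, so $L_V-L_U\ge k^2-k$ and $s_U\le s_V$; the other direction is immediate. What your approach buys is a tighter constant: carrying the arithmetic through exactly (rather than via $O(1)$) gives $s\le (z-1)/k + 1 - 1/k$, hence $s-x < l/k$, strictly better than the $3l/k$ the lemma asks for. The one place to tighten in a final write-up is precisely that $O(1)$: your conclusion $s-x\le l/k$ needs the additive term to be at most $1-1/k$ (which it is, exactly), and the sketch as written does not make that visible.
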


Note that this lemma shows that $ED(T[x, y), T[y, z)) > \left(1-\frac{\eps}{2}\right) \left(|T[x, y)| + |T[y, z)|\right) \left(1-\frac{3}{k}\right) = \left(1-\frac{\eps}{2}\right)^2 \left(|T[x, y)| + |T[y, z)|\right) \ge (1-\eps)\left(|T[x, y)| + |T[y, z)|\right)$, which implies that $T$ is an $\eps$-synchronization string. 
\end{proof}
\begin{proof}[Proof of Lemma \ref{lem:covering}]
We first define $i^{th}$ \emph{turning point} $q_i$ to be the index of $T$ at which $S_{k^{i+1}}$ starts, i.e., $q_i = k^{i} + k^{i - 2} + k^{i-4} + \cdots$. 
Note that 
\begin{eqnarray}
q_i &=& \Bigg\{
\begin{tabular}{cc}
$k^2+k^4+\cdots+k^i$ & Even $i$ \\[2mm]
$k+k^3+\cdots+k^i$ & Odd $i$
\end{tabular}\\
&=& \Bigg\{
\begin{tabular}{cc}
$k^2\frac{k^i-1}{k^2-1}$ & Even $i$ \\[2mm]
$k\frac{k^{i+1}-1}{k^2-1}$ & Odd $i$
\end{tabular}
\end{eqnarray}
Note that $q_{t-1} < 2 k^{t-1}$ and $|T[x, z)| \ge k^t$. Therefore, one can throw away all the elements of $T[x, z)$ whose indices are less than $q_{t - 1}$ without losing more than a $\frac{2}{k}$ fraction of the elements of $T[x, z)$. We will refer to the remaining part of $T[x, z)$ as $\tilde T$. 

Now, the distance of any two turning points $q_i$ and $q_j$ where $t \le i < j$ is at least $q_{t+1} - q_{t}$, and
\begin{eqnarray}
q_{t+1} - q_{t} 
&=& \Bigg\{
\begin{tabular}{cc}
$k\frac{k^{t+2}-1}{k^2-1} - k^2\frac{k^t-1}{k^2-1}$ & Even $t$ \\[2mm]
$k^2\frac{k^{t+1}-1}{k^2-1} - k\frac{k^{t+1}-1}{k^2-1}$ & Odd $t$
\end{tabular}\\
&=& \Bigg\{
\begin{tabular}{cc}
$\frac{(k-1)(k^{t+2}+k)}{k^2-1} = \frac{k^{t+2}+k}{k+1}$ & Even $t$ \\[2mm]
$\frac{(k-1)(k^{t+2}-k)}{k^2-1} = \frac{k^{t+2}-k}{k+1}$ & Odd $t$.
\end{tabular}
\end{eqnarray}

Hence, $q_{t+1} - q_{t} > k^{t+1} \left(1-\frac{1}{k}\right)$. Since $|\tilde{T}| \le |T[x, z)| < k^{t+1}$, this fact gives that there exists a $S_{k^i}$ which covers a $ \left(1-\frac{1}{k}\right)$ fraction of $\tilde T$. This completes the proof of the lemma.
\end{proof}
\fullOnly{A similar discussion for infinite long-distance synchronization string can be found in Appendix~\ref{sec:InfiniteLongDistanceConstruction}.}
}\fullOnly{\ProofOfThmInfiniteSync}

\global\def\SecInfiniteLongDistConstruction
{\section{Infinite long-Distance Synchronization Strings: Efficient Constructions}\label{sec:InfiniteLongDistanceConstruction}
In this section, we introduce and discuss the construction of infinite long-distance synchronization strings. The definition of $c$-long-distance $\eps$-synchronization  property strongly depends on the length of the string. This definition requires any two neighboring intervals as well as any two intervals of aggregated length of $c\log n$ or more to satisfy $\eps$-synchronization property. A natural generalization of this property to infinite strings would be to require similar guarantee to hold over all prefixes of it.

\begin{definition}[Infinite Long-Distance Synchronization Strings]
An infinite string $S$ is called a \emph{$c$-long-distance $\eps$-synchronization string} if any prefix of $S$ like $S[1, n]$ is a $c$-long-distance $\eps$-synchronization string of length $n$.
\end{definition}

%
We prove infinite long distance synchronization strings exist and provide efficient constructions for them. We prove this by providing a structure similar to the one proposed in Theorem~\ref{thm:infiniteSync} that constructed an infinite $\eps$-synchronization string using finite $\eps$-synchronization strings.

\begin{lemma}\label{lem:blackBoxLongDistInfiniteConstruction}
Let $\mathcal{A}(n)$ be an algorithm that computes a $c$-long-distance $\eps$-synchronization string $S\in\Sigma^n$ in $T(n)$ time. Further, let $\mathcal{A}_p(n, i)$ be an algorithm that computes $i$th position of a $c$-long-distance $\eps$-synchronization string of length $n$ in $T_p(n)$. Then, for any integer number $m\ge 2$, one can compose algorithms $\mathcal{A}'(n)$ and $\mathcal{A}'_p(i)$ that compute $S'[1, n]$ and $S'[i]$ respectively where $S'$ is an infinite $c$-long-distance $\left(\eps+\frac{4}{c\log m}\right)$-synchronization string over $\Sigma \times \Sigma$. Further, $\mathcal{A}'(n)$ and $\mathcal{A}'_p(i)$ run in 
$\min\left\{T(m^n), n\cdot T_p(m^n)\right\}$
 and $T_p(m^i)$ time respectively.
\end{lemma}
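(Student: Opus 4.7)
The plan is to generalize the construction in Theorem~\ref{thm:infiniteSync} by replacing the growth factor $k = 6/\eps$ with the free parameter $m$ and swapping in the finite $c$-long-distance $\eps$-synchronization strings supplied by $\mathcal{A}$ in place of the plain $\eps/2$-synchronization strings used there. Let $S_{m^j}$ denote a finite $c$-long-distance $\eps$-synchronization string of length $m^j$. Define
\[
U := S_m \cdot S_{m^3} \cdot S_{m^5} \cdots, \qquad V := S_{m^2} \cdot S_{m^4} \cdot S_{m^6} \cdots,
\]
and set $S'[i] := (U[i], V[i])$ over the alphabet $\Sigma \times \Sigma$.

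The first step is a covering lemma in the spirit of Lemma~\ref{lem:covering}: any window $[x, z) \subseteq S'$ of length $\ell$ with $m^t \le \ell < m^{t+1}$ is entirely contained in a single finite block of either $U$ or $V$ except for at most an $O(1/m)$ fraction of its symbols. The argument is essentially the one used for Lemma~\ref{lem:covering} after replacing $k$ by $m$: the turning points in each of $U$ and $V$ are spaced by a factor of $m^2$, so any window of length below $m^{t+1}$ can straddle at most one turning point of size class $\ge m^t$ in $U \cup V$, and the remaining side supplies a block that contains the window modulo short fringes on either end.

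Next I would use this covering to deduce that every prefix of $S'$ is a $c$-long-distance $(\eps + 4/(c \log m))$-synchronization string. For any pair of intervals within a prefix of length $n$ that satisfies either the neighboring or the aggregate-length-at-least-$c \log n$ condition, take the minimal enclosing window and apply the covering lemma to pin down a single inner block $S_{m^j}$ containing all but a small fraction of both intervals. The truncated intervals still exceed the long-distance threshold $c \log(m^j)$ of that block (because $m^j$ is comparable to $n$ at the scale in question), so the $c$-long-distance $\eps$-synchronization property of $S_{m^j}$ provides an edit distance of at least $(1-\eps)$ times the truncated aggregate length; restoring the discarded fringes through the triangle inequality for edit distance adds a slack of at most $4/(c \log m)$, which yields the target parameter.

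For running times, $\mathcal{A}'_p(i)$ locates the two blocks of $U$ and $V$ containing position $i$ and invokes $\mathcal{A}_p$ once on each; since those blocks have size at most $m^i$, the total time is $O(T_p(m^i))$. For $\mathcal{A}'(n)$, we either invoke $\mathcal{A}$ once on the largest needed block (time $T(m^n)$) or make $n$ pointwise queries through $\mathcal{A}_p$ (time $n \cdot T_p(m^n)$), whichever is smaller. The main obstacle is the bookkeeping in the edit-distance step: showing that the $O(1/m)$ fringe loss from the covering lemma, combined with how the long-distance threshold $c \log(m^j)$ of each finite block $S_{m^j}$ aligns with the threshold $c \log n$ of the outer prefix, collapses exactly to the $4/(c \log m)$ slack claimed in the statement rather than to a coarser bound.
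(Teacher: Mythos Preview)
Your construction has a genuine gap: geometric growth with ratio $m$ works for ordinary synchronization strings (Theorem~\ref{thm:infiniteSync}) but \emph{fails} for the long-distance property, and this is not merely a bookkeeping issue. The paper therefore replaces geometric growth by \emph{tower-function} growth: it sets $U = (S_{\mathrm{tower}(m,1)}, S_{\mathrm{tower}(m,3)}, \ldots)$ and $V = (S_{\mathrm{tower}(m,2)}, S_{\mathrm{tower}(m,4)}, \ldots)$ where $\mathrm{tower}(m,1)=m$ and $\mathrm{tower}(m,i)=m^{\mathrm{tower}(m,i-1)}$.

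Here is why your approach breaks. The $c$-long-distance property of the length-$n$ prefix must hold for \emph{any} two intervals of total length $\ge c\log n$, not just neighboring ones. Take the first interval to be $[1,\tfrac{c}{2}\log n)$ and the second to be $[n-\tfrac{c}{2}\log n, n)$. Your covering lemma applied to the enclosing window $[1,n)$ puts the window inside a block $S_{m^j}$ up to an $O(1/m)$ fringe, but that fringe has absolute size $\Theta(n/m)$; indeed, in the geometric construction the block of $U$ or $V$ containing position $n$ starts no earlier than position $\Theta(n/m^{3})$. For large $n$ this starting point lies far to the right of the entire first interval, so after truncation the first interval is empty and no edit-distance bound can be extracted from $S_{m^j}$. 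The ``$O(1/m)$ fringe'' is a fraction of the \emph{window}, not of the intervals' total length $c\log n$, and the two scales differ by a factor of $n/\log n$.

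The tower construction is exactly what repairs this. If $q_k < j_2 \le q_{k+1}$, then the block covering $[q_{k-1}, q_{k+1})$ has all of $[1,n]$ except the first $q_{k-1} \le 2\,\mathrm{tower}(m,k-1)$ positions. Crucially, $\log(\mathrm{tower}(m,k)) = \mathrm{tower}(m,k-1)\log m$, so the total interval length $\ge c\log j_2 \ge c\log m \cdot \mathrm{tower}(m,k-1)$ dominates the uncovered prefix by a factor of $\tfrac{c\log m}{2}$. This is precisely where the slack $\tfrac{4}{c\log m}$ comes from. The running-time bound $T_p(m^i)$ in the statement is also a fingerprint of the tower construction: the block containing position $i$ has length $\mathrm{tower}(m,k+1)=m^{\mathrm{tower}(m,k)}\le m^{i}$, whereas your geometric blocks have length only $O(mi)$.
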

\begin{proof}
We closely follow the steps we took in Theorem~\ref{thm:infiniteSync}, except, instead of using geometrically increasing synchronization strings in construction of $U$ and $V$, we will use $c$-long-distance $\eps$-synchronization strings whose length increase in the form of a tower function. We define the tower function $tower(p, i)$ for $p\in \mathbb{R}, i \in \mathbb{Z}^+$ recursively as follows: Let $tower(p, 1) = p$ and for $i > 1$, $tower(p, i) = p^{tower(p, i-1)}$. Then, we define two infinite strings $U$ and $V$ as follows:
$$U = (S_m, S_{m^{m^m}}, \dots), \qquad V = (S_{m^m}, S_{m^{m^{m^m}}}, \dots).$$
where $S_l$ is a $c$-long-distance $\eps$-synchronization string of length $l$. We define the infinite string $T$ as the point by point concatenation of $U$ and $V$. 

We now show that this string satisfies the $c$-long-distance $\left(\eps+\frac{4}{c\log m}\right)$-synchronization property. 
We define \emph{turning points} $\{q_i\}_{i=1}^\infty$ in the same manner as we did in Theorem~\ref{thm:infiniteSync}, i.e., the indices of $T$ where a $S_{tower(m, i)}$ starts. Let $q_i$ be the index where $S_{tower(m, i+1)}$ starts.

Consider two intervals $[i_1, j_1)$ and $[i_2, j_2)$ where $j_1 \le i_2$ and $(j_1 - i_1) + (j_2 - i_2) \ge c\log j_2$. Let $k$ be an integer for which $q_k<j_2\le q_{k+1}$. 
Then, $(j_1 - i_1) + (j_2 - i_2) \ge c\log j_2 \ge c \log\left(tower(m, k)\right) = c\log m 
\cdot tower(m, k-1)$.
Note that all but $tower(m, k-1) + tower(m, k-3)+\cdots \le 2\cdot tower(m, k-1)$ many elements of $T[i_1, j_1) \cup T[i_2, j_2)$ lay in $T[q_{k-1}, q_{k+1})$ which is covered by $S_{tower(m, k-1)}$. Therefore, for $l = (j_1-i_1) + (j_2-i_2)$
\begin{eqnarray*}
ED(T[i_1, j_1), T[i_2, j_2)) &\ge&  ED(T[\max\{i_1, q_{k-1}\}, j_1), T[i_2, j_2)) - 2\cdot tower(m, k-1)\\
&\ge& (1-\eps)\cdot[(j_2-i_2) + (j_1-\max\{i_1, q_{k-1}\})] - 2\cdot tower(m, k-1)\\
&\ge& (1-\eps)\cdot[l-2\cdot tower(m, k-1)] - 2\cdot tower(m, k-1)\\
&\ge& (1-\eps)\cdot l - 4\cdot tower(m, k-1)\\
&\ge& \left(1-\eps-\frac{4\cdot tower(m, k-1)}{l}\right)\cdot l\\
&\ge& \left(1-\eps-\frac{4}{c\log m}\right)\cdot l
\end{eqnarray*}

Further, any two neighboring intervals $[i_1, i_2)$ and $[i_2, i_3)$ where $i_3 - i_1 < c \log i_3$ and $k\le i_3<k+1$, $[i_1, i_3)$ completely lies in $S_{k-1}$ and therefore $\eps$-synchronization property for short neighboring intervals holds as well. Thus, this string satisfies infinite $c$-long-distance $\left(\eps+\frac{4}{c\log m}\right)$-synchronization property.

Finally, to compute index $i$ of infinite string $T$ constructed as mentioned above, one needs to compute a single index of two finite $c$-long-distance $\eps$-synchronization strings of length $m^i$ or less. Therefore, computing $T[i]$ takes $T_p(m^i)$. This also implies that $T[1, n]$ can be computed in $n\cdot T_p(m^n)$. Clearly, on can also compute $T[1, n]$ by computing all finite strings that appear within the first $n$ elements. Hence, $T[1, n]$ is computable in $\min\left\{T(m^n), n\cdot T_p(m^n)\right\}$.
\end{proof}

Utilizing the construction proposed in Lemma~\ref{lem:blackBoxLongDistInfiniteConstruction} with $m = 2$ along with the highly-explicit finite $O_\eps(1)$-long-distance $\frac{\eps}{2}$-synchronization string construction introduced in Theorem~\ref{thm:FiniteLongDistConstructionNEW}, results in the following infinite string construction:

\begin{theorem}
For any constant $0 < \eps  < 1$ there is a deterministic algorithm which computes $i$th position of an infinite $c$-long-distance $\eps$-synchronization string $S$ over an alphabet of size $|\Sigma|=\eps^{-O(1)}$ where $c=O_\eps(1)$ in $O_\eps(i)$ time. This implies a quadratic time construction for any prefix of such string.
\end{theorem}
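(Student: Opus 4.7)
The plan is to prove this theorem by directly composing two results already established in the excerpt: the highly-explicit finite long-distance synchronization string construction of Theorem~\ref{thm:FiniteLongDistConstructionNEW} and the black-box finite-to-infinite transformation given by Lemma~\ref{lem:blackBoxLongDistInfiniteConstruction}. Since the lemma already packages the combinatorial argument and the alphabet, time, and parameter bookkeeping, the proof reduces to choosing the invocation parameters and verifying that the resulting complexity matches the stated $O_\eps(i)$-time per position.

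Concretely, I would fix $m = 2$ and invoke Theorem~\ref{thm:FiniteLongDistConstructionNEW} to obtain, for a suitable $c' = \eps^{-O(1)}$, a finite $c'$-long-distance $(\eps/2)$-synchronization string construction over an alphabet of size $\eps^{-O(1)}$ that is $O(\log n)$-explicit, i.e., with $T_p(n) = O_\eps(\log n)$. The value of $c'$ must be chosen large enough that the loss term $\tfrac{4}{c' \log m} = \tfrac{4}{c'}$ introduced by Lemma~\ref{lem:blackBoxLongDistInfiniteConstruction} is at most $\eps/2$; any $c' \geq 8/\eps$ suffices, which is compatible with the $\eps^{-O(1)}$ guarantee of the finite construction. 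Feeding $\mathcal{A}_p$ into the lemma then produces an infinite $c'$-long-distance $\eps$-synchronization string over the alphabet $\Sigma \times \Sigma$, which is still of size $\eps^{-O(1)}$.

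The time analysis is then immediate. The lemma guarantees that the $i$-th position of the infinite string can be computed in time $T_p(m^i) = T_p(2^i) = O_\eps(\log 2^i) = O_\eps(i)$, which is precisely the claimed bound. For the first $n$ symbols, I would use the position-by-position option of the lemma: $n \cdot T_p(m^n) = n \cdot O_\eps(n) = O_\eps(n^2)$, giving the claimed quadratic-time construction of arbitrary prefixes. (The alternative bound $T(m^n)$ from the lemma would be exponential, so this is the branch to take.)

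I do not anticipate a substantive obstacle: the combinatorial content is already discharged inside Lemma~\ref{lem:blackBoxLongDistInfiniteConstruction} (tower-scaled concatenation and the covering argument) and inside Theorem~\ref{thm:FiniteLongDistConstructionNEW} (the highly-explicit finite construction based on insdel codes). The only genuine checks are that the long-distance parameter $c$ is preserved by the transformation (it is, per the lemma's statement), that the additive $\eps$-loss absorbs cleanly into the starting $\eps/2$ (it does for our choice of $c'$), and that the alphabet size remains polynomial in $\eps^{-1}$ under the $\Sigma \times \Sigma$ operation (it does). Putting these verifications together yields the theorem and its corollary on quadratic-time prefix construction.
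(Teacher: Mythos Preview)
Your proposal is correct and follows essentially the same approach as the paper: the paper's proof is a one-sentence invocation of Lemma~\ref{lem:blackBoxLongDistInfiniteConstruction} with $m=2$ applied to the highly-explicit finite $\tfrac{\eps}{2}$-synchronization string construction of Theorem~\ref{thm:FiniteLongDistConstructionNEW}. You spell out the parameter verification (that $c' \ge 8/\eps$ so that the additive loss $4/c'$ is absorbed) and the time bookkeeping more explicitly than the paper does, but the argument is identical.
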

}

\exclude
{
\hrulefill

In Theorem ?? of \cite{}, we proved the existence of $\eps$-synchronization strings of length $n$ for any $0<\eps<1$ and provided polynomial constructions for them.
In this section, we provide a variety of more advanced construction techniques for synchronization strings.

\smallskip

There are different notions of explicitness. We first give an efficient randomized algorithm which, for a given $\eps$ and $n$, produces a length $n$ $\eps$-synchronization string with very high probability. We do this by using the algorithmic solutions to the Lov\'{a}sz local lemma. We also give a bootstrapping construction which transforms construction this polynomial time construction into a linear time one. In many cases however it is preferable to have a deterministic algorithm. The derandomized algorithmic Lovasz local lemma given in \cite{} however cannot be used here because of the exponentially small probabilities involved in the LLL proof of Theorem~\ref{thm:existence} and we do not know of any way to obtain a deterministic construction via the LLL. Instead, we give a different deterministic polynomial time construction which can similarely be boosted to run in linear time. Lastly, we give a construction of infinite $\eps$-synchronization strings. Our construction allows to compute the first $n$ symbols of a fixed infinite $\eps$-synchronization string in $O(n)$ time. It furthermore has the additional advantage that it is more explicit in a specific sense: Our construction allows to construct the $i^{th}$ symbol of the same infinite $\eps$-synchronization string in time $\tilde{O}(\log i)$. 

Our main theorem regarding the existence of (highly) explicit (infinite) $(1-\eps)$-synchronization strings can thus be summarized as follows:

\begin{theorem}\label{thm:FiniteSyncConstruction}
For any $0 < \eps < 1$ and there exists an infinite $(1 - \eps)$-synchronization string $S \in \Sigma^{\mathbb{N}}$ with alphabet size $|\Sigma|=\eps^{-O(1)}$ and an two efficient algorithms of which the first one for a given $\eps$ and $n \in \mathbb{N}$ computes $S[1,n]$ in time $O(n)$ while the second one computes for a given $\eps$ and $i$ computes $S[i]$ or even $S[i,i+\log i]$ in time $\tilde{\Theta}(\log i)$.
\end{theorem}

\smallskip 

}

\section{Local Decoding}\label{sec:localDecoding}

In Section~\ref{sec:sync_construction}, we discussed the close relationship between long-distance synchronization strings and insdel codes and provided highly-explicit constructions of long-distance synchronization strings based on insdel codes.

In this section, we make a slight modification to the highly explicit structure~\eqref{eq:HighlyExplicitLongDistConstruction} we introduced in Theorem~\ref{thm:FiniteLongDistConstructionNEW} where we showed one can use a constant rate insertion-deletion code $\mathcal{C}$ with distance $1-\frac{\eps}{4}$ and block length $N=O(\log n)$ and a string $T$ satisfying $\eps$-synchronization property for pairs of neighboring intervals of total length $c\log n$ or less to make a $c$-long-distance synchronization string of length $n$. 
In addition to the symbols of the string consisting of codewords of $\mathcal{C}$ and symbols of string $T$, we append $\Theta\left(\log \frac{1}{\eps}\right)$ extra bits to each symbol to enable \emph{local decodability}. This extra symbol, as described in~\eqref{eq:LocalDecConstruction}, essentially works as a circular index counter for insertion-deletion code blocks. 
\begin{eqnarray}
R[i]=\Bigg(\mathcal{C}\left(\left\lfloor \frac{i}{N}\right\rfloor\right) \left[i \left( \bmod\, N\right)\right], T[i], \left\lfloor \frac{i}{N}\right\rfloor \left(\bmod\, \frac{8}{\eps^3}\right)\Bigg)\label{eq:LocalDecConstruction}
\end{eqnarray}
With this extra information appended to the construction, we claim that \emph{relative suffix error density} is smaller than $\eps$ upon arrival of some symbol, then one can decode the corresponding index correctly by only looking at the last $O(\log n)$ symbols. At any point of a communication over an insertion-deletion channel, relative suffix error density is defined as maximum fraction of errors occurred over all suffixes of the message sent so far. (see Definition 5.12 from~\cite{haeupler2017synchronization}).

\begin{theorem}\label{thm:LocalDecodingFiniteModular}
Let $R$ be a highly-explicit long-distance $\eps$-synchronization string constructed according to~\eqref{eq:LocalDecConstruction}. Let $R[1, i]$ be sent by Alice and be received as $R'[1, j]$ by Bob. If relative suffix error density is smaller than $1-\frac{\eps}{2}$, then Bob can find $i$ in 
$\frac{4}{\eps}\cdot T_{Dec}(N) +\frac{4N}{\eps}\cdot(T_{Enc}(N) + Ex_T(c\log n) + c^2\log^2n)$ only by looking at the last $\max(\frac{4N}{\eps^2}, c\log n)$ received symbols where $T_{Enc}$ and $T_{Dec}$ is the encoding and decoding complexities of $\mathcal{C}$ and $Ex_T(l)$ is the amount of time it takes to construct a substring of $T$ of length $l$.
\end{theorem}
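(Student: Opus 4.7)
The plan is to use the modular counter component (the third coordinate of $R$) to localize block boundaries in the received stream, apply $\mathcal{C}$'s insdel decoder to recover a single block index, and then combine this with the short-range synchronization property of $T$ to pin down $i$ exactly. Throughout, I would work only inside the window $W$ of the last $L = \max(4N/\eps^2, c\log n)$ received symbols, so that all computations are intrinsically local.

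First I would bound the errors inside $W$. The suffix error density assumption gives at most $(\eps/2) L$ insertions and deletions affecting $W$ (relative to the corresponding prefix of what Alice sent); since $L \geq 4N/\eps^2$, the window covers at least $4/\eps^2$ full sent blocks of $\mathcal{C}$, with a global error budget of at most $2N/\eps$. By an averaging (Markov) argument, at most $2/\eps$ of these blocks can each sustain more than $\eps N$ insdels, so a $(1-\eps/2)$ fraction of the blocks in $W$ are corrupted by fewer than $\eps N \ll (1-\eps/4)N/2$ insdels and are therefore correctly decodable by $\mathcal{C}$'s decoder.

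Next, I would use the third coordinate to segment $W$ into candidate blocks. In the sent stream the counter is constant for exactly $N$ consecutive symbols and then increments modulo $8/\eps^3$; in the received stream the positions where the received counter increments therefore yield $O(L/N) = O(1/\eps^2)$ candidate block boundaries, each at most $O(\eps N)$ away from a true boundary by the per-block error bound. For each of these $O(1/\eps)$ candidate blocks (we only need a constant fraction with few errors, which exists by the averaging above), Bob treats the enclosed $N$ received symbols as a corrupted codeword of $\mathcal{C}$ and decodes in time $T_{Dec}(N)$; by the previous paragraph, at least one attempt succeeds and returns a valid block index $k = \lfloor i/N \rfloor$, and inconsistent outputs can be discarded by cross-checking against the counter sequence. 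This accounts for the $\frac{4}{\eps}\cdot T_{Dec}(N)$ term.

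Once a correct $k$ is known, the exact index $i$ is obtained by identifying $i \bmod N$, i.e.\ the position of the end of $W$ inside the current partial block. I would iterate over the $O(N/\eps)$ possible alignments of the tail of $W$ with the sent stream: for each candidate, reconstruct locally the expected $\Theta(c\log n)$ suffix symbols of $R$ (each requiring $T_{Enc}(N)$ for the $\mathcal{C}$-coordinate, $Ex_T(c\log n)$ for the $T$-coordinate, and a fixed cost for the counter), then verify consistency with the received suffix using the global decoder of \cref{thm:globalDecoding} applied to this logarithmic-length interval, which runs in $O(c^2\log^2 n)$ time; the short-range $\eps$-synchronization property of $T$ guarantees that only the correct alignment is accepted. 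Summing gives the claimed $\frac{4N}{\eps}\cdot (T_{Enc}(N) + Ex_T(c\log n) + c^2\log^2 n)$ term.

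The main obstacle I expect is making rigorous the claim that the counter-based segmentation is robust: corrupted counter symbols can either erase a true block boundary or fabricate a spurious one, and insdels can shift the true boundaries inside $W$. The fix is to observe that under the per-block error bound of $\eps N$, any received block decoded by $\mathcal{C}$ from endpoints that are misaligned by up to $\eps N$ symbols still differs from some true codeword by fewer than $(1-\eps/4)N/2$ insdels and is therefore uniquely decodable, so even slightly misaligned candidates recover a correct $k$; consistency with the decoded counter sequence then rules out the rare genuinely bad candidates.
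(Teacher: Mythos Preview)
Your argument rests on a misreading of the hypothesis. ``Relative suffix error density smaller than $1-\eps/2$'' means that in every suffix of what Alice has sent, the fraction of insdels is \emph{below} $1-\eps/2$; it does \emph{not} mean the error fraction is below $\eps/2$. So the window $W$ can suffer close to a $(1-\eps/2)$-fraction of insertions and deletions, and your averaging claim (``at most $2/\eps$ of the blocks sustain more than $\eps N$ insdels, so a $(1-\eps/2)$ fraction of the blocks are good'') is false. Under the actual bound almost every block in $W$ can be obliterated; all one can conclude by averaging over the last $4/\eps$ sent blocks is that \emph{at least one} of them has error fraction at most $1-\eps/4$ and hence lies within $\mathcal{C}$'s decoding radius. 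Consequently your Phase~1 cannot recover a unique $k=\lfloor i/N\rfloor$; there is no majority to take and no consistency check to run, since wrong decodings may perfectly match the modular counter. Your ``fix'' in the last paragraph (misalignment by at most $\eps N$) again presupposes the wrong error budget.

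This is precisely why the paper treats Phase~1 as a \emph{list-decoding} step (Lemma~\ref{lem:fastDecodingPhaseOne}): it keeps all $4/\eps$ decoded block indices, yielding $4N/\eps$ candidate positions, and defers the entire disambiguation to Phase~2. There the decisive tool is not the short-range property of $T$ but the \emph{long-distance} property of the full string $R$: for any two distinct prefixes of a $c$-long-distance $\eps$-synchronization string, the $(c\log n)$-limited relative suffix distance exceeds $1-\eps$ (Lemma~\ref{lem:LRSD}). Bob therefore computes, for each of the $4N/\eps$ candidates $i'$, the $(c\log n)$-LRSD between the received suffix and $R[1,i']$, at cost $T_{Enc}(N)+Ex_T(c\log n)+O(c^2\log^2 n)$ each, and outputs the unique candidate with small LRSD. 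Your Phase~2, which invokes only the $T$-component and the global decoder of Theorem~\ref{thm:globalDecoding}, cannot separate two candidates more than $c\log n$ apart (where $T$ simply repeats), and the global decoder is not a verifier of a proposed index in any case.
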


For linear-time encodable, quadratic-time decodable code $\mathcal{C}$ and highly-explicit string $T$ constructed by repetitions of short synchronization strings used in Theorem~\ref{thm:FiniteLongDistConstructionNEW}, construction~\eqref{eq:LocalDecConstruction} provides the following.

\begin{theorem}\label{thm:LocalDecodingFinite}
Let $R$ be a highly-explicit long-distance $\eps$-synchronization string constructed according to~\eqref{eq:LocalDecConstruction} with code $\mathcal{C}$ and string $T$ as described in Theorem~\ref{thm:FiniteLongDistConstructionNEW}. Let $R[1, i]$ be sent by Alice and be received as $R'[1, j]$ by Bob. If relative suffix error density is smaller than $1-\frac{\eps}{2}$, then Bob can find $i$ in $O(\log^3 n)$ only by looking at the last $O(\log n)$ received symbols.
\end{theorem}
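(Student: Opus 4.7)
The plan is to derive Theorem~\ref{thm:LocalDecodingFinite} as an immediate specialization of Theorem~\ref{thm:LocalDecodingFiniteModular} by plugging in the parameters of the specific construction used in Theorem~\ref{thm:FiniteLongDistConstructionNEW}. First I would record the relevant parameters: from Theorem~\ref{thm:insdelCode}, the insdel code $\mathcal{C}$ used in the construction has block length $N = O(\log n)$, linear encoding complexity $T_{Enc}(N) = O(\log n)$, and quadratic decoding complexity $T_{Dec}(N) = O(\log^2 n)$. The string $T$ is a symbol-by-symbol concatenation of two shifted copies of a short $\eps$-synchronization string of length $2c\log n$ (produced via Theorem~\ref{thm:linearTimeConstruction}), so any window of length $c\log n$ in $T$ can be produced in $Ex_T(c\log n) = O(\log n)$ time. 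Finally the long-distance parameter is $c = \eps^{-O(1)}$, a constant once $\eps$ is fixed.

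Second, the hypothesis of Theorem~\ref{thm:LocalDecodingFiniteModular}---namely that the relative suffix error density is below $1 - \eps/2$---is already assumed in the statement we are proving, so no additional analysis is required. Substituting the parameters above into the bound from Theorem~\ref{thm:LocalDecodingFiniteModular} yields a window size of $\max(4N/\eps^2, c\log n) = O(\log n)$ and decoding time
\[
\tfrac{4}{\eps}\, T_{Dec}(N) + \tfrac{4N}{\eps}\bigl(T_{Enc}(N) + Ex_T(c\log n) + c^2 \log^2 n\bigr) = O(\log^2 n) + O(\log n)\cdot O(\log^2 n) = O(\log^3 n),
\]
which are exactly the bounds claimed.

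The main obstacle therefore does not lie in this corollary but in establishing Theorem~\ref{thm:LocalDecodingFiniteModular} itself. The hard content there is showing that the modular counter appended in~\eqref{eq:LocalDecConstruction}---whose period $8/\eps^3$ is chosen a constant factor larger than the $4/\eps$ recent code blocks that fit in the examined window---together with the short-interval synchronization property of $T$ allows the receiver to unambiguously identify the block boundaries of $\mathcal{C}$ within the last $O(\log n)$ received symbols, even under an adversarial suffix-error density up to $1-\eps/2$. Once alignment with the block boundaries has been established one simply runs the insdel decoder on the last $O(1/\eps)$ codewords and reads off the block index. That alignment argument is the technical crux, but it is the business of Theorem~\ref{thm:LocalDecodingFiniteModular}; given that statement, the proof of Theorem~\ref{thm:LocalDecodingFinite} is the parameter bookkeeping above.
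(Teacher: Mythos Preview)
Your proposal is correct and matches the paper's treatment: the paper presents Theorem~\ref{thm:LocalDecodingFinite} as a direct specialization of Theorem~\ref{thm:LocalDecodingFiniteModular}, obtained by substituting $N=O(\log n)$, $T_{Enc}(N)=O(\log n)$, $T_{Dec}(N)=O(\log^2 n)$, $Ex_T(c\log n)=O(\log n)$ and $c=\eps^{-O(1)}$, exactly as you do. One tangential remark: your informal description of the ``hard content'' of Theorem~\ref{thm:LocalDecodingFiniteModular} as an alignment/block-boundary identification step does not quite match the paper's actual mechanism, which is a two-phase list-then-disambiguate procedure (Lemma~\ref{lem:fastDecodingPhaseOne} produces a list of $4N/\eps$ candidate indices via the counter and the insdel decoder, and Lemma~\ref{lem:LRSD} selects the correct one via a $(c\log n)$-limited relative suffix distance test); but this does not affect the correctness of your derivation of Theorem~\ref{thm:LocalDecodingFinite}.
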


This decoding procedure, which we will refer to as \emph{local decoding} consists of two principal phases upon arrival of each symbol. During the first phase, the receiver finds a list of $\frac{1}{\eps}$ numbers that is guaranteed to contain the index of the current insertion-deletion code block. This gives $\frac{N}{\eps}$ candidates for the index of the received symbol. The second phase uses the relative suffix error density guarantee to choose the correct candidate among the list. The following lemma formally presents the first phase. This idea of using list decoding as a middle step to achieve unique decoding has been used by several previous work~\cite{GL-isit16, GW-random15, ghaffari2014optimal, guruswami2006explicit}.

\begin{lemma}\label{lem:fastDecodingPhaseOne}
Let $S$ be an $\eps$-synchronization string constructed as described in~\eqref{eq:LocalDecConstruction}. Let $S[1, i]$ be sent by Alice and be received as $S_\tau[1, j]$ by Bob. If relative suffix error density is smaller than $1-\eps/2$, then Bob can compute a list of $\frac{4N}{\eps}$ numbers that is guaranteed to contain $i$.
\end{lemma}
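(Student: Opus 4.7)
The plan is to pick a window consisting of the last $w=\Theta(N/\eps)$ received symbols, use the suffix error density bound to locate a complete block of the insdel code $\mathcal{C}$ that is transmitted cleanly enough to be decoded, and from the recovered block index conclude that $i$ must lie in a range of just $4N/\eps$ consecutive positions.

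First, I would argue that the window contains a decodable block of $\mathcal{C}$. The suffix-density bound of $1-\eps/2$ forces at least an $\eps/2$ fraction of the $w$ received symbols to be successfully transmitted, giving $\Theta(N)$ correctly transmitted symbols whose sent positions span at most $2w = O(N/\eps)$ consecutive sent indices (the deletions in the window are controlled by the same density condition). This span intersects at most $O(1/\eps)$ complete blocks of $\mathcal{C}$, so by pigeonhole at least one such block contributes $\Omega(\eps N)$ correctly transmitted symbols. Its received image therefore differs from the true codeword by fewer than $(1-\eps/4)N$ insertions and deletions, which lies within $\mathcal{C}$'s insdel-decoding radius, as $\mathcal{C}$ was instantiated with parameter $\eps/4$ in Theorem~\ref{thm:FiniteLongDistConstructionNEW}. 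Algorithmically, Bob scans each candidate received interval that could plausibly host a full block image and runs $\mathcal{C}$'s insdel decoder on each; at least one attempt succeeds and returns the correct codeword.

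Each successful decoding directly reveals the sent-block index $b^*$, since $\mathcal{C}$'s encoded message is precisely the integer index of the block. Choosing the decoded block whose received image ends at the latest position $q^*\le j$, the tail $S_\tau[q^*+1,j]$ has length at most $w$, and applying the suffix density bound to this tail shows it corresponds to at most $2w\le 4N/\eps$ sent positions beyond block $b^*$. Consequently $i$ must belong to the range $\{b^* N+1,\, b^* N+2,\, \ldots,\, b^* N + 4N/\eps\}$, which gives the required list of $4N/\eps$ candidates.

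The principal obstacle is dealing with spurious decodings on misaligned candidate intervals: a decoder invocation on a wrong alignment may still succeed and output a block index that is inconsistent with the true $i$. To rule such outputs out, I would use the circular counter embedded in each $R[\cdot]$ as a lightweight consistency test: the counter component of every correctly transmitted symbol inside a candidate interval must equal $b^* \bmod (8/\eps^3)$, and since the counter's period $8/\eps^3$ far exceeds the $O(1/\eps)$ distinct blocks touching the window, distinct in-window blocks cannot share a counter value and hence cannot be confused with one another. A secondary concern is quantitative slack: the pigeonhole estimate meets $\mathcal{C}$'s decoding radius only at the boundary, so either a slightly enlarged window or a sharper per-block accounting of correctly transmitted symbols is needed to obtain strict inequality.
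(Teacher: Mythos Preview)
Your approach differs from the paper's in a way that creates the very obstacle you flag at the end. The paper's proof is considerably simpler because it uses the counter component \emph{first}, not as a post-hoc consistency check. The key observation you are missing is that the suffix error density bound $1-\eps/2<1$ forces the \emph{last} received symbol to be successfully transmitted, so Bob reads off the exact current counter value $count$ immediately. He then filters the last $8N/\eps^2$ received symbols by their counter component, keeping only those whose counter lies in $\{count, count-1,\ldots,count-4/\eps+1\}\bmod(8/\eps^3)$; the large modulus $8/\eps^3$ guarantees no older blocks alias into this set within the window. This partitions the window into $4/\eps$ reconstructed blocks, Bob decodes \emph{each} of them, and simply takes the union of all $4/\eps$ decoded block indices (times $N$ positions each) as the candidate list. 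An averaging argument over the $4/\eps$ blocks shows at least one has insdel error fraction at most $1-\eps/4$ and hence decodes correctly, so the true $i$ is on the list.

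Your route instead tries to scan unspecified ``candidate received intervals,'' identify a \emph{single} correct decoding, and then infer a contiguous range for $i$ from its position. This forces you to distinguish correct from spurious decodings, which is exactly the difficulty you acknowledge is unresolved. The paper never needs to make that distinction: wrong decodings just contribute wrong candidates to the list, which is harmless since the list only needs to \emph{contain} $i$. Your secondary concern about the pigeonhole slack is also handled more cleanly by the paper's averaging over the $4/\eps$ partitioned blocks (with an additive $N$ to account for the possibly incomplete last block), rather than by counting correctly transmitted symbols and trying to bound insertions separately.
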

\begin{proof}
Note that as relative suffix error density is smaller than $1-\eps/2 < 1$, the last received symbol has to be successfully transmitted. Therefore, Bob can correctly figure out the insertion-deletion code block index counter value which we denote by $count$. 
Note that if there are no errors, all symbols in blocks with index counter value of $count, count -1, \cdots, count -4/\eps+1 \bmod \frac{8}{\eps^3}$ that was sent by Bob right before the current symbol, have to be arrived within the past $4/\eps\cdot N$ symbols. However, as adversary can insert symbols, those symbols can appear anywhere within the last $\frac{2}{\eps}\frac{4N}{\eps}=\frac{8N}{\eps^2}$ symbols. 

Hence, if Bob looks at the symbols arrived with index $i\in\{count, count -1, \cdots, count -4/\eps+1\}  \bmod \frac{8}{\eps^3}$ within the last $\frac{8N}{\eps^2}$ received symbols, he can observe all symbols coming from blocks with index $count, count -1, \cdots, count -4/\eps+1 \bmod \frac{8}{\eps^3}$ that was sent right before $S[i]$. Further, as our counter counts modulo $\frac{8}{\eps^3}$, no symbols from older blocks with indices $count, count -1, \cdots, count -1/\eps+1 \bmod \frac{4}{\eps^3}$ will appear within the past $\frac{8N}{\eps^2}$ symbols. Therefore, Bob can find the symbols from the last $\frac{4}{\eps}$ blocks up to some insdel errors. By decoding those blocks, he can make up a list of $\frac{4}{\eps}$ candidates for the actual block number. As each block contains $N$ elements, there are a total of $\frac{4N}{\eps}$ many candidates for $i$.

Note that as relative suffix error density is at most $1-\eps/2$ and the last block may not have been completely sent yet, the total fraction of insdels in reconstruction of the last $\frac{4}{\eps}$ blocks on Bob side smaller than $1-\eps/2 + \frac{N}{4N/\eps^2} \le 1-\frac{\eps}{4}$. Therefore, the error density in at least one of those blocks is not larger than $1-\frac{\eps}{4}$. This guarantees that at least one block will be correctly decoded and henceforth the list contains the correct actual index.
\end{proof}
 
We now define a limited version of relative suffix distance (defined in~\cite{haeupler2017synchronization}) which enables us to find the correct index among candidates found in Lemma~\ref{lem:fastDecodingPhaseOne}.
 
\begin{definition}[Limited Relative Suffix Distance]
For any two strings $S, S' \in \Sigma^*$ we define their $l$-limited relative suffix distance\fullOnly{, $l-LRSD$, as follows:
$$l-LRSD(S,S') = \max_{0 < k < l} \frac{ED\left(S(|S|-k,|S|],S'(|S'|-k,|S'|]\right)}{2k}$$}\shortOnly{ as $l-LRSD(S,S') = \max_{0 < k < l} ED\left(S(|S|-k,|S|],S'(|S'|-k,|S'|]\right)/2k$.}
\end{definition}

Note that $l=O(\log n)$-limited suffix distance of two strings can be computed in $O(l^2)=O(\log^2 n)$ by computing edit distance of all pairs of prefixes of their $l$-long suffixes.

\begin{lemma}\label{lem:LRSD}
If string $S$ is a $c$-long distance $\eps$-synchronization string, then for any two distinct prefixes $S[1, i]$ and $S[1, j]$, $(c\log n)$-$LRSD(S[1, i], S[1, j]) > 1-\eps$.
\end{lemma}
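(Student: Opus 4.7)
The plan is to directly exhibit, for each pair of distinct prefixes of $S$ (WLOG $i < j$), some integer $k$ with $0 < k < c\log n$ such that the two length-$k$ suffixes satisfy
$ED(S[i{-}k{+}1, i], S[j{-}k{+}1, j]) > (1-\eps)\cdot 2k$.
Since $(c\log n)$-$LRSD$ is defined as the maximum of $ED/(2k)$ over precisely this range, this inequality gives the lemma immediately. The choice of $k$ will depend on the gap $j - i$, and in both regimes it will be the long-distance $\eps$-synchronization property of $S$ (Definition~\ref{def:distSynchStr} with $f(l) = n \cdot \mathbbm{1}_{l > c\log n}$) that supplies the edit-distance bound.

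Case A: $j - i < c\log n$. Here I take $k := j - i$. The two suffixes become the adjacent intervals $[i-k+1, i+1)$ and $[i+1, j+1)$ of $S$. Adjacent intervals are always covered by Definition~\ref{def:distSynchStr}, because the condition $i' - j \le f(l)$ is trivially satisfied when $i' = j$ regardless of the value of $f(l)$. Hence the $c$-long-distance $\eps$-synchronization property directly yields $ED > (1-\eps) \cdot 2k$, and clearly $k < c\log n$.

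Case B: $j - i \ge c\log n$. Here I take $k := \lfloor c\log n/2\rfloor + 1$, which satisfies $c\log n < 2k$ and $k < c\log n$ (for $c\log n \ge 2$, the only nontrivial regime) as well as $k \le j - i$, so the two length-$k$ suffixes correspond to genuinely non-overlapping intervals $[i-k+1, i+1)$ and $[j-k+1, j+1)$ of $S$ whose aggregate length is $2k > c\log n$. By the long-distance part of the $c$-long-distance $\eps$-synchronization property, $ED > (1-\eps) \cdot 2k$ again.

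The only mild subtlety (the main obstacle, such as it is) is when $k > i$, so that the length-$k$ suffix of $S[1, i]$ extends past the start of the string and must be padded with the special symbol $\bot \notin \Sigma$ per the paper's convention $S[a, b] = \bot^{1-a} \cdot S[1, b]$ for $a < 1$. Since $\bot$ never appears inside $S$, any padded position forces an insertion or deletion in any matching, so padding can only \emph{increase} the edit distance against a genuine substring of $S$. Thus the bounds derived in both cases remain valid even at the boundary, an admissible $k \in (0, c\log n)$ always exists, and the lemma follows.
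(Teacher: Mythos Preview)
Your proof is correct and follows essentially the same approach as the paper: split into the two cases $j-i<c\log n$ (use adjacency with $k=j-i$) and $j-i\ge c\log n$ (use the long-distance clause with a $k$ of order $c\log n$). Your treatment is in fact slightly more careful than the paper's, since you explicitly choose $k=\lfloor c\log n/2\rfloor+1$ to guarantee both $k<c\log n$ and $2k>c\log n$, and you address the $\bot$-padding boundary case.
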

\begin{proof}
If $j - i < c\log n$, the synchronization string property gives that 
$ED(S(2i-j, i], S(i, j]) > 2(j-i)(1-\eps)$ which gives the claim for $k = j-i$. If $j - i \ge c\log n$, the long-distance property gives that 
$ED(S(i-\log n, i], S(j-\log n, j]) > 2(1-\eps)c\log n$
which again, proves the claim.
\end{proof}

Lemmas~\ref{lem:fastDecodingPhaseOne} and \ref{lem:LRSD} enable us to prove Theorem~\ref{thm:LocalDecodingFiniteModular}.

\begin{proof}[Proof of Theorem~\ref{thm:LocalDecodingFiniteModular}]
Using Lemma~\ref{lem:fastDecodingPhaseOne}, by decoding $4/\eps$ codewords, Bob forms a list of $4N/\eps$ candidates for the index of the received symbol. This will take $4/\eps \cdot T_{Dec}(N)$ time. Then, using Lemma~\ref{lem:LRSD}, for any of the $4N/\eps$ candidates, he has to construct a $c\log n$ substring of $R$ and compute the $(c\log n)$-LRSD of that with the string he received. This requires looking at the last $\max(4n/\eps, c\log n)$ recieved symbols and takes $4N/\eps\cdot(T_{Enc}(N) + Ex_T(c\log n) + c^2\log^2n)$ time.
\end{proof}


\exclude{\color{red}\subsection{Local Decoding for Infinite Long-Distance Synchronization Strings}
We now present an analogy of Theorem~\ref{thm:LocalDecodingFinite} for infinite long-distance synchronization strings constructed as described in Section~\ref{sec:InfiniteLongDistanceConstruction}. We will show that for appropriately chosen parameters in described infinite long-distance construction, the receiver will be able to find the index of a received symbols if the suffix error density is small enough.

We start by pointing out a simple yet important fact that helps the decoding procedure when it comes to local decoding of infinite synchronization strings. Suppose Alice send a prefix of an infinite synchronization string like $S[1, l]$ and Bob receives $S_\tau[1, l_\tau]$. Then, if suffix error density at that point is smaller than $\eps$, then $l(1-\eps) \le l_\tau \le l(1+\eps)$. Henceforth, Bob knows that the index of the latest symbol he receives lays in the interval 
$\left(\frac{l_\tau}{1+\eps}, \frac{l_\tau}{1-\eps}\right)$.
Given this, if in the construction proposed in Equation~\ref{} we take constant $m$ large enough or start the finite substrings we use with length $tower(m, k)$ for some $k>1$ instead of $tower(m, 1)$, we can have $\frac{q_{i+1}}{q_i} \ge \frac{1}{(1-\eps)^2}$.
This property will enable the receiver to find an interval $[q_i, q_{i+2}]$ to which $l$ surely belongs. 

Note that if suffix error density is smaller than $\eps$, the last symbol received has to be successfully transmitted. Therefore, by adding a single bit to each symbol $S_l$ which indicates whether $i$ is even or odd for $l \in \left[q_i, q_{i+1}\right)$, the receiver can uniquely identify the finite long-distance synchronization string instance in which the last symbol lays in.

Finally, if finite long-distance synchronization string instances in structure described in Theorem~\ref{} are constructed as in Theorem~\ref{}, the receiver can apply the local decoding technique discussed in Section~\ref{}. 

\acomment{Add formal construction}

\acomment{DO WE NEED THIS?}
}
\fullOnly{
\section{Application: Near Linear Time Codes Against Insdels, Block Transpositions, and Block Replications}\label{sec:AppilcationsCode}
In Sections~\ref{sec:sync_construction} and~\ref{sec:localDecoding}, we provided highly explicit constructions and local decodings for synchronization strings. Utilizing these two important properties of synchronization strings together suggests important improvements over insertion-deletion codes introduced by Haeupler and Shahrasbi~\cite{haeupler2017synchronization}.
We start by stating the following important lemma which summarizes the results of Sections~\ref{sec:sync_construction} and~\ref{sec:localDecoding}.

\begin{lemma}\label{thm:FastEncFastDec}
For any $0<\eps<1$, there exists an streaming $(n, \delta)$-indexing solution with $\eps$-synchronization string $S$ and streaming decoding algorithm $\mathcal{D}$ that figures out the index of each symbol by merely considering the last $O_\eps(\log n)$ received symbols and in $O_\eps(\log^3 n)$ time. Further, $S \in \Sigma^n$ is highly-explicit and constructible in linear-time and $|\Sigma| = O\left(\eps^{-O(1)}\right)$. This solution may contain up to $\frac{n\delta}{1-\eps}$ misdecodings. 
\end{lemma}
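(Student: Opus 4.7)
The plan is to combine the three main ingredients developed in Sections~\ref{sec:sync_construction} and~\ref{sec:localDecoding}. First, I would instantiate the highly-explicit long-distance construction from Theorem~\ref{thm:FiniteLongDistConstructionNEW} with the synchronization parameter set to (say) $\eps/4$, obtaining a $c$-long-distance $(\eps/4)$-synchronization string over an alphabet of size $\eps^{-O(1)}$ with $c = \eps^{-O(1)}$. I would then augment each symbol with the circular block counter of~\eqref{eq:LocalDecConstruction}. Since the counter uses only $O(\log \eps^{-1})$ additional bits per symbol, the alphabet size stays $\eps^{-O(1)}$. The resulting string $S$ is in particular an $\eps$-synchronization string, is highly-explicit, and is constructible in linear time, all inherited directly from Theorem~\ref{thm:FiniteLongDistConstructionNEW}; there is nothing new to prove for these claims.

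Next, for the streaming decoder $\mathcal{D}$ I would use the local decoding procedure of Theorem~\ref{thm:LocalDecodingFinite} without modification. Upon receipt of each new symbol the decoder reads only the last $O_\eps(\log n)$ received symbols and, in $O_\eps(\log^3 n)$ time, runs the two-phase routine of Section~\ref{sec:localDecoding}: phase (i) decodes the last $O(1/\eps)$ complete blocks of the insdel code $\mathcal{C}$ to produce a list of $O_\eps(\log n)$ candidate indices, and phase (ii) selects the unique candidate whose $(c\log n)$-limited relative suffix distance from the received stream is below $1-\eps/4$, or outputs $\top$ if none qualifies. Because every computation consults only previously received symbols, $\mathcal{D}$ is manifestly streaming.

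What remains is the misdecoding bound $\frac{n\delta}{1-\eps}$. By Theorem~\ref{thm:LocalDecodingFiniteModular}, a successfully transmitted symbol is decoded correctly whenever the relative suffix error density at the time of its arrival is at most $1-\eps/2$. Hence every misdecoded successfully-transmitted symbol can be charged to a suffix of the received stream whose insertion-plus-deletion count exceeds $(1-\eps/2)\cdot 2k$ for some window length $k$. The standard charging argument used in the analysis of the RSD-based global decoder of \cite{haeupler2017synchronization} (cf.\ Theorem~\ref{thm:RSPDmisdecodings}) then bounds the number of such high-density positions by $\frac{n\delta}{1-\eps}$: the total insdel budget is $n\delta$, and each such position consumes a $(1-\eps/2)$-fraction of that budget per unit of suffix length, so their count is at most $n\delta/(1-\eps/2) \le n\delta/(1-\eps)$.

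The main obstacle has essentially already been dispatched inside Section~\ref{sec:localDecoding}; the lemma itself is little more than a careful repackaging of Theorems~\ref{thm:FiniteLongDistConstructionNEW} and~\ref{thm:LocalDecodingFinite} together with the classical suffix-density counting argument. The only step requiring any care is verifying that the parameter choices ($\eps/4$ inside the construction, threshold $1-\eps/2$ in the decoder, alphabet blow-up by $O(\log \eps^{-1})$ bits) all combine cleanly to give the claimed $\eps^{-O(1)}$ alphabet and $\frac{n\delta}{1-\eps}$ misdecoding count rather than some worse constant.
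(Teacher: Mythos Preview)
Your proposal is correct and follows essentially the same approach as the paper: instantiate the highly-explicit long-distance construction of Theorem~\ref{thm:FiniteLongDistConstructionNEW}, append the circular block counter of~\eqref{eq:LocalDecConstruction}, invoke the local decoder of Theorem~\ref{thm:LocalDecodingFinite}, and bound misdecodings by counting the positions where relative suffix error density exceeds the threshold (the paper cites this as Lemma~5.14 of~\cite{haeupler2017synchronization}, while you sketch the charging argument).

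The only substantive difference is the choice of the inner synchronization parameter. The paper uses a long-distance $2\eps$-synchronization string, so that Theorem~\ref{thm:LocalDecodingFinite} gives correct decoding whenever the relative suffix error density is below $1-\tfrac{2\eps}{2}=1-\eps$, and the suffix-density count then yields exactly $\tfrac{n\delta}{1-\eps}$ misdecodings. You instead take parameter $\eps/4$; note that with this choice Theorem~\ref{thm:LocalDecodingFiniteModular} actually gives the threshold $1-\eps/8$, not the $1-\eps/2$ you wrote, so your misdecoding count is $\tfrac{n\delta}{1-\eps/8}\le \tfrac{n\delta}{1-\eps}$. This slip is in the safe direction and the bound still holds, but be aware that the numbers in your phase~(ii) description and the final inequality should be adjusted accordingly.
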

\begin{proof}
Let $S$ be a long-distance $2\eps$-synchronization string constructed according to Theorem~\ref{thm:FiniteLongDistConstructionNEW} and enhanced as suggested in~\eqref{eq:LocalDecConstruction} to ensure local decodablity. As discussed in Sections~\ref{sec:sync_construction} and \ref{sec:localDecoding}, these strings trivially satisfy all properties claimed in the statement other than the misdecoding guarantee.

According to Theorem~\ref{thm:LocalDecodingFinite}, correct decoding is ensured whenever relative suffix error density is less than $1-\frac{2\eps}{2} = 1-\eps$. Therefore, as relative suffix error density can exceed $1-\eps$ upon arrival of at most $\frac{n\delta}{1-\eps}$ many symbols (see Lemma 5.14 from~\cite{haeupler2017synchronization}), there can be at most $\frac{n\delta}{1-\eps}$ many successfully received symbols which are not decoded correctly. This proves the misdecoding guarantee.
\end{proof}

\subsection{Near-Linear Time Insertion-Deletion Code}
Using the indexing technique proposed by Haeupler and Shahrasbi~\cite{haeupler2017synchronization} summarized in Theorem~\ref{thm:mainECC} with synchronization strings and decoding algorithm from Theorem~\ref{thm:globalDecoding}, one can obtain the following insdel codes.

\begin{theorem}\label{thm:NearLinearInsdel}
For any $0<\delta<1/3$ and sufficiently small $\eps>0$, there exists an encoding map $E : \Sigma^k \rightarrow \Sigma^n$ and a decoding map $D : \Sigma^*\rightarrow \Sigma^k$, such that, if $EditDistance(E(m), x) \le \delta n$ then $D(x) = m$. Further, $\frac{k}{n}> 1-3\delta-\eps$, $|\Sigma|= f(\eps)$, and $E$ and $D$ can be computed in $O(n)$ and $O(n\log^3 n)$ time respectively.
\end{theorem}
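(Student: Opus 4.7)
The plan is to instantiate the generic code-from-sync-string construction of Theorem~\ref{thm:mainECC} with two ingredients: the fast, locally decodable synchronization string of Lemma~\ref{thm:FastEncFastDec}, and a near-linear-time half-error-correcting Hamming ECC of rate approaching $1-3\delta$. The arithmetic in Theorem~\ref{thm:mainECC} then immediately yields the claimed rate and time bounds, once the parameters of the two ingredients are matched carefully.

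First, for an auxiliary parameter $\eps_1 = \Theta(\eps)$ to be tuned, invoke Lemma~\ref{thm:FastEncFastDec} with parameter $\eps_1$ to obtain an $\eps_1$-synchronization string $S \in \Sigma_S^n$ over a constant-size alphabet $\Sigma_S$, constructible in $O(n)$ time and highly explicit, together with a streaming decoder $\mathcal{D}_S$ whose number of misdecodings is at most $k = \frac{n\delta}{1-\eps_1}$ and whose per-symbol running time is $O_\eps(\log^3 n)$ using access to only the last $O_\eps(\log n)$ received symbols.

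Second, pick a linear-time-encodable and linear-time-decodable Hamming ECC $\mathcal{C}$ over a constant-size alphabet $\Sigma_\mathcal{C}$ that can correct $n\delta + 2k = n\delta\bigl(1 + \tfrac{2}{1-\eps_1}\bigr)$ half-errors at rate $R_\mathcal{C} \geq 1 - 3\delta - \eps/2$. The near-MDS linear-time code of Guruswami--Indyk (Theorem~\ref{thm:GuruswamiIndykHighRateCodes}), with rate $r$ and slack $\epsilon$ set to appropriately small constants depending on $\eps$, supplies exactly such a code; its alphabet size depends only on $\eps$. By artificially enlarging $\Sigma_\mathcal{C}$ if needed (which costs only a multiplicative constant in alphabet size), we may additionally guarantee $\log|\Sigma_S|/\log|\Sigma_\mathcal{C}| \leq \eps/4$.

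Third, combine $(S, \mathcal{D}_S)$ and $\mathcal{C}$ via Theorem~\ref{thm:mainECC} to obtain an insdel code that corrects any $n\delta$ insertions and deletions. Its rate is
\[
\frac{R_\mathcal{C}}{1+\log|\Sigma_S|/\log|\Sigma_\mathcal{C}|} \;\geq\; (1-3\delta-\eps/2)\,(1-\eps/4) \;\geq\; 1 - 3\delta - \eps,
\]
the encoding time is the sum of the linear-time ECC encoder and the linear-time synchronization string construction, so $O(n)$, and the decoding time is the per-symbol $O_\eps(\log^3 n)$ of the local sync-string decoder summed over $n$ symbols plus the $O(n)$ decoding time of $\mathcal{C}$, for $O(n\log^3 n)$ overall. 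The main parameter-balancing issue, and essentially the only nontrivial one, is choosing $\eps_1$ small enough relative to $\eps$ so that the combined slack from the ECC rate and the $|\Sigma_S|/|\Sigma_\mathcal{C}|$ ratio adds up to at most $\eps$; correctness of the error translation (each misdecoding contributing at most two half-errors to what $\mathcal{C}$ sees) is already guaranteed by Theorem~\ref{thm:mainECC}, so no new analysis is needed there.
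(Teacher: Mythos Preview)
Your proposal is correct and follows essentially the same approach as the paper: both instantiate Theorem~\ref{thm:mainECC} with the locally decodable synchronization string of Lemma~\ref{thm:FastEncFastDec} and a Guruswami--Indyk near-MDS code, enlarging the ECC alphabet so the indexing overhead is $O(\eps)$, and balancing parameters so that the $\tfrac{n\delta}{1-\eps_1}$ misdecodings plus the raw $n\delta$ insdels stay within the ECC's half-error budget. The only differences are cosmetic (the paper fixes $\eps' = \eps/12$ and $\delta_{\mathcal{C}} = 3\delta + \eps/3$ explicitly, while you leave $\eps_1 = \Theta(\eps)$ to be tuned), and the arithmetic checks out either way.
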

\begin{proof}
We closely follow the proof of Theorem 1.1 from \cite{haeupler2017synchronization} and use Theorem~\ref{thm:mainECC} to convert a near-MDS error correcting code to an insertion-deletion code satisfying the claimed properties.

Given the $\delta$ and $\eps$, we choose $\eps' = \frac{\eps}{12}$ and use locally decodable $O_{\eps'}(1)$-long-distance $\eps'$-synchronization string $S$ of length $n$ over alphabet $\Sigma_S$ of size $\eps'^{-O(1)}=\eps^{-O(1)}$ from Theorem~\ref{thm:LocalDecodingFinite}.
 
We plug this synchronization string with the local decoding from Theorem~\ref{thm:LocalDecodingFinite} into Theorem~\ref{thm:mainECC} with a near-MDS expander code~\cite{guruswami2005linear} $\mathcal{C}$ (see Theorem~\ref{thm:GuruswamiIndykHighRateCodes}) which can efficiently correct up to $\delta_{\mathcal{C}} = 3\delta + \frac{\eps}{3}$ half-errors and has a rate of $R_{\mathcal{C}} > 1 - \delta_{\mathcal{C}} - \frac{\eps}{3}$ over an alphabet $\Sigma_{\mathcal{C}} = \exp(\eps^{-O(1)})$ such that $\log |\Sigma_{\mathcal{C}}|\geq \frac{3 \log |\Sigma_S|}{\eps}$. This ensures that the final rate is indeed at least $\frac{R_{\mathcal{C}}}{1+\frac{\log \Sigma_S}{\log \Sigma_{\mathcal{C}}}}\ge R_{\mathcal{C}} - \frac{\log \Sigma_S}{\log \Sigma_{\mathcal{C}}} = 1 - 3\delta - 3 \frac{\eps}{3} = 1-3\delta-\eps$ and the fraction of insdel errors that can be efficiently corrected is 
$\delta_{\mathcal{C}} - 2 \frac{\delta}{1-\eps'} \geq 
3\delta+\eps/3 - 2\delta(1+2\eps') \geq
\delta$. 
The encoding and decoding complexities are furthermore straight forward according to guarantees stated in Theorem~\ref{thm:FastEncFastDec} and the linear time construction of $S$.
\end{proof}

\subsection{Insdels, Block Transpositions, and Block Replications}\label{sec:BlockTranspositionAndReplication}
In this section, we introduce block transposition and block replication errors and show that code from Theorem~\ref{thm:NearLinearInsdel} can overcome these types errors as well. 

One can think of several way to model transpositions and replications of blocks of data. One possible model would be to have the string of data split into blocks of length $l$ and then define transpositions and replications over those fixed blocks. In other words, for message $m_1, m_2, \cdots, m_n \in \Sigma^n$, a single transposition or replication would be defined as picking a block of length $l$ and then move or copy that blocks of data somewhere in the message.

Another (more general) model is to let adversary choose any block, i.e., substring of the message he wishes and then move or copy that block somewhere in the string. Note that in this model, a constant fraction of block replications may make the message length exponentially large in terms of initial message length. We will focus on this more general model and provide codes protecting against them running near-linear time in terms of the received block length. Such results automatically extend to the weaker model that does not lead to exponentially large corrupted messages.

We now formally define \emph{$(i, j, l)$-block transposition} as follows.
\begin{definition}[$(i, j, l)$-Block Transposition]
For a given string $M=m_1\cdots m_n$, the \emph{$(i, j, l)$-block transposition} operation for $1 \le i \le i+l \le n$ and $j\in\{1,\cdots, i-1, i+l+1, \cdots, n\}$ is defined as an operation which turns $M$ into 
\fullOnly{$$M'=m_1, \cdots, m_{i-1}, m_{i+l+1}\cdots, m_j, m_i\cdots m_{i+l}, m_{j+1}, \cdots, m_n\text{ if } j > i+l$$
or
$$M'=m_1, \cdots, m_j, m_i, \cdots, m_{i+l}, m_{j+1}, \cdots, m_{i-1}, m_{i+l+1}\cdots, m_n\text{ if } j < i$$}
\shortOnly{$M'=m_1, \cdots, m_{i-1}, m_{i+l+1}\cdots, m_j, m_i\cdots m_{i+l}, m_{j+1}, \cdots, m_n$ (if $j > i+l$) or $M'=m_1, \cdots, m_j, m_i, \cdots, m_{i+l}, m_{j+1}, \cdots, m_{i-1}, m_{i+l+1}\cdots, m_n$ (if  $j < i$)}
by removing $M[i, i+l]$ and inserting it right after $M[j]$.
\end{definition}

Also, \emph{$(i, j, l)$-block replication} is defined as follows.
\begin{definition}[$(i, j, l)$-Block Replication]
For a given string $M=m_1\cdots m_n$, the \emph{$(i, j, l)$-block replication} operation for $1 \le i \le i+l \le n$ and $j\in\{1,\cdots, n\}$ is defined as an operation which turns $M$ into 
$M'=m_1, \cdots, m_j, m_i\cdots m_{i+l}, m_{j+1}, \cdots, m_n$
 which is obtained by copying $M[i, i+l]$ right after $M[j]$.
\end{definition}

We now proceed to the following theorem that implies the code from Theorem~\ref{thm:NearLinearInsdel} recovers from block transpositions and replications as well.

\begin{theorem}\label{thm:transpositionCode}
Let $S\in\Sigma_S^n$ be a locally-decodable highly-explicit $c$-long-distance $\eps$-synchronization string from Theorem~\ref{thm:LocalDecodingFinite} and $\mathcal{C}$ be an half-error correcting code of block length $n$, alphabet $\Sigma_\mathcal{C}$, rate $r$, and distance $d$ with encoding function $\mathcal{E}_\mathcal{C}$ and decoding function $\mathcal{D}_\mathcal{C}$ that run in $T_{\mathcal{E}_\mathcal{C}}$ and $T_{\mathcal{D}_\mathcal{C}}$ respectively.
Then, one can obtain an encoding function $E_n:\Sigma_\mathcal{C}^{nr} \rightarrow \left[\Sigma_\mathcal{C}\times \Sigma_S\right]^n$ that runs in $T_{\mathcal{E}_\mathcal{C}} + O(n)$ and decoding function $D_n:\left[\Sigma_\mathcal{C}\times \Sigma_S\right]^* \rightarrow \Sigma_\mathcal{C}^{nr}$ which runs in $T_{\mathcal{D}_\mathcal{C}} + O\left(\log^3 n\right)$ and recovers from $n\delta_{insdel}$ fraction of synchronization errors and $\delta_{block}$ fraction of block transpositions or replications as long as $\left(2+\frac{2}{1-\eps/2}\right)\delta_{insdel} + (12c\log n) \delta_{block} < d$. 
\end{theorem}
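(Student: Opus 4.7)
The plan is to use the synchronization string as an indexing scheme in the style of Theorem~\ref{thm:mainECC}, but to lean on the \emph{local} decodability of the long-distance construction to turn block transpositions and replications into a bounded number of misdecodings. Encoding sets $E_n(m)[i] = (\mathcal{E}_{\mathcal{C}}(m)[i], S[i])$ and runs in $T_{\mathcal{E}_\mathcal{C}} + O(n)$ by Theorem~\ref{thm:FiniteLongDistConstructionNEW}. Decoding, upon receiving the corrupted stream, runs the local decoder of Theorem~\ref{thm:LocalDecodingFinite} on each received symbol's $S$-suffix to obtain a candidate index in $[1,n]$, places the attached $\mathcal{C}$-coordinate at that index (resolving collisions arbitrarily and leaving empty positions as erasures) to form an attempted codeword $\tilde c$, and invokes $\mathcal{D}_\mathcal{C}(\tilde c)$.

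The proof then reduces to bounding the number of half-errors in $\tilde c$, which in turn reduces to bounding the number of misdecoded received symbols: each deletion contributes at most one half-error (an erasure) directly, and each misdecoded symbol contributes at most two (an erasure at the true index and a corruption at the wrong index). I split the misdecodings by cause. Insdel-induced misdecodings are bounded by Theorem~\ref{thm:LocalDecodingFinite}, which decodes correctly whenever the relative suffix error density is at most $1-\eps/2$; by the standard suffix-density argument (Lemma~5.14 of \cite{haeupler2017synchronization}), this fails at at most $n\delta_{insdel}/(1-\eps/2)$ received symbols.

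Block-induced misdecodings I handle by a boundary argument. Each block transposition creates at most three discontinuities in the sequence of $S$-indices appearing in the received stream (at the excision point and at the two endpoints of the block's new location), and each block replication creates at most two. At any received position whose local-decoding window, of length $O(c\log n)$, lies entirely between two consecutive discontinuities, the window is a contiguous factor of $S$; by the $c$-long-distance $\eps$-synchronization property this factor uniquely identifies its $S$-index, so the local decoder is correct. This is also the key observation for block replication: an interior symbol of a replicated block decodes to its \emph{original} $S$-index, and its attached $\mathcal{C}$-coordinate is precisely the codeword symbol for that index, so the duplicate placement is benign. Consequently only positions whose window straddles a discontinuity can be misdecoded, at most $O(c\log n)$ per discontinuity, giving at most $O(c\log n)\cdot n\delta_{\text{block}}$ block-induced misdecodings.

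Adding these contributions, the half-error count in $\tilde c$ is at most $\bigl(2 + 2/(1-\eps/2)\bigr)n\delta_{insdel} + 12 c\log n \cdot n\delta_{\text{block}}$, strictly below $dn$ by hypothesis, so $\mathcal{D}_\mathcal{C}(\tilde c)$ recovers $m$; encoding and per-symbol decoding complexities follow immediately from Theorems~\ref{thm:FiniteLongDistConstructionNEW} and \ref{thm:LocalDecodingFinite}. The main obstacle is the block-replication case, where the received stream can be arbitrarily longer than $n$: any global comparison to $S$ would be hopeless, but the local decoder confines the damage of each corruption to the $O(c\log n)$-wide neighborhoods of its two or three discontinuities, which is exactly the parameter regime that the long-distance property was designed to handle.
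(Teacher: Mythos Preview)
Your proof is correct and follows essentially the same approach as the paper: index with the locally-decodable long-distance synchronization string, run the local decoder on each received symbol, rearrange, and invoke $\mathcal{D}_{\mathcal{C}}$; bound insdel-induced misdecodings via the suffix-density lemma (Lemma~5.14 of \cite{haeupler2017synchronization}) and block-induced misdecodings by counting the $O(1)$ cut/paste discontinuities per block operation times the $O(c\log n)$ local window. Your explicit treatment of replications---noting that an interior symbol of a replicated block decodes to its original $S$-index carrying the correct codeword coordinate, so the duplicate placement is benign---is a point the paper's proof leaves implicit.
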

\begin{proof}
To obtain such codes, we simply index the symbols of the given error correcting code with the symbols of the given synchronization strings. More formally, the encoding function $\mathcal{E}(x)$ for $x\in \Sigma_\mathcal{C}^{nr}$ first computes $\mathcal{E}_\mathcal{C}(x)$ and then indexes it, symbol by symbol, with the elements of the given synchronization string. 

On the decoding end, $\mathcal{D}(x)$ first uses the indices on each symbol to guess the actual position of the symbols using the local decoding of the $c$-long-distance $\eps$-synchronization string. Rearranging the received symbols in accordance to the guessed indices, the receiving end obtains a version of $\mathcal{E}_\mathcal{C}(x)$, denoted by $\bar{x}$, that may suffer from a number of symbol corruption errors due to incorrect index misdecodings. 
As long as the number of such misdecodings, $k$, satisfies $n\delta_{insdel}+2k\le nd$, computing $\mathcal{D}_\mathcal{C}(\bar{x})$ gives $x$. The decoding procedure naturally consists of decoding the attached synchronization string, rearranging the indices, and running $\mathcal{D}_\mathcal{C}$ on the rearranged version. Note that if multiple symbols where detected to be located at the same position by the synchronization string decoding procedure or no symbols where detected to be at some position, the decoder can simply put a special symbol `?' there and treat it as a half-error. The decoding and encoding complexities are trivial.

In order to find the actual index of a received symbol correctly, we need the local decoding procedure to compute the index correctly. For that purpose, it suffices that no block operations cut or paste symbols within an interval of length $2c\log n$ before that index throughout the entire block transpositions/replications performed by the adversary and the relative suffix error density caused by synchronization errors for that symbol does not exceed $1-\eps/2$. As any block operation might cause three new cut/cop/paste edges and relative suffix error density is larger than $1-\eps/2$ for up to $\frac{1}{1-\eps/2}$ many symbols (according to Lemma 5.14 from~\cite{haeupler2017synchronization}), the positions of all but at most $k\le3n\delta_{block}\times 2c\log n + n\delta_{insdel}\left(1+\frac{1}{1-2\eps}\right)$ symbols will be decoded incorrectly via synchronization string decoding procedure. Hence, as long as $n\delta_{insdel}+2k \le 6\delta_{block}\times 2c\log n + n\delta_{insdel}\left(3+\frac{2}{1-2\eps}\right) < d$ the decoding procedure succeeds. Finally, the encoding and decoding complexities follow from the fact that indexing codewords of length $n$ takes linear time and the local decoding of synchronization strings takes $O(n\log^3 n)$ time.
\end{proof}

Employing locally-decodable $O_\eps(1)$-long-distance synchronization strings of Theorem~\ref{thm:LocalDecodingFinite} and error correcting code of Theorem~\ref{thm:GuruswamiIndykHighRateCodes} in Theorem~\ref{thm:transpositionCode} gives the following code.
\begin{theorem}
For any $0<r<1$ and sufficiently small $\eps$ there exists a code with rate $r$ that corrects $n\delta_{insdel}$ synchronization errors and $n\delta_{block}$ block transpositions or replications as long as $6\delta_{insdel} + c\log n \delta_{block} < 1-r-\eps$ for some $c=O(1)$. The code is over an alphabet of size $O_\eps(1)$ and has $O(n)$ encoding and $O(N\log^3 n)$ decoding complexities where $N$ is the length of the received message.
\end{theorem}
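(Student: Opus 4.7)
The plan is to obtain this code as a direct instantiation of Theorem~\ref{thm:transpositionCode} with two concrete components: the locally-decodable highly-explicit $c$-long-distance $\eps'$-synchronization string of Theorem~\ref{thm:LocalDecodingFinite} as the indexing object, and the linear-time encodable/decodable near-MDS code of Guruswami and Indyk (Theorem~\ref{thm:GuruswamiIndykHighRateCodes}) as the outer code $\mathcal{C}$. The parameter $\eps'$ will be chosen as a small constant multiple of $\eps$ so that all of the $O(\eps)$ slack terms (both the rate loss from concatenation and the $\eps/2$ slacks appearing in Theorem~\ref{thm:transpositionCode}) fit comfortably inside the final $1-r-\eps$ budget.

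Concretely, first I would apply Theorem~\ref{thm:GuruswamiIndykHighRateCodes} with target rate $r+\eps/3$ and slack $\eps/3$ to obtain a linear-time code $\mathcal{C}$ of rate $R_{\mathcal{C}}\ge r+\eps/3$ and relative half-error distance $d=1-r-2\eps/3$ over an alphabet $\Sigma_{\mathcal{C}}$ of size $\exp(\eps^{-O(1)})$. Next I would invoke Theorem~\ref{thm:LocalDecodingFinite} with $\eps'=\Theta(\eps)$ sufficiently small to get a $c$-long-distance $\eps'$-synchronization string over an alphabet $\Sigma_S$ of size $\eps^{-O(1)}$, with $c = O_\eps(1)$, linear-time constructible, highly-explicit, and locally decodable in $O(\log^3 n)$ time from a suffix of $O(\log n)$ symbols. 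Because $\log|\Sigma_\mathcal{C}|$ is polynomially larger than $\log|\Sigma_S|$, the rate overhead of symbol-by-symbol concatenation of the synchronization string onto the codewords of $\mathcal{C}$ is at most $\eps/3$, so the final rate is at least $r$; the alphabet size is $|\Sigma_\mathcal{C}|\cdot|\Sigma_S|=O_\eps(1)$.

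The correction guarantee then follows by plugging these choices into Theorem~\ref{thm:transpositionCode}: decoding succeeds whenever
\[
\left(2+\tfrac{2}{1-\eps'/2}\right)\delta_{insdel}+12 c\log n\cdot \delta_{block}< d = 1-r-2\eps/3.
\]
For $\eps$ (hence $\eps'$) sufficiently small, $2+2/(1-\eps'/2)<6$, so absorbing the remaining $\eps/3$ into the right-hand side yields the claimed sufficient condition $6\delta_{insdel}+c'\log n\cdot\delta_{block}< 1-r-\eps$ for the constant $c'=12c=O(1)$. The complexity bounds transfer verbatim from Theorem~\ref{thm:transpositionCode}: $T_{\mathcal{E}_\mathcal{C}}+O(n)=O(n)$ for encoding (since $\mathcal{C}$ encodes in linear time and the synchronization string is highly explicit) and $T_{\mathcal{D}_\mathcal{C}}+O(N\log^3 n)=O(N\log^3 n)$ for decoding (local index recovery per received symbol in $O(\log^3 n)$, followed by linear-time decoding of $\mathcal{C}$).

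There is essentially no nontrivial obstacle once the components are in place; the only care needed is bookkeeping the small constants — choosing $\eps'$ small enough that the $2+2/(1-\eps'/2)$ factor is strictly below $6$ and that the concatenation rate overhead $\log|\Sigma_S|/\log|\Sigma_\mathcal{C}|$ is below $\eps/3$. Both happen automatically for "sufficiently small $\eps$" since $|\Sigma_S|=\eps^{-O(1)}$ is polynomial while $|\Sigma_\mathcal{C}|=\exp(\eps^{-O(1)})$ is exponential in the same parameter.
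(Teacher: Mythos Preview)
Your proposal is correct and matches the paper's approach exactly: the paper's proof is literally the one sentence ``Employing locally-decodable $O_\eps(1)$-long-distance synchronization strings of Theorem~\ref{thm:LocalDecodingFinite} and error correcting code of Theorem~\ref{thm:GuruswamiIndykHighRateCodes} in Theorem~\ref{thm:transpositionCode} gives the following code,'' and you have supplied precisely that instantiation with the parameter bookkeeping filled in.
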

}
\fullOnly{\section{Applications: Near-Linear Time Infinite Channel Simulations with Optimal Memory Consumption}\label{sec:InfiniteSimulation}

We now show that the indexing algorithm introduced in Theorem~\ref{thm:FastEncFastDec} can improve the efficiency of channel simulations from~\cite{haeupler2017synchronization2:ARXIV} as well as insdel codes.
Consider a scenario where two parties are maintaining a communication that suffers from synchronization errors, i.e, insertions and deletions. Haeupler et al.~\cite{haeupler2017synchronization2:ARXIV} provided a simple technique to overcome this desynchronization. Their solution consists of a simple symbol by symbol attachment of a synchronization string to any transmitted symbol. The attached indices enables the receiver to correctly detect indices of most of the symbols he receives. However, the decoding procedure introduced in Haeupler et al.~\cite{haeupler2017synchronization2:ARXIV} takes polynomial time in terms of the communication length. The explicit construction introduced in Section~\ref{sec:sync_construction} and local decoding provided in Section~\ref{sec:localDecoding} can reduce the construction and decoding time and space complexities to polylogarithmic. Further, the decoding procedure only requires to look up $O_\eps(\log n)$ recently received symbols upon arrival of any symbol. 

Interestingly, we will show that, beyond the time and space complexity improvements over simulations in~\cite{haeupler2017synchronization2:ARXIV}, long-distance synchronization strings can make \emph{infinite channel simulations} possible. In other words, two parties communicating over an insertion-deletion channel are able to simulate a corruption channel on top of the given channel even if they are not aware of the length of the communication before it ends with similar guarantees as of~\cite{haeupler2017synchronization2:ARXIV}.
To this end, we introduce infinite strings that can be used to index communications to convert synchronization errors into symbol corruptions. 
The following theorem analogous to the indexing algorithm of Lemma~\ref{thm:FastEncFastDec} provides all we need to perform such simulations.

\begin{theorem}\label{thm:InfiniteFastEncFastDec}
For any $0<\eps<1$, there exists an infinite string $S$ that satisfies the following properties:
\begin{enumerate}
\item String $S$ is over an alphabet of size $\eps^{-O(1)}$.
\item String $S$ has a highly-explicit construction and, for any $i$, $S[i, i+\log i]$ can be computed in $O(\log i)$.
\item Assume that $S[1, i]$ is sent over an insertion-deletion channel. There exists a decoding algorithm for the receiving side that, if relative suffix error density is smaller than $1-\eps$, can correctly find $i$ by looking at the last $O(\log i)$ and knowing the number of received symbols in $O(\log^3 i)$ time.
\end{enumerate}
\end{theorem}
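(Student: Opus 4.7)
The plan is to extend the infinite $\eps$-synchronization string construction of Theorem~\ref{thm:infiniteSync} by (i) replacing its finite building blocks with the highly explicit, locally decodable long-distance synchronization strings of Theorem~\ref{thm:LocalDecodingFinite} and (ii) attaching one extra ``parity'' bit per symbol that toggles at every block boundary. Concretely, I would use the two-track construction $T[i]=(U[i],V[i])$ from Theorem~\ref{thm:infiniteSync} with $k=\Theta(1/\eps)$ chosen large enough that the turning points $q_i$ between consecutive finite sub-strings satisfy $q_{i+1}/q_i \ge 1/(1-\eps)^2$, and instantiate each block $S_{k^j}$ as an $\eps/2$-long-distance locally decodable synchronization string of length $k^j$ given by Theorem~\ref{thm:LocalDecodingFinite}. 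Property~1 is immediate since each component has alphabet size $\eps^{-O(1)}$; property~2 follows because to compute $S[i,i+\log i]$ one first finds the block containing index $i$ (a matter of locating the appropriate power of $k$ in $O(\log\log i)$ time) and then invokes the highly explicit construction of Theorem~\ref{thm:LocalDecodingFinite} on a block of length at most $O(i)$, which runs in $O(\log i)$ time.

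For the local decoding in property~3, I rely on two observations. First, since the relative suffix error density is at most $1-\eps$, if Alice has sent $l$ symbols and Bob has received $l_\tau$, then $l(1-\eps) \le l_\tau \le l(1+\eps)$, so the true sent length lies in a window whose multiplicative width is at most $(1+\eps)/(1-\eps) \le 1/(1-\eps)^2$. By the choice of $k$, this window intersects at most two consecutive blocks $[q_{j-1},q_j)$ and $[q_j,q_{j+1})$. Second, because the suffix error density is strictly below one, the most recently received symbol must be successfully transmitted, so Bob can read its appended parity bit without error; this single bit disambiguates between the two candidate blocks.

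Once Bob has identified the correct block $[q_{j-1},q_j)$, he treats the received suffix as a corrupted prefix of a finite locally decodable long-distance $\eps/2$-synchronization string of length $k^j \le 2l$ and invokes the decoder of Theorem~\ref{thm:LocalDecodingFinite}. The hypothesis on relative suffix error density on the infinite stream translates directly into the same bound on the relevant suffix of the finite block, so the decoder returns the correct within-block index by looking only at the last $O(\log k^j) = O(\log i)$ received symbols in $O(\log^3 k^j) = O(\log^3 i)$ time. Adding the offset $q_{j-1}$ recovers the global index $i$.

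The step I expect to require the most care is verifying the gap inequality $q_{i+1}/q_i \ge 1/(1-\eps)^2$ throughout the construction and not just asymptotically; the proof of Lemma~\ref{lem:covering} only lower-bounds $q_{t+1}-q_t$ by roughly $k^{t+1}(1-1/k)$, which gives $q_{i+1}/q_i$ tending to a constant larger than one but may fail for small $i$. I would address this by starting the tower at $k^{c}$ for a sufficiently large constant $c=c(\eps)$ rather than at $k$, absorbing the finite initial segment into a separately handled prefix, and by picking $k$ large enough that the inequality holds with room to spare; this keeps the alphabet, explicitness, and decoding complexities unchanged up to the constants hidden in $\eps^{-O(1)}$.
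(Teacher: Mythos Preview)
Your overall approach is exactly the one the paper takes: use the two-track infinite construction of Theorem~\ref{thm:infiniteSync}, instantiate each finite block with the highly explicit locally decodable strings of Theorem~\ref{thm:LocalDecodingFinite}, append one parity bit per symbol indicating the block index modulo $2$, and use the received length together with that bit to identify the current finite block before invoking the finite local decoder.

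There is, however, a quantitative slip in your uncertainty-window calculation that makes your parameter choice too small. If the relative suffix error density is at most $1-\eps$, then up to $(1-\eps)l$ errors may have occurred on the full prefix of length $l$; if they are all deletions this gives $l_\tau \ge \eps\, l$, not $l_\tau \ge (1-\eps)l$. Consequently the multiplicative uncertainty in $l$ given $l_\tau$ is on the order of $1/\eps^2$, not $(1+\eps)/(1-\eps)$. With $k=\Theta(1/\eps)$ the ratio $q_{i+1}/q_i\approx k$ is only $\Theta(1/\eps)$, so the uncertainty window can straddle more than one turning point and the single parity bit no longer disambiguates. The paper fixes this by taking $k=10/\eps^2$, which makes consecutive turning points separated by a factor exceeding $1/\eps^2$ and guarantees at most one turning point in the window; with that adjustment your argument goes through unchanged, and the ``start at $k^c$'' patch you worried about becomes unnecessary.
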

\begin{proof}
To construct such a string $S$, we use our finite-length highly-explicit locally-decodable long-distance synchronization string constructions from Theorem~\ref{thm:LocalDecodingFinite} and use to construct finite substrings of $S$ as proposed in the infinite string construction of Theorem~\ref{thm:infiniteSync} which is depicted in Figure~\ref{fig:constructive}. We choose length progression parameter $k=10/\eps^2$. Similar to the proof of Lemma~\ref{lem:covering}, we define \emph{turning point} $q_i$ as the index at which $S_{k^{i+1}}$ starts. Wee append one extra bit to each symbol $S[i]$ which is zero if $q_j\le i<q_{j+1}$ for some even $j$ and one otherwise.

This construction clearly satisfies the first two properties claimed in the theorem statement. To prove the third property, suppose that $S[1,i]$ is sent and received as $S'[1, i']$ and the error suffix density is less than $1-\eps$. As error suffix density is smaller than $1-\eps$, $i\eps \le i' \le i/\eps$ which implies that $i'\eps \le i \le i'/\eps$. This gives an uncertainty interval whose ends are close by a factor of $1/\eps^2$. By the choice of $k$, this interval contains at most one turning point. Therefore, using the extra appended bit, receiver can figure out index $j$ for which $q_j\le i<q_{j+1}$. Knowing this, it can simply use the local decoding algorithm for finite string $S_{j-1}$ to find $i$.
\end{proof}

\begin{theorem}\label{thm:nearLinearChannelSimulationsForInsdel}
\begin{enumerate}
\item[(a)] Suppose that $n$ rounds of a one-way/interactive insertion-deletion channel over an alphabet $\Sigma$ with a $\delta$ fraction of insertions and deletions are given. Using an $\eps$-synchronization string over an alphabet $\Sigma_{syn}$, it is possible to simulate $n\left(1-O_\eps(\delta)\right)$ rounds of a one-way/interactive corruption channel over $\Sigma_{sim}$ with at most $O_\eps\left(n\delta\right)$ symbols corrupted so long as $|\Sigma_{sim}| \times |\Sigma_{syn}| \le |\Sigma|$. 
\item[(b)] Suppose that $n$ rounds of a binary one-way/interactive insertion-deletion channel with a $\delta$ fraction of insertions and deletions are given. It is possible to simulate 
$n(1-\Theta( \sqrt{\delta\log(1/\delta)}))$
 rounds of a binary one-way/interactive corruption channel 
 with $\Theta(\sqrt{\delta\log(1/\delta)})$ fraction of corruption errors between two parties over the given channel.
\end{enumerate}
Having an explicitly-constructible, locally-decodable, infinite string from Theorem~\ref{thm:InfiniteFastEncFastDec} utilized in the simulation, all of the simulations mentioned above take $O(\log n)$ time for sending/starting party of one-way/interactive communications. Further, on the other side, the simulation spends $O(\log^3 n)$ time upon arrival of each symbol and only looks up $O(\log n)$ many recently received symbols. Overall, these simulations take a $O(n\log^3 n)$ time and $O(\log n)$ space to run. These simulations can be performed even if parties are not aware of the communication length.
\end{theorem}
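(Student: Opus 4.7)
The plan is to obtain Theorem~\ref{thm:nearLinearChannelSimulationsForInsdel} by a direct substitution argument: instantiate the channel-simulation framework of \cite{haeupler2017synchronization2:ARXIV} (summarized in Theorem~\ref{thm:simulationRecap}) with the infinite locally-decodable synchronization string of Theorem~\ref{thm:InfiniteFastEncFastDec} in place of the finite, global-decoded synchronization string used there. Recall that the simulations in \cite{haeupler2017synchronization2:ARXIV} are black-box: given any $\eps$-synchronization string together with a streaming indexing algorithm guaranteeing $O_\eps(n\delta)$ misdecodings on a channel with $\delta$-fraction of insertions and deletions, one obtains the one-way/interactive channel simulations with exactly the rate and corruption guarantees stated in parts (a) and (b). Hence the rate $n(1-O_\eps(\delta))$ and the $O_\eps(n\delta)$ corruption budget in (a), as well as the $n(1-\Theta(\sqrt{\delta\log(1/\delta)}))$ rate and $\Theta(\sqrt{\delta\log(1/\delta)})$ corruption fraction in (b), transfer unchanged.

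The first step is therefore to verify that the infinite string of Theorem~\ref{thm:InfiniteFastEncFastDec} furnishes a streaming $(\infty,\delta)$-indexing algorithm of the required misdecoding quality. By Theorem~\ref{thm:InfiniteFastEncFastDec}(3), the decoder reports the correct index whenever the relative suffix error density at the currently received symbol is below $1-\eps$. Applying the standard counting bound (Lemma~5.14 of \cite{haeupler2017synchronization}), the number of symbols at which suffix error density can exceed $1-\eps$ over a channel with at most $n\delta$ insdels is at most $\frac{n\delta}{1-\eps}=O_\eps(n\delta)$, so the total number of misdecodings is $O_\eps(n\delta)$, exactly matching the hypothesis of Theorem~\ref{thm:simulationRecap}. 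This yields both the correctness statements (a) and (b).

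Next I would track the complexity and memory bookkeeping. The sender only needs to compute $S[i]$ for the current round, which by highly-explicitness of Theorem~\ref{thm:InfiniteFastEncFastDec}(2) costs $O(\log i)$ time and stores only a round counter plus the constant-size state required to regenerate the relevant block of $S$, i.e.\ $O(\log i)$ bits of memory. On the receiver's side, the local decoder of Theorem~\ref{thm:InfiniteFastEncFastDec}(3) runs in $O(\log^3 i)$ time per received symbol while consulting only the last $O(\log i)$ symbols, so it suffices to maintain a sliding window of $O(\log i)$ symbols. Summing over $n$ rounds yields the global $O(n\log^3 n)$ time bound, and since neither party ever needs to know $n$ in advance (the string is infinite and both the construction and the decoder are length-oblivious), the simulations extend to communications of a priori unknown duration.

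The main obstacle I anticipate is justifying that the local decoder of Theorem~\ref{thm:InfiniteFastEncFastDec} actually plugs in where the global decoder of \cite{haeupler2017synchronization2:ARXIV} was used: specifically, ensuring that a correctly-decoded index together with the short window of recently received symbols suffices for the sender/receiver to perform the ``matching up'' step of the simulation in streaming fashion, and that the infinite analogue of the suffix-error-density counting argument continues to hold over an unbounded horizon. Both reduce to verifying that the bookkeeping in the simulation of \cite{haeupler2017synchronization2:ARXIV} is itself local (depends only on a logarithmic window of prior rounds), which it is since the only use of the synchronization string there is to compute the index of the currently received symbol.
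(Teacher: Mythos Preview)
Your proposal is correct and follows essentially the same approach as the paper: both argue by directly substituting the highly-explicit, locally-decodable infinite string of Theorem~\ref{thm:InfiniteFastEncFastDec} (with its local decoder) for the ordinary synchronization string and minimum-RSD decoder in the simulations of \cite{haeupler2017synchronization2:ARXIV}, then read off the complexity and memory bounds from the properties listed in Theorem~\ref{thm:InfiniteFastEncFastDec}. Your write-up is in fact more explicit than the paper's about the misdecoding count via the suffix-error-density bound and about why the simulation bookkeeping is local, but the underlying argument is the same.
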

\begin{proof}
We simply replace ordinary $\eps$-synchronization strings used in all such simulations in~\cite{haeupler2017synchronization2:ARXIV} with the highly-explicit locally-decodable infinite string from Theorem~\ref{thm:InfiniteFastEncFastDec} with its corresponding local-decoding procedure instead of minimum RSD decoding procedure that is used in~\cite{haeupler2017synchronization2:ARXIV}. This keeps all properties that simulations proposed by Haeupler et. al.~\cite{haeupler2017synchronization2:ARXIV} guarantee.
Further, by properties stated in Theorem~\ref{thm:InfiniteFastEncFastDec}, the simulation is performed in near-linear time, i.e., $O(n\log^3 n)$. Also, constructing and decoding each symbol of the string from Theorem~\ref{thm:InfiniteFastEncFastDec} only takes $O(\log n)$ space which leads to an $O(\log n)$ memory requirement on each side of the simulation.
\end{proof}

\exclude{\color{red}\begin{theorem}
Suppose that Alice and Bob are communicating over a $n$-round one-way insertion-deletion channel with $\delta$ fraction of errors. Indexing the symbols of communication with a locally decodable long-distance $\eps$-synchronization string $S\in\Sigma^n$ introduced in \eqref{}, enables the receiver to guess the index of every symbol it arrives that satisfies the following properties:
\begin{itemize}
\item If suffix error density is smaller than $1-2\eps$, the index is decoded correctly, therefore, this indexing solution contains $??$ misdecodings. 
\item The decoding procedure satisfies the streaming property. Hence, it leads to simulation of $??$ rounds of a corruption channel with $??$ fraction of corruptions.
\item $S$ has a highly explicit construction. 
\item Decoding procedure for every arrived symbol on receiver side takes $O_\eps(\text{poly}\log n)$ time.
\item $|\Sigma| = O_\eps(1)$.
\end{itemize}
\end{theorem}}

\section{Applications: Near-Linear Time Coding Scheme for Interactive Communication}\label{sec:NearLinearInteractiveScheme}

Using the near-linear time interactive channel simulation in Theorem~\ref{thm:nearLinearChannelSimulationsForInsdel} with the near-linear time interactive coding scheme of Haeupler and Ghaffari~\cite{ghaffari2014optimal} (stated in Theorem~\ref{GhaffariHaeupler2014Scheme}) gives the near-linear time coding scheme for interactive communication over insertion-deletion channels stated in Theorem~\ref{thm:nearLinearLargeAlphaInteractiveCodingScheme}. 

\begin{theorem}[Theorem 1.1 from \cite{ghaffari2014optimal}]\label{GhaffariHaeupler2014Scheme}
For any constant $\eps > 0$ and $n$-round protocol $\Pi$ there is a randomized non-adaptive coding scheme that robustly simulates $\Pi$ against an adversarial error rate of $\rho \le 1/4 - \eps$ using $N = O(n)$ rounds, a near-linear $n\log^{O(1)} n$ computational complexity, and failure probability $2^{-\Theta(n)}$.
\end{theorem}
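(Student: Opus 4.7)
The plan is to construct the scheme by combining block-wise simulation, randomized hashing for consistency checks, and a rewind mechanism, all layered on top of a linear-time outer error correcting code. First, I would partition the $n$-round protocol $\Pi$ into meta-blocks of size $b = \Theta(\log n)$. Each meta-round of the coding scheme consists of the two parties running one block of $\Pi$ according to their current best guess of the joint transcript, then exchanging a short verification message that packages (i) their contribution to the next block, (ii) a short fingerprint of their current transcript view produced by a hash drawn from a pseudorandom family, and (iii) a summary of recent meta-round outcomes. Each such meta-message is protected by a linear-time encodable and decodable constant-rate, constant-distance code (e.g.\ an expander code), so that the meta-message content is reliably recovered whenever the adversary corrupts fewer than a fixed constant fraction of the symbols in that block.

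The progress argument goes through a potential function $\Phi_t = C_t - \alpha E_t$, where $C_t$ counts the number of blocks of $\Pi$ that both parties have commonly agreed on by meta-round $t$, $E_t$ is the adversarial error budget consumed so far, and $\alpha$ is a tuned constant. An uncorrupted meta-round advances $C_t$ by $1$, while in a meta-round the adversary corrupts, the hash comparison detects the inconsistency with high probability and both parties enter a short ``walk-back'' subroutine that realigns them on the most recent commonly-agreed block; because the walk-back is itself a logarithmic-depth binary-search style synchronization protocol, each instance uses only $\poly(\log n)$ symbols and sacrifices only a bounded number of blocks of progress. A careful accounting shows that each unit of adversary budget can undo at most a bounded amount of progress, so to completely halt the protocol the adversary must corrupt more than a $1/4$ fraction of transmitted symbols; tuning $\alpha$ and $b$ gives the $1/4 - \eps$ tolerance, and because the hashes are drawn freshly each meta-round, the probability of an undetected collision per meta-round is $n^{-\Omega(1)}$, yielding total failure probability $2^{-\Theta(n)}$ after a Chernoff-style union bound over the $N = O(n)$ meta-rounds.

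For complexity, each meta-round uses $O(\log n)$ bits of communication and takes $\log^{O(1)} n$ time to encode, decode, hash, and verify, given the linear-time outer code and a fast hash family. Running the full scheme for the needed $N = O(n)$ meta-rounds therefore takes $n \log^{O(1)} n$ time overall, matching the claimed near-linear complexity.

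The main obstacle is simultaneously attaining the optimal $1/4 - \eps$ error tolerance, near-linear computational complexity, and exponentially small failure probability, without resorting to tree codes (which would cost at least polynomial computation per symbol encoded). The key is to replace the tree-code abstraction by short independently-encoded block messages while using fresh randomized hashing to amortize over local desynchronizations, and to design the walk-back subroutine so that it itself consumes only a bounded fraction of the adversary's budget even in worst-case alignment. Getting this trade-off exactly right, and in particular pushing the error rate up to the optimal non-adaptive bound of $1/4 - \eps$ rather than a weaker constant, is where essentially all of the difficulty of the proof concentrates.
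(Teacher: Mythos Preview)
The paper does not prove this statement at all: Theorem~\ref{GhaffariHaeupler2014Scheme} is simply a quotation of Theorem~1.1 from Ghaffari and Haeupler~\cite{ghaffari2014optimal}, imported as a black box and then combined with the channel simulation of Theorem~\ref{thm:nearLinearChannelSimulationsForInsdel} to obtain Theorem~\ref{thm:nearLinearLargeAlphaInteractiveCodingScheme}. There is therefore no ``paper's own proof'' to compare your proposal against.

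That said, your sketch is in the right spirit for how~\cite{ghaffari2014optimal} actually proceeds (block-wise simulation, randomized hashing for consistency checks, a meeting-points style rewind mechanism, and a potential-function progress argument), but a couple of places are internally inconsistent as written. First, you partition $\Pi$ into $\Theta(n/\log n)$ blocks of size $\Theta(\log n)$, yet later speak of ``$N=O(n)$ meta-rounds''; the count of meta-rounds should be $\Theta(n/\log n)$, with $O(\log n)$ symbols each, to get $N=O(n)$ total rounds. Second, per-meta-round hash-collision probability $n^{-\Omega(1)}$ combined with a union bound over $\Theta(n/\log n)$ meta-rounds yields only an inverse-polynomial failure probability, not $2^{-\Theta(n)}$; getting the exponentially small failure bound in~\cite{ghaffari2014optimal} requires a more careful argument (roughly, one shows that many independent hash comparisons must \emph{all} fail for the simulation to fail, and then applies a concentration bound over those events rather than a union bound over single collisions). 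These are the points where the handwaving would need to be tightened if you actually wanted to reprove the cited result.
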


\begin{theorem}\label{thm:nearLinearLargeAlphaInteractiveCodingScheme}
For a sufficiently small $\delta$ and $n$-round alternating protocol $\Pi$, there is a randomized coding scheme simulating $\Pi$ in presence of $\delta$ fraction of edit-corruptions with constant rate (i.e., in $O(n)$ rounds) and in near-linear time. This coding scheme works with probability $1-2^{\Theta(n)}$. 
\end{theorem}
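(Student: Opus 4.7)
The plan is to compose the near-linear time interactive channel simulation from Theorem~\ref{thm:nearLinearChannelSimulationsForInsdel} with the near-linear time interactive coding scheme of Ghaffari and Haeupler stated in Theorem~\ref{GhaffariHaeupler2014Scheme}. Concretely, given an alternating $n$-round protocol $\Pi$ to be simulated over an insdel channel with error fraction $\delta$, I would first apply Theorem~\ref{GhaffariHaeupler2014Scheme} with parameter $\eps$ to $\Pi$, obtaining an $N = O(n)$-round robust protocol $\Pi'$ over a suitable alphabet that tolerates any $\rho \le 1/4 - \eps$ fraction of symbol corruptions, runs in $N \log^{O(1)} N$ time, and fails with probability $2^{-\Theta(n)}$.

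Next, I would run $\Pi'$ on top of the corruption channel produced by the interactive channel simulation of Theorem~\ref{thm:nearLinearChannelSimulationsForInsdel}(a). By choosing the synchronization parameter $\eps_{\text{syn}}$ small enough, the simulation turns the given insdel channel with $\delta$ fraction of errors into an $N' = N(1 - O_{\eps_{\text{syn}}}(\delta))$-round interactive corruption channel in which the fraction of corrupted symbols is $O_{\eps_{\text{syn}}}(\delta)$. Provided $\delta$ is sufficiently small, this corruption rate is below the $1/4 - \eps$ threshold required by $\Pi'$, so the Ghaffari--Haeupler scheme succeeds with probability $1 - 2^{-\Theta(n)}$.

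For the complexity analysis, the alphabet of the simulation can be absorbed into the alphabet of $\Pi'$ (still constant size in $n$). Because our long-distance synchronization string is highly explicit and locally decodable, each round of the simulation costs the sender $O(\log n)$ time and the receiver $O(\log^3 n)$ time while only consulting $O(\log n)$ previously received symbols. Thus the overall simulation layer contributes $O(n \log^3 n)$ work, which together with the $n \log^{O(1)} n$ cost of running $\Pi'$ yields a total near-linear running time. The round count remains $O(n)$ because both the interactive coding scheme and the channel simulation incur only constant-factor blowups.

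The main obstacle is ensuring that the guarantees of the two building blocks compose correctly: the Ghaffari--Haeupler scheme assumes a \emph{worst-case} corruption channel whose corruption positions may depend on the transcript, and the insdel adversary is allowed to be similarly adaptive. One has to verify that the interactive simulation of Theorem~\ref{thm:nearLinearChannelSimulationsForInsdel}(a) produces a corruption channel that remains adversarial (and not merely oblivious) so that the Ghaffari--Haeupler robustness guarantee applies, and that the $O_{\eps_{\text{syn}}}(\delta)$ simulated corruption rate is a genuine worst-case bound even when the insdel adversary sees the randomness injected by the simulation's synchronization string. This is immediate from the structure of the simulation in \cite{haeupler2017synchronization2:ARXIV}, since the synchronization string is fixed and public, but it is the step that needs explicit verification before the composition goes through.
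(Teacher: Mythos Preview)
Your approach is exactly the paper's: the paper proves this theorem simply by composing the near-linear time interactive channel simulation of Theorem~\ref{thm:nearLinearChannelSimulationsForInsdel} with the Ghaffari--Haeupler scheme of Theorem~\ref{GhaffariHaeupler2014Scheme}, stated in a single sentence preceding the theorem without further elaboration. Your write-up supplies the parameter setting and composition details that the paper leaves implicit, and your discussion of the adversarial-model compatibility is a reasonable sanity check, but there is no additional idea in the paper beyond the two-black-box composition you describe.
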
}

\newpage
\bibliographystyle{plain}
\bibliography{bibliography,refs}

\newpage
\newpage
\begin{center}
\bfseries \huge Appendices
\end{center}
\appendix
\shortOnly{
\PrelimsSynchStrings 
\section{Missing Proofs and Discussions from Section~\ref{sec:sync_construction}}
\ProofOfLemLongdistancereduction
\ProofOfLLLConstruction
\ProofOfCorShortDistanceSufficient
\ProofOfLemmaSimplepolyboosting
\ProofOfThmLinearTimeConstruction
\ProofOfLemCodeBlock
\ProofOfTheoremInsdelCode
\ProofOfThmFiniteLongDistConstructionNEW
\ProofOfThmInfiniteSync

}
\section{Alphabet Size vs Distance Function}\label{app:alphabetSizeVSDistance}
In this section, we study the dependence of alphabet size over distance function, $f$, for $f(l)$-distance synchronization strings. We will discuss this dependence for polynomial, exponential, and super exponential function $f$. As briefly mentioned in Section~\ref{sec:LongDistSync}, we will show that for any polynomial function $f$, one can find arbitrarily long $f(l)$-distance $\eps$-synchronization strings over an alphabet that is polynomially large in terms of $\eps^{-1}$ (Theorem~\ref{thm:LLLDistanceFunctions}). Also, in Theorem~\ref{thm:PolyDistanceLowerBoundForAlphabetSize}, we will show that one cannot hope for such guarantee over alphabets with sub-polynomial size in terms of $\eps^{-1}$. Further, for exponential distance function $f$, we will show that arbitrarily long $f(l)$-distance $\eps$-synchronization strings exist over alphabets that are exponentially large in terms of $\eps^{-1}$ (Theorem~\ref{thm:LLLDistanceFunctions}) and, furthermore, cannot hope for such strings over alphabets with sub-exponential size in terms of $\eps^{-1}$ (Theorem~\ref{thm:ExpDistanceLowerBoundForAlphabetSize}). Finally, in Theorem~\ref{thm:superExpAlphabetSize}, we will show that for super-exponential $f$, $f(l)$-distance $\eps$-synchronization string does not exist over constant-sized alphabets in terms of string length.

\begin{theorem}\label{thm:LLLDistanceFunctions}
For any polynomial function $f$, there exists an alphabet of size $O(\eps^{-4})$ over which arbitrarily long $f(l)$-distance $\eps$-synchronization strings exist. Further, for any exponential function $f$, such strings exist over an alphabet of size $\exp(\eps^{-1})$.
\end{theorem}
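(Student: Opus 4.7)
The plan is to adapt the Lovász Local Lemma (LLL) argument from the proof of Theorem~\ref{thm:polyConstruction} to handle distance functions $f$ that can be nonzero at every length $l$, rather than only for $l>c\log n$. Sample a candidate string $R$ uniformly at random from $\Sigma^n$, and for every quadruple $(i_1,l_1,i_2,l_2)$ with $i_1+l_1\le i_2$, $l_1+l_2\ge 2/\eps^2$, and $i_2-(i_1+l_1)\le f(l_1+l_2)$, introduce the bad event $B_{i_1,l_1,i_2,l_2}$ that the two intervals violate the $\eps$-synchronization requirement. The same Stirling-type union bound used in Theorem~\ref{thm:polyConstruction} gives $\Pr[B_{i_1,l_1,i_2,l_2}]\le (e/(\eps\sqrt{|\Sigma|}))^{\eps(l_1+l_2)}$, and bad events with $l_1+l_2<2/\eps^2$ are handled deterministically by symbol-wise concatenation with the periodic pattern $1,2,\dots,2\eps^{-2},1,2,\dots$, exactly as in Theorem~\ref{thm:polyConstruction}.

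The key modification is the dependency count. For a bad event $B$ of total length $l$ and each target total length $l'\ge 2/\eps^2$, the number of bad events $B'$ of total length $l'$ sharing at least one position with $B$ is at most
\[
N(l,l')\;\le\; l'\cdot (l+l'+f(l'))\cdot (f(l')+1),
\]
with the three factors accounting respectively for the split $l'_1+l'_2=l'$, the starting position of $B'$ relative to $B$, and the gap chosen from $[0,f(l')]$. With LLL weights $x_B=D^{-\eps l}$ for a constant $D$ to be chosen and the standard bound $\log(1-x)\ge -2x$ for $x\le 1/2$, the LLL condition reduces to requiring
\[
\eps\, l\log\!\Big(\tfrac{eD}{\eps\sqrt{|\Sigma|}}\Big)\;+\;2\!\!\sum_{l'\ge 2/\eps^2}\! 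N(l,l')\,D^{-\eps l'}\;\le\; 0
\]
for every admissible $l$.

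For polynomial $f$ the count $N(l,l')$ is $l\cdot\mathrm{poly}_f(l')+\mathrm{poly}_f(l')$, so with $D=2$ the sum is at most $l\cdot\mathrm{poly}_f(1/\eps)\cdot 2^{-2/\eps}+\mathrm{poly}_f(1/\eps)\cdot 2^{-2/\eps}$; since $2^{-2/\eps}$ decays faster than any polynomial in $1/\eps$, this is $o_\eps(l)$, and the LLL condition is met by choosing $|\Sigma|=O(\eps^{-4})$ with the hidden constant depending on the degree of $f$. For exponential $f(l')=c^{l'}$ the dependency factor contains $c^{2l'}$, and convergence of the sum forces $D^{\eps}>c^{2}$, i.e., $D=c^{\Theta(1/\eps)}$. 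The alphabet is then forced to satisfy $|\Sigma|\ge (eD/\eps)^2=\exp(\Theta(\eps^{-1}))$, matching the claim.

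The main technical obstacle is verifying the per-unit-$l$ LLL inequality: one must show that the linear-in-$l$ contribution of $\sum N(l,l')D^{-\eps l'}$ is absorbed into the logarithmic slack provided by the alphabet. The decisive observation is that the lower cutoff $l'\ge 2/\eps^2$ contributes an exponential decay $D^{-2/\eps}$ that dominates any fixed polynomial in $l'$ (delivering the polynomial-$f$ alphabet bound), yet is completely swamped once $f(l')$ itself is exponential in $l'$ (forcing the alphabet blow-up in the exponential-$f$ regime).
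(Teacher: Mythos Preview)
Your approach follows the same blueprint as the paper: an LLL argument with exponential weights $x_B=D^{-\eps l}$, combined with a deterministic periodic layer for short intervals. Your dependency bound $N(l,l')\le l'(l+l'+f(l'))(f(l')+1)$ is valid but looser than the paper's $\Theta\bigl(l'(l+l')f(l')\bigr)$; the paper counts directly the four ways an interval of $B'$ can overlap an interval of $B$, which avoids your extra factor of $f(l')$. This costs you nothing asymptotically in either regime (for exponential $f$ it merely forces $D^\eps>c^2$ rather than $D^\eps>c$, still giving $D=c^{\Theta(1/\eps)}$).

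There is, however, a genuine gap in the exponential case. You keep the single cutoff $l_1+l_2\ge 2/\eps^2$ and delegate all shorter pairs to the periodic pattern $1,2,\dots,2\eps^{-2},1,2,\dots$, ``exactly as in Theorem~\ref{thm:polyConstruction}.'' But that pattern only guarantees distinct symbols within a window of length $2/\eps^2$; it says nothing about two short intervals separated by a gap larger than the period. For $f(l)=c^l$, once $l$ exceeds $\log_c(2/\eps^2)=\Theta(\log(1/\eps))$ the admissible gap $c^l$ already far exceeds $2/\eps^2$, and for such pairs neither your LLL (which you only invoke for $l\ge 2/\eps^2$) nor the periodic layer certifies the required edit-distance bound. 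The paper sidesteps this by taking the cutoff $t=1$ in the exponential regime, so that \emph{every} pair of intervals, down to total length $2$, is handled directly by the LLL; this works precisely because the alphabet is already $\exp(\Theta(1/\eps))$, so the bound $(e/(\eps\sqrt{|\Sigma|}))^{\eps l}$ remains nontrivially small even at $l=2$. Your argument is repaired the same way: in the exponential case, drop the $2/\eps^2$ threshold and run the LLL sum from $l'=2$.
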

\begin{proof}
To prove this we follow the same LLL argument as in Theorem~\ref{thm:polyConstruction} and~\cite{haeupler2017synchronization} to prove the existence of a string that satisfies the $f(l)$-distance $\eps$-synchronization string property for intervals of length $t$ or more and then concatenate it with $1,2,\cdots, t,1,2,\cdots, t, \cdots$ to take care of short intervals. We define bad events $B_{i_1, l_1, i_2, l_2}$ in the same manner as in Theorem~\ref{thm:polyConstruction} and follow similar steps up until~\eqref{eq:LLLMiddleStep} by proposing $x_{i_1, l_1, i_2, l_2} = D^{-\eps (l_1 + l_2)}$ for some $D> 1$ to be determined later. $D$ has to be chosen such that for any $i_1, l_1, i_2, l_2$ and $l=l_1 + l_2$:
\begin{eqnarray}
\left(\frac{e}{\eps  \sqrt{|\Sigma|}}\right)^{\eps l} &\le& D^{-\eps l} \prod_
{[S[i_1, i_1+l_1)\cup S[i_2, i_2+l_2)]\cap[S[i'_1, i'_1+l'_1)\cup S[i'_2, i'_2+l'_2)]\neq \emptyset}
 \left(1-D^{-\eps (l'_1 + l'_2)}\right)
\end{eqnarray}

Note that:
\begin{eqnarray}
&&D^{-\eps l} \prod_{[S[i_1, i_1+l_1)\cup S[i_2, i_2+l_2)]\cap[S[i'_1, i'_1+l'_1)\cup S[i'_2, i'_2+l'_2)]\neq \emptyset}
 \left(1-D^{-\eps (l'_1 + l'_2)}\right)\allowdisplaybreaks \\
&\ge& D^{-\eps l} \prod_{l'=t}^{n}\prod_{l'_1=1}^{l'}
\left(1-D^{-\eps l'}\right)^{\left[(l_1+l'_1)+(l_1+l'_2)+(l_2+l'_1)+(l_2+l'_2)\right] f(l')}\allowdisplaybreaks \\
 &=& D^{-\eps l} \prod_{l'=t}^{n}
 \left(1-D^{-\eps l'}\right)^{4l'(l+l') f(l')}
 \allowdisplaybreaks \\&=& 
 D^{-\eps l}\left[\prod_{l'=t}^{n} \left(1-D^{-\eps l'}\right)^{4l'f(l')}\right]^{l}
  \times \prod_{l'=t}^{n} \left(1-D^{-\eps l'}\right)^{4{l'}^2f(l')}\allowdisplaybreaks \\
 &\ge& D^{-\eps l}\left[1-\sum_{l'=t}^{n} 4l'f(l')D^{-\eps l'}\right]^{l}
 \times  \left(1-\sum_{l'=t}^{n}4{l'}^2f(l') D^{-\eps l'}\right)
\end{eqnarray}
To bound below this term we use an upper-bound for series $\Sigma_{i=t}^{\infty} g(i) x^i$. Note that the proportion of two consecutive terms in such summation is at most $\frac{g(t+1)x^{t+1}}{g(t)x^t}$. Therefore,
$\Sigma_{i=t}^{\infty} g(i) x^i \le \frac{g(t) x^t}{1-
\frac{g(t+1)x^{t+1}}{g(t)x^t}}$.
Therefore, for LLL to work, it suffices to have the following.
\begin{eqnarray}
\left(\frac{e}{\eps  \sqrt{|\Sigma|}}\right)^{\eps l} &\le& 
D^{-\eps l}\left[1-
\frac{4tf(t) D^{-\eps t}}{1-\frac{4tf(t+1)D^{-\eps (t+1)}}{4tf(t)D^{-\eps t}}}
\right]^{l}
 \times  \left(1-
 \frac{4t^2f(t) D^{-\eps t}}{1-\frac{4t^2f(t+1)D^{-\eps (t+1)}}{4t^2f(t)D^{-\eps t}}}
 \right)\\
 &=&
D^{-\eps l}\left[1-
\frac{4tf(t) D^{-\eps t}}{1-\frac{f(t+1)D^{-\eps}}{f(t)}}
\right]^{l}
 \times  \left(1-
 \frac{4t^2f(t) D^{-\eps t}}{1-\frac{f(t+1)D^{-\eps}}{f(t)}}
 \right)\label{eqn:LLLModular}
\end{eqnarray}
 
 \noindent\textbf{Polynomial Distance Function:} For polynomial function $f(l)=\sum_{i=0}^d a_i l^i$ of degree $d$, we choose $t=1/\eps^2$ and $D=e$. This choice gives that
  $$ L_1=\frac{4tf(t) D^{-\eps t}}{1-\frac{f(t+1)D^{-\eps}}{f(t)}} = 
  \frac{4\eps^{-2}f(\eps^{-2}) e^{-1/\eps}}{1-(1+\eps^2)^d e^{-\eps}}
 $$
 and
  $$ L_2=\frac{4t^2f(t) D^{-\eps t}}{1-\frac{f(t+1)D^{-\eps}}{f(t)}} = 
  \frac{4\eps^{-4}f(\eps^{-2}) e^{-1/\eps}}{1-(1+\eps^2)^d e^{-\eps}} .$$

We study the following terms in $\eps\rightarrow 0$ regime.
Note that $4\eps^{-2}$ and $4\eps^{-4}$ are polynomials in $\eps^{-1}$ but $e^{-1/\eps}$ is exponential in $\eps^{-1}$. Therefore, for sufficiently small $\eps$,
$$4\eps^{-2}f(\eps^{-2}) e^{-1/\eps}, 4\eps^{-4}f(\eps^{-2}) e^{-1/\eps} \le
e^{-0.9/\eps}.$$
Also, $1-(1+\eps^2)^d e^{-\eps} \le 1-(1+\eps^2)^d(1-\eps/2) = 1-(1-\eps/2+o(\eps^2))$. So, for small enough $\eps$, $1-(1+\eps^2)^d e^{-\eps} \le \frac{3}{4}\eps$. This gives that, for small enough $\eps$, 
\begin{equation}
L_1, L_2 \le \frac{e^{-0.9/\eps}}{(3/4)\eps}\le e^{-0.8/\eps}.\label{eqn:L1L2}
\end{equation}
Note that $1-e^{-0.8/\eps} \ge e^{-\eps}$ for $0<\eps<1$. Plugging this fact into~\eqref{eqn:LLLModular} gives that, for small enough $\eps$, the LLL condition is satisfied if 
$$\left(\frac{e}{\eps  \sqrt{|\Sigma|}}\right)^{\eps l} \le
e^{-\eps l} \cdot e^{-\eps l}\cdot e^{-\eps }
\Leftrightarrow 
\left(\frac{e^3}{\eps\sqrt{|\Sigma|}}\right)^{\eps l}\le \frac{1}{e^\eps} 
\Leftrightarrow 
|\Sigma| \ge \frac{e^{6+2/l}}{\eps^2}
\Leftarrow
|\Sigma| \ge \frac{e^8}{\eps^2}=O(\eps^{-2})
$$
Therefore, for any polynomial $f$, $f(l)$-distance $\eps$-synchronization strings exist over alphabets of size $t\times |\Sigma| = O(\eps^{-4})$.

\bigskip

 \noindent\textbf{Exponential Distance Function:} For exponential function $f(l)=c^l$, we choose $t=1$ and $D=(8c)^{1/\eps}$. Plugging this choice of $t$ and $D$ into~\eqref{eqn:LLLModular} turns it into the following.
 \begin{eqnarray}
\left(\frac{e}{\eps  \sqrt{|\Sigma|}}\right)^{\eps l} &\le& 
D^{-\eps l}\left[1-
\frac{4tf(t) D^{-\eps t}}{1-\frac{f(t+1)D^{-\eps}}{f(t)}}
\right]^{l}
 \times  \left(1-
 \frac{4t^2f(t) D^{-\eps t}}{1-\frac{f(t+1)D^{-\eps}}{f(t)}}
 \right)\\
 &=&
(2c)^{-l}\left[1-
\frac{4c (8c)^{-1}}{1-c  \frac{1}{8c}}
\right]^{l}
 \times  \left(1-
\frac{4c (8c)^{-1}}{1-c  \frac{1}{8c}}
 \right)\\
 &=& \frac{1}{(2c)^l} \cdot \left[1-\frac{1/2}{7/8} \right]^{l+1}  = \frac{2\cdot (3/14)^{l+1}}{c^l}
\end{eqnarray}
Therefore, if $|\Sigma|$ satisfies the following, the LLL condition will be satisfied.
$$\left(\frac{e}{\eps  \sqrt{|\Sigma|}}\right)^{\eps l} \le \frac{2\cdot (3/14)^{l+1}}{c^l}
\Leftarrow
|\Sigma| \ge \frac{e^2}{\eps^2} \cdot \left(\frac{14^2 c}{3^2}\right)^{2/\eps}
$$
Therefore, for any exponential $f$, $f(l)$-distance $\eps$-synchronization strings exist over alphabets of size $c_0^{1/\eps}$ where $c_0$ is a constant depending on the basis of the exponential function $f$.
\end{proof}

\begin{theorem}\label{thm:PolyDistanceLowerBoundForAlphabetSize}
Any alphabet $\Sigma$ over which arbitrarily long $f(l)$-distance $\eps$-synchronization strings exist has to be of size $\Omega(\eps^{-1})$. This holds for any function $f$.
\end{theorem}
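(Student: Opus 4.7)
The plan is to reduce the problem to the ordinary $\eps$-synchronization string lower bound by noting that any $f(l)$-distance $\eps$-synchronization string is automatically an ordinary $\eps$-synchronization string: the classical definition is exactly the special case $i' = j$ of Definition~\ref{def:distSynchStr} (where $i' - j = 0 \le f(l)$ trivially for any non-negative $f$). So it suffices to show $|\Sigma| = \Omega(\eps^{-1})$ for arbitrarily long ordinary $\eps$-synchronization strings $S \in \Sigma^n$.

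First I would set up a simple pigeonhole. Assume $S$ is long enough, say $n \geq 3(|\Sigma|+1)$. Among the first $|\Sigma|+1$ positions there must be two positions $p_1 < p_2$ with $S[p_1] = S[p_2]$; write $d = p_2 - p_1 \le |\Sigma|$. Note that since $n$ is large, the interval $[p_2, p_2+d)$ still lies inside $S$, so the two adjacent intervals $[p_1, p_2)$ and $[p_2, p_2+d)$ are legitimate witnesses for the $\eps$-synchronization property.

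Next I would combine an LCS lower bound with the synchronization inequality. Both intervals have length $d$ and both begin with the same symbol $S[p_1] = S[p_2]$, so their longest common subsequence has length at least $1$, which gives
\[
ED\bigl(S[p_1, p_2), S[p_2, p_2+d)\bigr) \le 2d - 2.
\]
The $\eps$-synchronization property applied with $i=p_1$, $j=p_2$, $k=p_2+d$ forces
\[
ED\bigl(S[p_1, p_2), S[p_2, p_2+d)\bigr) > (1-\eps)\cdot 2d.
\]
Chaining these inequalities yields $2d - 2 > 2d - 2\eps d$, i.e., $\eps d > 1$, so $d > 1/\eps$. Since $d \le |\Sigma|$, this gives $|\Sigma| > 1/\eps$, i.e., $|\Sigma| = \Omega(\eps^{-1})$.

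The argument has essentially no moving parts; the only caveat is making sure the second interval $[p_2, p_2+d)$ actually fits inside $S$, which is handled by taking the length of $S$ to be sufficiently large (any $n \ge 3(|\Sigma|+1)$ works, which is allowed since we assume arbitrarily long strings exist). I do not anticipate any real obstacle here, as the proof only exploits a single matching pair in the LCS and uses only the most basic instance of the synchronization property for adjacent equal-length intervals.
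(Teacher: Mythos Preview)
Your proof is correct and follows essentially the same approach as the paper. Both arguments reduce to the ordinary $\eps$-synchronization case and exploit that a repeated symbol within a short prefix violates the edit-distance requirement for adjacent intervals; the paper phrases this as ``any window of length $<\eps^{-1}$ has all distinct symbols,'' while you run the contrapositive via pigeonhole on the first $|\Sigma|+1$ positions, but the content is the same.
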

\begin{proof}
We simply prove this theorem for $f(l)=0$, i.e., ordinary synchronization strings which trivially extends to general $f$. Note that $\eps$-synchronization guarantee for any pair of intervals $[i,j)$ and $[j, k)$ where $k-i < \eps^{-1}$ dictates that no symbol have to appear more than once in $[i, k)$. Therefore, the alphabet size has to be at least $\eps^{-1}-1$.
\end{proof}

\begin{theorem}\label{thm:ExpDistanceLowerBoundForAlphabetSize}
Let $f$ be an exponential function. If arbitrarily long $f(l)$-distance $\eps$-synchronization strings exist over an alphabet $\Sigma$, the size of $\Sigma$ has to be at least exponentially large in terms of $\eps^{-1}$.
\end{theorem}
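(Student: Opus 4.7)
The plan is to exhibit, inside any sufficiently long $f(l)$-distance $\eps$-synchronization string $S$, a large family of pairwise edit-distant short substrings packed inside a single window, and then bound the size of that family from above by a simple counting argument over $\Sigma^m$ for $m \approx \eps\ell$. The exponential growth of $f$ will force the family to be exponentially large in $\ell$, while the counting step only allows roughly $|\Sigma|^{\eps\ell}$ such substrings; solving for $|\Sigma|$ will give the desired $\exp(\Omega(1/\eps))$ bound.

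Concretely, write $f(l) \ge c^l$ for some constant $c>1$ and all sufficiently large $l$. For any large integer $\ell$, I would consider the non-overlapping consecutive length-$\ell$ blocks $B_i := S[(i-1)\ell, i\ell)$ for $i = 1, \ldots, N$ with $N := \lfloor c^{2\ell}/\ell \rfloor$, using that $S$ can be taken arbitrarily long. For any $1 \le i < j \le N$, the two intervals have aggregated length exactly $2\ell$ and are separated by a gap of $(j-i-1)\ell \le (N-1)\ell \le c^{2\ell} \le f(2\ell)$, so the $f(l)$-distance $\eps$-synchronization property applies and yields $ED(B_i, B_j) > 2(1-\eps)\ell$, which (using $ED = |B_i|+|B_j|-2\,LCS$) is equivalent to $LCS(B_i, B_j) < \eps\ell$.

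For the counting step, set $m := \lceil \eps\ell\rceil$ and for each $i$ let $T_i \subseteq \Sigma^m$ be the set of distinct length-$m$ contiguous substrings of $B_i$. If some $w$ lay in $T_i \cap T_j$ with $i\neq j$, then $w$ would be a common subsequence of $B_i$ and $B_j$ of length $m \ge \eps\ell > LCS(B_i, B_j)$, a contradiction. So the sets $T_i$ are pairwise disjoint subsets of $\Sigma^m$, and since $B_i$ has length $\ell \ge m$ it contains at least one length-$m$ substring, giving $|T_i|\ge 1$. Summing yields $N \le \sum_{i=1}^N |T_i| \le |\Sigma|^m \le |\Sigma|^{\eps\ell+1}$.

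Combining the two inequalities, $c^{2\ell}/\ell - 1 \le N \le |\Sigma|^{\eps\ell+1}$. Taking $\ell$th roots and letting $\ell \to \infty$ with $\eps$ fixed gives $c^2 \le |\Sigma|^\eps$, i.e.\ $|\Sigma| \ge c^{2/\eps - o(1)} = \exp(\Omega(1/\eps))$, as claimed. The only subtle point I expect to worry about is keeping the counting bound clean: the natural attempt to use $|T_i|\ge \ell-m+1$ fails for pathological strings like $B_i = a^\ell$, which contain only one distinct length-$m$ substring; fortunately the trivial bound $|T_i|\ge 1$ already suffices, because the exponential packing $N \ge c^{2\ell}/\ell$ dwarfs any polynomial-in-$\ell$ factor one might otherwise gain.
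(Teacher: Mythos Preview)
Your proof is correct, but it takes a genuinely different route from the paper's.

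The paper's argument is much more direct: it exploits the integrality of edit distance for \emph{very short} intervals. If two intervals have total length $l < 1/\eps$, then the requirement $ED > (1-\eps)l > l-1$ forces $ED = l$, i.e.\ $LCS = 0$, so the two intervals share no symbol whatsoever. Applying this with one interval being a single position and the other having length roughly $1/(2\eps)$ (so the total length stays below $1/\eps$ while the allowed gap is $f(l) \approx c^{1/(2\eps)}$), one concludes that no symbol repeats in the first $c^{1/(2\eps)}$ positions, hence $|\Sigma| \ge c^{1/(2\eps)}$ immediately.

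Your approach instead fixes an arbitrary block length $\ell$, extracts only the weaker bound $LCS(B_i,B_j) < \eps\ell$, and then recovers the alphabet bound via a pigeonhole/packing argument on length-$\lceil \eps\ell\rceil$ substrings together with a limit $\ell \to \infty$. This is a bit more work (the limiting step, the care with $|T_i|\ge 1$), but it has two mild advantages: it does not rely on the ``$ED$ is an integer'' trick, so it would survive in relaxed settings where the edit-distance inequality is only approximate; and it yields the slightly stronger constant $|\Sigma| \ge c^{2/\eps}$ rather than $c^{1/(2\eps)}$. Both of course give $|\Sigma| = \exp(\Omega(1/\eps))$, which is all the theorem asserts.
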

\begin{proof}
Let $f(l) = c^l$.
In a given $f(l)$-distance $\eps$-synchronization string, take two intervals of length $l_1$ and $l_2$ where $l_1+l_2 \le \eps^{-1}/2 < \eps^{-1}$. The edit distance requirement of $\eps$-synchronization definition requires those two intervals not to contain any similar symbols. Note that this holds for any two intervals of total length $l=\eps^{-1}/2$ in a prefix of length $c^l=c^{\eps^{-1}/2}$. Therefore, no symbol can be appear more than once throughout the first $c^{\eps^{-1}/2}$ symbols of the given strings. This shows that the alphabet size has to be at least exponentially large in terms of $\eps^{-1}$.
\end{proof}

\begin{theorem}\label{thm:superExpAlphabetSize}
For any super-exponential function $f$ and any finite alphabet $\Sigma$, there exists a positive integer $n$ such that there are no $f(l)$-distance $\eps$-synchronization strings of length $n$ or more over $\Sigma$.
\end{theorem}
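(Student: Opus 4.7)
The plan is a pigeonhole argument: if the string is long enough to force two identical disjoint factors of length $l/2$, and if $f(l)$ is large enough that these two factors are within distance $f(l)$ of each other, then the $f(l)$-distance $\eps$-synchronization property fails because two identical factors have edit distance zero, whereas the definition demands edit distance strictly greater than $(1-\eps)l > 0$.

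More concretely, fix $q = |\Sigma|$ and $\eps < 1$. Since $f$ is super-exponential, we have $\lim_{l\to\infty} f(l)/q^{l/2} = \infty$, so we can choose an even integer $l$ large enough that $f(l) \ge (l/2)(q^{l/2}+1)$. Let $n = (l/2)(q^{l/2}+1)$ and let $S \in \Sigma^{N}$ be any string with $N \ge n$. Partition (a prefix of) $S$ into $q^{l/2}+1$ disjoint consecutive blocks of length $l/2$. Since there are only $q^{l/2}$ distinct strings of length $l/2$ over $\Sigma$, the pigeonhole principle yields two blocks $S[i,i+l/2)$ and $S[i',i'+l/2)$ with $i + l/2 \le i'$ that are identical as strings.

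Apply the definition of $f(l)$-distance $\eps$-synchronization string with $j = i+l/2$, $j' = i'+l/2$, so that the total length is $(j-i)+(j'-i') = l$ and the gap is $i'-j = i'-(i+l/2) \le N - l$. By our choice of $l$ and $n$, we have $N - l < n \le f(l)$, so the distance hypothesis $i'-j \le f(l)$ holds. The definition then requires
\[
ED\bigl(S[i,i+l/2),\, S[i',i'+l/2)\bigr) > (1-\eps)\, l.
\]
But the two factors are identical, so this edit distance is $0$, contradicting $(1-\eps)l > 0$. Hence no string of length $\ge n$ can be an $f(l)$-distance $\eps$-synchronization string.

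The only substantive step is the choice of $l$ using super-exponentiality; there is no real obstacle, since ``super-exponential'' is exactly the strength needed to dominate the pigeonhole bound $(l/2) q^{l/2}$. The argument is tight in spirit: Theorem~\ref{thm:LLLDistanceFunctions} shows that exponential $f$ is still tolerable over an exponentially large alphabet, and the pigeonhole breaks exactly when $f$ grows faster than any exponential.
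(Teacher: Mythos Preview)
Your argument is essentially the same pigeonhole approach as the paper's: partition a long enough prefix into more than $|\Sigma|^{\text{block length}}$ disjoint blocks, find two identical ones, and observe that their edit distance is $0$ while the definition demands something positive. The only cosmetic difference is that you take blocks of length $l/2$ so the total length parameter is exactly $l$, whereas the paper takes blocks of length $l$ (total $2l$); both work.

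One slip to fix: after partitioning the \emph{prefix of length $n$}, the gap between the two identical blocks satisfies $i'-j \le n - l$, not $N - l$. As written, your inequality ``$N - l < n$'' is false for large $N$. Replace $N$ by $n$ in that line and the chain $i'-j \le n-l < n \le f(l)$ goes through as intended.
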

\begin{proof}
Consider a substring of length $l$ in a given string over alphabet $\Sigma$. There are $|\Sigma|^l$ many possible assignments for such substring. Since $f$ is a super-exponential function, for sufficiently large $l\ge \eps^{-1}$, $\frac{f(l)}{l} \ge |\Sigma|^l$. For such $l$, consider a string of length $n\ge f(l)$. Split the first $f(l)$ elements into $\frac{f(l)}{l}$ blocks of length $l$. As $\frac{f(l)}{l} > |\Sigma|^l$, two of those blocks have to be identical. As $l$ was assumed to be larger than $\eps^{-1}$, this violates $f(l)$-distance $\eps$-synchronization property for those two blocks and therefore finishes the proof.
\end{proof}

\SecInfiniteLongDistConstruction

\end{document}